\documentclass[a4paper]{article}
\usepackage[utf8]{inputenc}
\usepackage{amsmath,amssymb,amsthm,bbold,colortbl,enumerate,float,mathrsfs,mathtools,mdframed,physics,todonotes,url,xcolor}
\usepackage[top=2.5cm, bottom=2.5cm, left=2.4cm, right=2.4cm]{geometry}
\usepackage{algorithm,algpseudocode}
\usepackage[hidelinks]{hyperref}

\usepackage[noabbrev]{cleveref}
\newtheorem{theorem}{Theorem}
\newtheorem{lemma}[theorem]{Lemma}

\theoremstyle{definition}

\newtheorem{problem}{Open problem}
\newtheorem{definition}[theorem]{Definition}

\newcommand{\CC}{\mathbb C}

\newcommand{\id}{\mathbb{1}}

\title{Thirty-six quantum officers are entangled}
\author{Simeon Ball \\ \textit{Universitat Politècnica de Catalunya} \and Robin Simoens \\ \textit{Ghent University} \\ \textit{Universitat Politècnica de Catalunya}}
\date{}

\begin{document}

\maketitle

\begin{abstract}
    There exist pairs of orthogonal Latin squares of any order \(n\) except if \(n=2\) or \(n=6\) [Bose, Shrikhande and Parker, 1960]. In particular, the problem of Euler's thirty-six officers does not have a solution. However, it has a ``quantum solution'': there exist so-called entangled quantum Latin squares of order six [Rather et al., 2022]. We prove that mutually orthogonal quantum Latin squares of order six do not exist if entanglement is not allowed.
\end{abstract}

\paragraph{Keywords.} Quantum Latin square; Unitary pattern; Orthonormal representation.

\paragraph{MSC.}
05B15, % Orthogonal arrays, Latin squares, Room squares
05C62, % Graph representations (geometric and intersection representations, etc.)
81P70. % Quantum coding

\section{Introduction}

Musto and Vicary \cite{MustoVicary2016} introduced the following quantum version of Latin squares.

\begin{definition}[\cite{MustoVicary2016}]\label{def:QLS}
    A \emph{quantum Latin square of order \(n\)} is an \(n\times n\) matrix over \(\CC^n\) such that each row and each column forms an orthonormal basis of \(\CC^n\).
\end{definition}

In the literature, there are at least four different ways to define the quantum analogue of orthogonal Latin squares.
The first such definition was given by Musto \cite[Definition~10]{Mustopaper}, calling them \emph{weak orthogonal} or \emph{left orthogonal} quantum Latin squares. Goyeneche, Raissi, Di Martino and \.Zyczkowski \cite{Goyeneche} called the notion \emph{weakly orthogonal}, and pointed out that it does not seem like the right definition, proposing an alternative \cite[Definition~3]{Goyeneche}.
Later, Musto \cite[Definition~5.2.1]{Mustothesis} gave another concept of orthogonal quantum Latin squares and proved it to be equivalent to the one in \cite{Goyeneche} if the quantum Latin squares are \emph{non-entangled}. To us, this seems like the most natural definition of orthogonality.

\begin{definition}[\cite{Mustothesis}]\label{def:OQLS}
    Two quantum Latin squares \(\left(\psi_{ij}\right)_{1\leq i,j\leq n}\) and \(\left(\phi_{ij}\right)_{1\leq i,j\leq n}\) are \emph{orthogonal} if \[\left\{\psi_{ij}\otimes\phi_{ij}\colon\,i,j\in\{1,\dots,n\}\right\}\] is an orthonormal basis of \(\CC^q\otimes\CC^q\).
\end{definition}

Since the definition in \cite{Goyeneche} does not agree with the properties of an absolutely maximally entangled (AME) state, Rajchel-Mieldzioć \cite[Definition~54]{Rajchel} modified it to the following. To avoid confusion with the previous definition, we call the squares \emph{entangled} instead of \emph{orthogonal}.

\begin{definition}[\cite{Rajchel}]\label{def:pairofEQLS}
    A \emph{pair of entangled quantum Latin squares of order \(n\)} is an \(n\times n\) matrix \(\left(\psi_{ij}\right)_{1\leq i,j\leq n}\) with entries in \(\CC^n\otimes\CC^n\) such that
\begin{enumerate}[(i)]
    \item all \(n^2\) entries form an orthonormal basis of \(\CC^n\otimes\CC^n\),
    \item all rows satisfy \(\tr_S\left(\sum_k\ketbra{\psi_{ik}}{\psi_{jk}}\right)=\delta_{ij}\id\) for \(S\in\{\{1\},\{2\}\}\), and
    \item all columns satisfy \(\tr_S\left(\sum_k\ketbra{\psi_{ki}}{\psi_{kj}}\right)=\delta_{ij}\id\) for \(S\in\{\{1\},\{2\}\}\),
\end{enumerate}
    where \(\tr_S\) denotes the partial trace over the \(S\) system and \(\ketbra{\psi}{\phi}\) is the outer product of \(\psi\) and \(\phi\).
\end{definition}

With the above definition, a pair of entangled quantum Latin squares of order \(n\) is equivalent to the existence of an AME\((4,n)\) state \cite[Lemma~55]{Rajchel}.
Since classical orthogonal Latin squares are particular instances of entangled quantum Latin squares, an AME\((4,n)\) state exists for all \(n\geq2\) except possibly if \(n=2\) or \(n=6\) \cite{BoseShrikhandeParker}. The fact that there do not exist classical orthogonal Latin squares of order six is due to Tarry \cite{Tarry}, who, by checking all cases, proved that there is no solution to \emph{Euler's thirty-six officers problem}. Later, non-computational proofs were given by Stinson \cite{noMOLS6Stinson} and Dougherty \cite{noMOLS6Dougherty} among others.

\begin{theorem}[\cite{Tarry}]\label{thm:2MOLS6}
    There does not exist a pair of orthogonal Latin squares of order six.
\end{theorem}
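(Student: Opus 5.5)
The plan is to follow Stinson's non-computational proof via transversal designs and their codes, more precisely Dougherty's linear-algebraic version, since it avoids Tarry's case enumeration. First, a pair of orthogonal Latin squares of order six is equivalent to a transversal design \(\mathrm{TD}(4,6)\). This is a set of \(24\) points split into four groups of size six, together with \(36\) blocks. Each block meets every group in exactly one point, and any two points from different groups lie in exactly one block. The four groups correspond to rows, columns, symbols of the first square and symbols of the second square.

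I would then count to constrain the structure. Each point lies in exactly six blocks, and two blocks meet in at most one point. A block \(B\) and a point \(p\notin B\) outside the group structure relevant to \(B\) satisfy rigid counting identities. The standard trick is to fix a block \(B\) and count, for each other block, how many points it shares with \(B\) (zero or one). There are \(4\cdot 5=20\) blocks meeting \(B\), so \(15\) blocks are disjoint from \(B\).

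The core step is Stinson's argument. Extend the transversal design by considering the \(15\) blocks disjoint from a fixed block, or equivalently pass to the ``derived'' incidence structure. Show that the disjoint blocks must form a configuration contradicting a parity or divisibility count. Alternatively, following Dougherty, I would consider the binary code \(C\) spanned over \(\mathbb F_2\) by the characteristic vectors of the blocks in \(\mathbb F_2^{24}\).

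\begin{itemize}
  \item Since blocks have size four and meet in at most one point, the code generated by blocks together with the group vectors is self-orthogonal modulo appropriate corrections.
  \item One computes that \(C+C^\perp\) has a specific dimension forced by the Latin structure.
  \item Analysing the minimum-weight words (weight \(6\) and \(8\) words formed by unions of blocks and groups) leads to a contradiction with the number of such words that the design parameters force.
\end{itemize}

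The main obstacle is the final case analysis of low-weight codewords, or in Stinson's version the analysis showing that a certain derived structure (a putative configuration of \(15\) blocks on \(18\) points) cannot exist. I would first try Stinson's route. I would derive that the \(15\) blocks disjoint from \(B\) restricted to the \(20\) points outside \(B\) would yield a \(\mathrm{TD}\)-like structure whose existence forces an impossible parity condition: some point would need an odd versus even number of incidences. This is the step where the specific value \(6\equiv 2\pmod 4\) must enter, mirroring why the case \(n=2\) also fails.
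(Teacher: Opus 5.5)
\textbf{There is a genuine gap: the step that must produce the contradiction is never carried out.} The paper itself gives no proof of this statement. It quotes it from Tarry's exhaustive case check and cites Stinson and Dougherty as non-computational alternatives. Simply invoking one of those references would match the paper, but read as a proof, your proposal does not establish the theorem.

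Your set-up is correct but routine. A pair of orthogonal Latin squares of order six is a $\mathrm{TD}(4,6)$. Every point lies on six blocks, two blocks share at most one point, and a fixed block $B$ meets $20$ other blocks and is disjoint from $15$. After that, phrases such as ``show that the disjoint blocks must form a configuration contradicting a parity or divisibility count'', ``self-orthogonal modulo appropriate corrections'', ``a specific dimension forced by the Latin structure'' and ``a contradiction with the number of such words'' are placeholders, not arguments. Moreover, two intersecting blocks, or a block and a group, meet in exactly one point, so their characteristic vectors have inner product $1$ over $\mathbb{F}_2$. The block-and-group code is therefore not self-orthogonal, and the unspecified ``corrections'' are exactly the missing content.

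The one concrete mechanism you suggest for the last step cannot work as stated. A point outside $B$ lies on exactly three blocks meeting $B$, one through each point of $B$ outside its own group. Hence it lies on exactly three of the $15$ disjoint blocks. These blocks therefore form a partial transversal design on $20$ points with four groups of size $5$, every point on three blocks. Such a configuration exists: take three parallel classes of the resolvable $\mathrm{TD}(4,5)$ obtained by deleting a group from a $\mathrm{TD}(5,5)$. So no parity contradiction can come from this structure in isolation; you would have to use how it interlocks with the $20$ blocks meeting $B$, or something finer.

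Likewise, an argument in which the order enters only through $6\equiv 2 \pmod 4$ would also exclude orders $10, 14, \dots$. Pairs of orthogonal Latin squares of those orders exist by Bose, Shrikhande and Parker. The case $n=2$ is no guide, since it fails for trivial reasons that do not generalise. Any valid proof, whether Tarry's enumeration or the arguments of Stinson or Dougherty, must exploit features specific to order six, and that is precisely what your proposal leaves out.
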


Still, it remained open whether AME\((4,n)\) states exist for \(n=2\) and \(n=6\). Higuchi and Sudbery \cite{noAME42} proved that an AME\((4,2)\) state does not exist. Recently, Rather, Burchardt, Bruzda, Rajchel-Mieldzioć, Lakshminarayan and \.Zyczkowski \cite{36entangledofficers} showed that there exists a pair of entangled quantum Latin squares of order six, solving the so-called \emph{Thirty-six entangled officers of Euler}. Later, other constructions of AME\((4,6)\) states were found by Rather \cite{AME46} and Rather, Ramadas, Kodiyalam and Lakshminarayan \cite{infiniteAME46}.

\begin{theorem}[\cite{36entangledofficers}]\label{thm:2MEQLS6}
    There exists a pair of entangled quantum Latin squares of order six.
\end{theorem}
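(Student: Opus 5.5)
By \cite[Lemma~55]{Rajchel}, the statement is equivalent to the existence of an AME\((4,6)\) state. So I would build such a state explicitly and then read off the matrix \((\psi_{ij})\). Concretely, write the state as \(\sum_{i,j,k,l} U_{kl,ij}\ket{i}\ket{j}\ket{k}\ket{l}/6\) for a \(36\times 36\) matrix \(U\). Then conditions (i)--(iii) of the definition become three requirements:
\begin{itemize}
    \item \(U\) is unitary (this is (i), taking \(\psi_{ij}=U\ket{ij}\));
    \item the realignment \(U^R\) is unitary;
    \item the partial transpose \(U^\Gamma\) is unitary.
\end{itemize}
The last two, through the partial traces, encode the row and column conditions (ii) and (iii). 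So the task reduces to exhibiting a \emph{2-unitary} matrix of order \(36\). The first step is to verify this reduction carefully, including the index bookkeeping of \(\tr_S\).

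Theorem~\ref{thm:2MOLS6} rules out any 2-unitary permutation matrix, so no classical pair of orthogonal Latin squares can be used. Instead I would start from a ``best possible'' classical configuration: a pair of Latin squares of order six whose superposition has as few repeated ordered pairs as possible. This gives a permutation matrix \(P\) that is unitary, and whose realignment and partial transpose are close to unitary.

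The plan is then to perturb \(P\) into \(U=P\cdot D\) or \(U=D_1 P D_2\), where the \(D\)'s are block-diagonal unitaries. The blocks should act only on the small clusters of basis vectors that are involved in the defects of the classical configuration. This keeps each \(\psi_{ij}\) supported on a few product states. One searches for the block entries, for example by the alternating projection scheme
\[
U\mapsto \text{polar part of } U,\quad U\mapsto \text{polar part of } U^R,\quad U\mapsto \text{polar part of } U^\Gamma,
\]
repeated until convergence. Then one symmetrises the numerical limit and guesses exact entries; I expect roots of unity of order \(20\) and simple algebraic moduli.

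The main obstacle is turning the numerical fixed point into an exact, provable solution. The approach is to impose enough symmetry (a fixed block structure and phases that are roots of unity) so that unitarity of \(U\), \(U^R\) and \(U^\Gamma\) collapses to a small number of orthogonality relations among \(2\times2\), \(3\times3\) or \(4\times4\) blocks. Each of these can then be checked by hand. Once these finitely many identities are verified, the reduction in the first step yields the pair of entangled quantum Latin squares.
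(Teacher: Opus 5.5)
The paper does not prove this statement; it imports it from \cite{36entangledofficers}. Your outline is broadly the route taken there: reformulate as a $36\times36$ matrix $U$ with $U$, $U^R$, $U^\Gamma$ unitary, search numerically starting near a classical near-solution, and then pass to an exact form involving $20$th roots of unity. Your reduction step is also sound. With $\psi_{ij}=U\ket{ij}$, condition (i) is unitarity of $U$. Conditions (ii) and (iii) say exactly that the two-party reductions $\rho_{13},\rho_{14},\rho_{23},\rho_{24}$ of the four-party state are maximally mixed, which is unitarity of $U^R$ and $U^\Gamma$.

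As a proof, however, the proposal stops where the content of the theorem lies: it never produces a 2-unitary matrix. For an existence statement like this, the proof is the witness, and everything after your reduction is a search procedure whose success is assumed.
\begin{itemize}
\item Alternating polar projections have no convergence guarantee; they can stall at, or cycle near, matrices that are not 2-unitary.
\item A numerical limit does not show that an exact solution exists.
\item ``Guess exact entries'' and ``once these finitely many identities are verified'' postpone the only verification that matters.
\end{itemize}
Here is one way to see that the argument has no force as written. Replace $6$ by $2$ throughout, and every sentence stays equally plausible, since there is no pair of orthogonal Latin squares of order two either. Yet AME$(4,2)$ does not exist \cite{noAME42}.

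To close the gap, you must write down a specific $U$, for instance the explicit matrix of \cite{36entangledofficers}, whose entries are built from $\omega=e^{i\pi/10}$. Then check that $U$, $U^R$ and $U^\Gamma$ are unitary. That finite, possibly computer-assisted, check is the actual proof.
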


With these results, the existence of pairs of entangled quantum Latin squares of order \(n\) is solved. However, it still remained open whether there exist two (non-entangled) orthogonal quantum Latin squares of order six. We answer this question in the negative.

\begin{theorem}\label{thm:no2MOQLS6}
    There does not exist a pair of orthogonal quantum Latin squares of order six.
\end{theorem}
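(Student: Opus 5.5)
The plan is to forget the actual vectors as far as possible and reduce the statement to a combinatorial problem about \emph{orthogonality patterns}, which can then be settled using Theorem~\ref{thm:2MOLS6} together with a structural analysis. Let \(\left(\psi_{ij}\right)\) and \(\left(\phi_{ij}\right)\) be orthogonal quantum Latin squares of order six, and index the \(36\) cells by \(x=(i,j)\). Since \(\langle\psi_x\otimes\phi_x,\psi_y\otimes\phi_y\rangle=\langle\psi_x,\psi_y\rangle\langle\phi_x,\phi_y\rangle\), Definition~\ref{def:OQLS} says exactly that for any two distinct cells \(x\neq y\), at least one of \(\psi_x\perp\psi_y\) or \(\phi_x\perp\phi_y\) holds. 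For each cell \(x\) define
\[N_\psi(x)=\{y\colon \langle\psi_x,\psi_y\rangle\neq0\},\qquad N_\phi(x)=\{y\colon \langle\phi_x,\phi_y\rangle\neq0\}.\]
\textbf{Step 1.} Derive the basic constraints on these sets:
\begin{enumerate}[(a)]
    \item \(N_\psi(x)\cap N_\phi(x)=\{x\}\), by the observation above.
    \item \(N_\psi(x)\) meets every row and every column. Indeed, each row and each column of \(\psi\) is an orthonormal basis, and \(\psi_x\) has a nonzero coordinate in each of these bases. The same holds for \(N_\phi(x)\).
    \item Within a single row or column, \(N_\psi(x)\) meets the line through \(x\) only in \(x\), since \(\psi_x\) is orthogonal to the other entries of its own row and column. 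The same holds for \(N_\phi(x)\).
\end{enumerate}
So every cell carries two "transversal-covering" sets that are disjoint apart from \(x\) itself.

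\textbf{Step 2.} Exploit the linear algebra more strongly than mere support patterns. Expand \(\psi_x\) in the basis given by row \(k\): \(\psi_x=\sum_l\langle\psi_{kl},\psi_x\rangle\psi_{kl}\). Consequently, if \(N_\psi(x)\) meets row \(k\) in a single cell \(y\), then \(\psi_x\) and \(\psi_y\) are equal up to a phase. I then call a cell \emph{classical in \(\psi\)} if its non-orthogonality set is a permutation pattern, that is, exactly one cell per row and per column. Equality up to phase is an equivalence relation, and its classes are then transversal-like sets. The key intermediate goal is a dichotomy:
\begin{itemize}
    \item either every cell is classical in both squares, in which case the equivalence classes define two classical Latin squares. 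They are orthogonal by (a): two cells with equal \(\psi\)-symbol and equal \(\phi\)-symbol would violate (a). This contradicts Theorem~\ref{thm:2MOLS6};
    \item or some cell has \(|N_\psi(x)\cap R|\geq2\) for some line \(R\).
\end{itemize}

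\textbf{Step 3.} In the second case, \(N_\psi(x)\) has more than six cells, so by (a) \(N_\phi(x)\) is forced into the complement. I would then do a counting argument on the graph \(G_\psi\) on the \(36\) cells, where cells are adjacent when their \(\psi\)-vectors are orthogonal, and similarly \(G_\phi\). We have \(G_\psi\cup G_\phi=K_{36}\), and each graph has an orthonormal representation in \(\CC^6\) in which all rows and columns are \(6\)-cliques. The first tool is an orthonormal-representation bound. Since the representation lives in \(\CC^6\), any set of pairwise non-orthogonal-free structure is limited: in particular, for any \(y\), the cells of \(N_\psi(x)\) in one row give linearly independent vectors spanning the component of \(\psi_x\). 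The second tool is a "local dimension" argument. If \(\psi_x\) is spread over \(m\geq2\) cells of a row, then the \(\phi\)-vectors on those \(m\) cells must be orthogonal to \(\phi_x\). Propagating this through rows and columns should force a \(k\)-dimensional subspace, with \(2\leq k\leq5\), to be carried by a \(k\times k\) sub-array in a rigid way. I would aim for a structural lemma: any non-classical configuration contains a sub-square, or a union of blocks, on which the \(\psi\)-vectors span a proper subspace. This would allow the \(6\times6\) problem to be split into smaller blocks, with sizes summing appropriately (for instance \(2+4\) or \(3+3\)). Each block would then be handled by known small-order facts, or by showing that the block structure again yields a classical pair of orthogonal Latin squares of order six.

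The main obstacle is Step 3. Ruling out genuinely "quantum" patterns requires more than the combinatorics of (a)--(c), because entangled solutions exist (Theorem~\ref{thm:2MEQLS6}), so purely combinatorial shortcuts must genuinely use the product structure. My first attempt would be to classify the possible unitary patterns of order six, meaning the zero patterns of the \(6\times6\) unitaries that change basis between two rows (or two columns). I would then check, possibly with computer assistance, which pattern assignments are compatible with (a), and show that each surviving one either decomposes into blocks or collapses to the classical case.
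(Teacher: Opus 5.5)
Your Steps 1 and 2 are correct. The factorization $\langle\psi_x\otimes\phi_x,\psi_y\otimes\phi_y\rangle=\langle\psi_x,\psi_y\rangle\langle\phi_x,\phi_y\rangle$ gives (a). Properties (b), (c) and the ``equal up to phase'' criterion follow from the row and column bases. If every cell is classical in both squares, the phase classes do give two orthogonal Latin squares, contradicting Theorem~\ref{thm:2MOLS6}.

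But that case \emph{is} Tarry's theorem. Everything new in the statement lies in your second alternative, and Step 3 contains no argument for it (``should force'', ``I would aim for'', ``possibly with computer assistance''). The two ways you hope to finish also do not work as stated:
\begin{itemize}
\item Restricting an orthogonal pair of order six to a block does not give an orthogonal pair of smaller order. Even if the $\psi$-vectors on a block span a proper subspace, the $\phi$-vectors there need not, and the orthogonality condition couples all $36$ cells. So Theorems~\ref{thm:2MOQLS4} and~\ref{thm:2MOQLS5} cannot be applied to blocks.
\item ``Collapsing to a classical pair'' is unavailable in general. Once $\Psi$ is classical, a quantum mate is an orthonormal representation in $\CC^6$ of the complement of its Latin square graph, whereas a classical mate is a proper $6$-colouring of that graph. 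Nothing general passes from the first to the second, and the mate itself need not be classical (see the order-nine example in the conclusion). So Tarry's theorem cannot close the non-classical case.
\end{itemize}

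The missing ideas are what the paper supplies, in three steps.

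\textbf{First: reduce the weights.} Put both squares in standard form (Lemma~\ref{lem:standard}). Your (a), restricted to the first row, then becomes the zero-overlap rule: the supports of $\psi_{ij}$ and $\phi_{ij}$ in $\ket{1},\dots,\ket{6}$ are disjoint and avoid coordinate $j$. Hence all weights are at most four. A case analysis of $6\times6$ unitary patterns with rows of weight three or four (Lemmas~\ref{lem:pattern3}--\ref{lem:weight3}) then shows that one square, say $\Psi$, has only entries of weight at most two.

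\textbf{Second: rotate weight-two entries away.} This is Lemma~\ref{lem:weight22}. In such a $\Psi$, any entry orthogonal to one with support $\{1,2\}$ has support exactly $\{1,2\}$ or support disjoint from it. So applying one unitary on $\langle\ket{1},\ket{2}\rangle$ (fixing $\ket{3},\dots,\ket{6}$) to all entries with support exactly $\{1,2\}$ preserves every orthogonality in $\Psi$, hence orthogonality with $\Phi$. Iterating makes $\Psi$ classical while $\Phi$ stays arbitrary. Your dichotomy misses this: weight-two entries occur in genuine orthogonal pairs of other orders, so they cannot be refuted locally, only rotated away.

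\textbf{Third: solve the global problem.} What remains is a classical $\Psi$ with a quantum $\Phi$. Via Lemma~\ref{lem:orthonormalrep} this reduces to the twelve main classes. Ten are eliminated by the algorithm of Section~\ref{sec:algorithm}, which uses edge propagation along projective lines, a clique bound, and the dimension count of Lemma~\ref{lem:tripartite} with branching. The two with a subsquare of order three are excluded by a quantum analogue of Mann's argument (Lemmas~\ref{lem:subsquaredifferent}--\ref{lem:subsquareweight2} and Theorem~\ref{thm:nosubsquare}). That argument needs explicit parametrisation and quantitative relations, for instance multiplying nine orthogonality equations to equate a positive quantity with a negative one.

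Without replacements for these three steps, your Step 3 is a research plan rather than a proof.
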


\section{Preliminaries}

\subsection{Latin squares}

A \emph{Latin square} of order \(n\) is an \(n\times n\) matrix containing elements of a set of size \(n\) such that each row and each column contains every element exactly once.

Two Latin squares \(A=\left(a_{ij}\right)_{1\leq i,j\leq n}\) and \(B=\left(b_{ij}\right)_{1\leq i,j\leq n}\) are \emph{orthogonal} if all \(n^2\) tuples \((a_{ij},b_{ij})\) are different.
A set of \emph{\(t\) mutually orthogonal Latin squares of order \(n\)}, abbreviated \(t\) MOLS\((n)\), are \(t\) Latin squares of order \(n\) that are pairwise orthogonal.

The \emph{Latin square graph} of a given Latin square of order \(n\) is the graph whose vertices are the \(n^2\) entry coordinates of the Latin square, where two vertices are adjacent if they are in the same row or column or if the entries are equal.

There are several notions of equivalence for Latin squares, see also \cite{numberofLS}:
\begin{itemize}
    \item Two Latin squares are \emph{isotopic} if they can be obtained from one another by permuting the rows, columns and symbols.
    %\item Two Latin squares are \emph{trisotopic} if they can be obtained from one another by permuting the rows, columns and symbols and by transposing.
    \item Two Latin squares are \emph{paratopic} if they can be obtained from one another by permuting the rows, columns and symbols, and permuting their roles.
\end{itemize}
For example, swapping the roles of rows and columns corresponds to transposing. Equivalently, two Latin squares are paratopic if they have isomorphic Latin square graphs (all entries in the same row or column or with the same symbol correspond to maximum cliques).

The definitions of isotopy and paratopy can be extended to mutually orthogonal Latin squares. In this context, we are allowed to permutate the symbols independently within different squares, but the rows and columns should be permuted among all Latin squares simultaneously. See also \cite{enumerationMOLS}.

\subsection{Equivalence of quantum Latin squares}

Two quantum Latin squares are \emph{isotopic} if they can be obtained from one another by
\begin{enumerate}[(i)]
    \item multiplying entries with a phase factor,
    \item permuting rows and columns, and
    \item\label{item:u} applying a unitary transformation on all entries.
\end{enumerate}
A quantum Latin square is \emph{classical} or \emph{not genuinely quantum} if, up to isotopy, all entries are contained in \(\left\{\ket{1},\dots,\ket{n}\right\}\), a fixed orthonormal basis.
This property is independent of the chosen basis by item~(\ref{item:u}) in the definition of isotopy.

Note that the notion of paratopy does not carry over to non-classical quantum Latin squares, since, for example, the roles of rows and symbols cannot be swapped whenever the quantum Latin square has more different entries than rows.

\subsection{Mutually orthogonal quantum Latin squares (MOQLS)}

Mutually orthogonal quantum Latin squares were introduced by Musto \cite{Mustothesis}. These so-called MOQLS have been investigated in \cite{nonclassical,GMOQLS} among others. 
\begin{definition}[MOQLS \cite{Mustothesis}]
    \emph{\(t\) mutually orthogonal quantum Latin squares of order \(n\)} are \(t\) quantum Latin squares of order \(n\) such that any two of them are orthogonal (as in Definition~\ref{def:OQLS}).
\end{definition}
Just as for quantum Latin squares, two sets of \(t\) MOQLS\((n)\) are \emph{isotopic} if they can be obtained from one another by
\begin{enumerate}[(i)]
    \item multiplying entries with a phase factor,
    \item permuting rows and columns, simultaneously among all squares, and
    \item\label{item:lu} for each square, applying a unitary transformation on all entries of each square.
\end{enumerate}
We say that a set of \(t\) MOQLS\((n)\) is \emph{classical} or \emph{not genuinely quantum} if, up to isotopy, all entries are contained in \(\left\{\ket{1},\dots,\ket{n}\right\}\). Again, by item~(\ref{item:lu}) above, this property does not depend on the chosen orthonormal basis.

\begin{lemma}[{\cite[Corollary~17]{MustoVicary2019}}]\label{lem:standard}
    Up to isotopy, \(t\) MOQLS\((n)\) have \(\begin{tabular}{|c|c|c|}
    \hline
    \(\ket{1}\) & \(\cdots\) & \(\ket{n}\)\\
    \hline
    \end{tabular}\) in their first row.
\end{lemma}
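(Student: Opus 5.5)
We normalise using only the isotopy operations listed above, treating each square independently wherever the definition allows it. Take \(t\) MOQLS\((n)\) \(\Psi^{(1)},\dots,\Psi^{(t)}\). The first row of each square \(\Psi^{(s)}\) is an orthonormal basis \(\psi^{(s)}_{11},\dots,\psi^{(s)}_{1n}\) of \(\CC^n\), because \(\Psi^{(s)}\) is a quantum Latin square. So there is a unitary \(U_s\) with \(U_s\psi^{(s)}_{1j}=\ket{j}\) for all \(j\). Item~(\ref{item:lu}) of the definition of isotopy lets us apply \(U_s\) to every entry of \(\Psi^{(s)}\). We do this separately for each square, and after it every square has \(\ket{1},\dots,\ket{n}\) in its first row.

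It remains to check that the transformed family is still a set of MOQLS, that is, that isotopy preserves the defining properties. This is routine.
\begin{enumerate}[(a)]
    \item Each row and column of \(U_s\Psi^{(s)}\) is the image of an orthonormal basis under a unitary map, so it is again an orthonormal basis. Hence each transformed square is a quantum Latin square.
    \item For orthogonality of squares \(s\) and \(s'\), note that \(U_s\psi^{(s)}_{ij}\otimes U_{s'}\psi^{(s')}_{ij}=(U_s\otimes U_{s'})(\psi^{(s)}_{ij}\otimes\psi^{(s')}_{ij})\). Since \(U_s\otimes U_{s'}\) is unitary on \(\CC^n\otimes\CC^n\), it maps the orthonormal basis of Definition~\ref{def:OQLS} to an orthonormal basis.
    \item Phase factors and simultaneous row or column permutations likewise preserve orthonormality of rows, columns and the product bases. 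These operations are not needed here.
\end{enumerate}

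The only point needing care is that the unitary may depend on the square. This is exactly what item~(\ref{item:lu}) permits, and part (b) shows that a local unitary \(U_s\otimes U_{s'}\) preserves the product basis. There is no real obstacle. The same argument in fact gives any prescribed orthonormal basis, in any order, as the common first row.
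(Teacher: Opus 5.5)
Your argument is correct: applying a separate unitary \(U_s\) to each square is exactly what item~(iii) of the isotopy definition permits, and since \(U_s\otimes U_{s'}\) is unitary on \(\CC^n\otimes\CC^n\), it carries the product orthonormal basis of Definition~\ref{def:OQLS} to another orthonormal basis, so mutual orthogonality is preserved. The paper gives no proof of this lemma and simply cites Corollary~17 of Musto and Vicary, so your direct verification is a complete, self-contained substitute for that citation.
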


\section{Patterns of quantum Latin squares}

The \emph{pattern} of a matrix is the binary matrix that is obtained from it by replacing each nonzero entry by a one.
A \emph{unitary pattern} is a pattern of a unitary matrix, see also \cite{unitarypatterns}.

Similarly, the pattern of a vector is the vector obtained from it by replacing each nonzero entry by a one.
The \emph{support} of a vector is the set of coordinates where it is nonzero.
The \emph{weight} of a vector is the size of its support, that is, the number of ones in its pattern.

We extend the notion of patterns to quantum Latin squares. Note that the pattern depends on the chosen orthonormal basis \(\left\{\ket{1},\dots,\ket{n}\right\}\).

\begin{definition}
    The \emph{pattern} of a quantum Latin square is the matrix that is obtained from it by replacing each entry by its pattern.
\end{definition}

For example:

\begin{table}[H]
    \centering
\begin{tabular}{|c|c|c|c|}
    \hline
    \(\ket{1}\) & \(\ket{2}\) & \(\ket{3}\) & \(\ket{4}\)\\
    \hline
    \(\ket{2}\) & \(\ket{1}\) & \(\ket{4}\) & \(\ket{3}\)\\
    \hline
    \(\ket{3}\) & \(\ket{4}\) & \(\frac1{\sqrt{2}}\left(\ket{1}+\ket{2}\right)\) & \(\frac1{\sqrt{2}}\left(\ket{1}-\ket{2}\right)\)\\
    \hline
    \(\ket{4}\) & \(\ket{3}\) & \(\frac1{\sqrt{2}}\left(\ket{1}-\ket{2}\right)\) & \(\frac1{\sqrt{2}}\left(\ket{1}+\ket{2}\right)\)\\
    \hline
\end{tabular}
\quad
\(\xrightarrow{\text{pattern}}\)
\quad
\begin{tabular}{|c|c|c|c|}
    \hline
    1000 & 0100 & 0010 & 0001\\
    \hline
    0100 & 1000 & 0001 & 0010\\
    \hline
    0010 & 0001 & 1100 & 1100\\
    \hline
    0001 & 0010 & 1100 & 1100\\
    \hline
\end{tabular}
\end{table}

We can always apply the following arguments to patterns of MOQLS. We see them as some kind of ``quantum sudoku rules'':

\begin{mdframed}
\begin{enumerate}
    \item[\#1] (Standard form.) By Lemma~\ref{lem:standard}, we can always start in standard form where the first row of the patterns of each quantum Latin square is of the form
    \[\begin{tabular}{|c|c|c|c|}
    \hline
    10\dots0 & 010\dots0 & \(\cdots\) & 0\dots01\\
    \hline
    \end{tabular}\]
    \item[\#2] (Unitary patterns.) Each row and each column of a quantum Latin square form a unitary pattern, by definition of quantum Latin squares. In particular, if there is an entry of weight one with support \(\{k\}\), then all other entries in the same row and column have a zero in coordinate \(k\).
    \item[\#3] (Zero overlap.) Starting from the second row onwards, if a pattern has a 1 on some position, there is a 0 on the same position in the pattern of an orthogonal quantum Latin square. Indeed, since the quantum Latin squares are mutually orthogonal, \(\psi_{ij}\otimes\phi_{ij}\) and \(\psi_{1k}\otimes\phi_{1k}=\ket{k}\otimes\ket{k}\) are orthogonal, so \(\braket{k}{\psi_{ij}}\braket{k}{\phi_{ij}}=0\).
\end{enumerate}
\end{mdframed}

\begin{theorem}\label{thm:n-1MOQLSn}
    There are at most \(n-1\) MOQLS of order \(n\). In case of equality, they are classical (and equivalent to a projective plane of order \(n\)).
\end{theorem}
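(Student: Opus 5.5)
The plan is to put the \(t\) MOQLS\((n)\) into standard form and then count, for a single cell, how many disjoint nonempty supports fit into the available coordinates. By rule \#1 (Lemma~\ref{lem:standard}), after an isotopy every one of the \(t\) squares has \(\ket{1},\dots,\ket{n}\) as its first row, with \(\ket{j}\) in column \(j\).

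Fix a cell \((i,j)\) with \(i\geq 2\) and look at the \(t\) patterns in that cell.
\begin{enumerate}[(a)]
    \item By rule \#2, the first entry of column \(j\) in every square is \(\ket{j}\), which has weight one. Hence every other entry in column \(j\), in every square, has coordinate \(j\) equal to zero. So all \(t\) supports lie in \(\{1,\dots,n\}\setminus\{j\}\), a set of size \(n-1\).
    \item By rule \#3, applied to each pair of the squares, the \(t\) supports are pairwise disjoint.
    \item Each entry is a unit vector, so each support is nonempty.
\end{enumerate}
Thus we have \(t\) pairwise disjoint nonempty subsets of a set of size \(n-1\), which gives \(t\leq n-1\). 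We need \(n\geq 2\) so that a second row exists; for \(n=1\) the statement is degenerate.

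For the equality case \(t=n-1\), the same counting forces every support in every cell of rows \(2,\dots,n\) to have size exactly one. Together with the first row, every entry of every square is then a phase multiple of a basis vector \(\ket{k}\). We remove the phases with isotopy item (i), so all entries lie in \(\{\ket{1},\dots,\ket{n}\}\) and the set is classical.

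It remains to check that the resulting symbol matrices form \(n-1\) MOLS\((n)\). Each row and column of a square is an orthonormal basis of basis vectors, hence a permutation of \(\ket{1},\dots,\ket{n}\), so each square is a Latin square. For two of the squares, the \(n^2\) vectors \(\ket{a_{ij}}\otimes\ket{b_{ij}}\) are orthonormal. Two such product basis vectors are orthogonal exactly when the pairs \((a_{ij},b_{ij})\) differ, so all pairs are distinct and the Latin squares are orthogonal. The classical correspondence between complete sets of \(n-1\) MOLS\((n)\) and projective planes of order \(n\) then gives the parenthetical claim.

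No step is a serious obstacle. The only point needing care is the use of rule \#2 in step (a): it is what removes coordinate \(j\) and yields the bound \(n-1\) rather than \(n\).
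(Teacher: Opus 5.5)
Your proof is correct and follows the paper's argument. You use standard form via Lemma~\ref{lem:standard}, rule \#2 to exclude coordinate \(j\), and rule \#3 to make the \(t\) supports in a cell of a row \(i\geq 2\) pairwise disjoint and nonempty, and equality then forces every entry to have weight one; you also spell out two points the paper leaves implicit, namely that the resulting symbol arrays are \(n-1\) MOLS\((n)\) and that the case \(n=1\) is degenerate.
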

\begin{proof}
    The inequality was already proven in \cite[Theorem~18]{MustoVicary2019}, but we do it again using the above rules. Rule \#1 says that, without loss of generality, we may assume that the first row of all matrices contains the states \(\ket{1},\dots,\ket{n}\). As in the classical proof, we look at the \(t\) entries on position \((2,1)\). By rule \#2, their pattern has a zero on the first position. By rule \#3, their supports partition the set \(\{2,\dots,n\}\). Thus, there are at most \(n-1\) MOQLS of order \(n\). If equality holds, all squares have a weight one entry on position \((2,1)\). The choice of this position was arbitrary, so all entries have weight one, meaning that the quantum Latin squares are classical.
\end{proof}

Before we prove the nonexistence of 2 MOQLS(6), we show that there do not exist non-classical 2 MOQLS\((n)\) if \(n=4\) or \(n=5\), solving some of the open cases in \cite[Theorem~3.7]{nonclassical}.

\begin{theorem}\label{thm:2MOQLS4}
    2 MOQLS(4) are classical.
\end{theorem}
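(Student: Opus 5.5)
We follow the pattern argument from the proof of Theorem~\ref{thm:n-1MOQLSn}, now with only two squares \(\Psi=(\psi_{ij})\) and \(\Phi=(\phi_{ij})\) of order \(4\). By rule \#1 both first rows are \(\ket1,\ket2,\ket3,\ket4\). Consider any entry \((i,j)\) with \(i\geq2\).

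By rule \#2, the patterns of \(\psi_{ij}\) and \(\phi_{ij}\) vanish at coordinate \(j\). By rule \#3 their supports are disjoint. Hence the two supports are disjoint nonempty subsets of the three-element set \(\{1,2,3,4\}\setminus\{j\}\). It follows that one of the two entries has weight one, and the other has weight at most two. The goal is to show that in fact every entry has weight one. Then, after phases, every entry is a basis vector and the pair is classical (indeed a pair of orthogonal Latin squares of order \(4\)).

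\textbf{Weight-one entries give zeros in their row and column.} Suppose some entry, say \(\psi_{ij}\), has weight two, with support \(\{a,b\}\). Then \(\phi_{ij}=\ket c\) up to phase, where \(c\notin\{j,a,b\}\). By rule \#2, every other entry of \(\Phi\) in row \(i\) and in column \(j\) vanishes at \(c\).

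\textbf{Unitary patterns of order \(4\).} Next, use the structure of row \(i\) of \(\Psi\) (for \(i\geq 2\)), viewed as a \(4\times4\) unitary with zero diagonal in the sense that \(\psi_{ij}\perp\ket j\). A weight-two vector \(\psi_{ij}\) in an orthonormal basis forces a second vector in that basis supported on a set meeting \(\{a,b\}\) in both coordinates. More precisely: a unitary pattern cannot contain a row whose support meets another row's support in exactly one coordinate, since their inner product would be a single nonzero product.

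So the entries of row \(i\) of \(\Psi\) whose support meets \(\{a,b\}\) must contain \(\{a,b\}\). These are exactly the vectors spanning the span of \(\ket a,\ket b\) together with possibly more. Counting shows there are at least two weight-\(\geq2\) entries in row \(i\) of \(\Psi\), at columns \(j\) and \(j'\), both with support containing \(\{a,b\}\). Then \(j,j'\notin\{a,b\}\), so \(\{j,j'\}=\{c,d\}\), the complement of \(\{a,b\}\). Each has support exactly \(\{a,b\}\), because weight is at most two.

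The corresponding \(\Phi\)-entries are then forced: \(\phi_{ij}\) is supported in \(\{1,2,3,4\}\setminus\{j,a,b\}=\{j'\}\), and symmetrically \(\phi_{ij'}=\ket j\). Thus row \(i\) of \(\Psi\) looks like \(\ket c,\ket d\) at columns \(a,b\) (or a mixture), and weight-two entries on \(\{a,b\}\) at columns \(c,d\).

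\textbf{Reaching a contradiction (the main obstacle).} Now apply the same analysis to columns \(c\) and \(d\). Combining these \(2\times2\) block structures across rows and columns should force \(\Psi\) to have a \(2\times 2\) subsquare, in rows \(i,i'\) and columns \(c,d\), of weight-two entries on \(\{a,b\}\). \(\Phi\) is then classical on that subsquare, with \(\phi_{ic}=\ket d\) and \(\phi_{ic'}\) similar. Then \(\Phi\) repeats symbols in a way violating rule \#3 against the first row, or violating orthogonality: the four products \(\psi\otimes\phi\) on the subsquare span only \(\mathrm{span}\{\ket a,\ket b\}\otimes\mathrm{span}\{\ket c,\ket d\}\). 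Meanwhile the first-row vectors and other entries must fill out the rest of the \(16\)-dimensional space, and the pattern counts will not fit. I expect this step to be the main obstacle: the local \(2\times2\) analysis is easy, but the contradiction needs the global orthogonality, possibly via a dimension count. Alternatively, it may turn out that such configurations are genuinely quantum but equivalent to classical ones via a unitary on \(\mathrm{span}\{\ket a,\ket b\}\); in that case the proof instead ends by showing that the whole square is isotopic to a classical one.
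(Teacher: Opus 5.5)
\textbf{There is a genuine gap: the final step, which would produce the contradiction, is never carried out.}

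Your preliminary steps are correct. For $i\geq 2$, the supports of $\psi_{ij}$ and $\phi_{ij}$ are disjoint nonempty subsets of $\{1,2,3,4\}\setminus\{j\}$. So every entry has weight at most two. A weight-two $\psi_{ij}$ with support $\{a,b\}$ forces $\phi_{ij}=\ket{c}$, where $\{c\}=\{1,2,3,4\}\setminus\{j,a,b\}$.

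You apply the overlap argument to row $i$, and there it cannot produce a contradiction. The two weight-two entries $\psi_{ij}$ and $\psi_{ij'}$ lie in different columns, so they have different forbidden coordinates. This gives the distinct and perfectly compatible values $\phi_{ij}=\ket{j'}$ and $\phi_{ij'}=\ket{j}$. Everything after that is conjectural: the $2\times 2$ subsquare, the dimension count, and the suggestion that the configuration might be genuinely quantum. The last alternative is also not an isotopy, since a unitary must act on all entries of a square, not only on those supported in $\{a,b\}$.

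The missing idea is to run your own overlap argument in column $j$ instead of row $i$. In a column, every entry below the first row has the same forbidden coordinate $j$. Also, $\psi_{1j}=\ket{j}$ does not meet $\{a,b\}$. Suppose neither of the two remaining entries of column $j$ (other than $\psi_{1j}$ and $\psi_{ij}$) met $\{a,b\}$. Then both would lie in the span of $\ket{c}$, which is impossible for two orthonormal vectors. So some $\psi_{i'j}$ with $i'\neq i$ meets $\{a,b\}$. Since it is orthogonal to $\psi_{ij}$, it cannot overlap in exactly one coordinate, so it contains $\{a,b\}$. Having weight at most two, it has support exactly $\{a,b\}$. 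Then $\phi_{i'j}$ is supported in $\{1,2,3,4\}\setminus\{j,a,b\}=\{c\}$, so $\phi_{i'j}=\ket{c}=\phi_{ij}$ up to a phase. This contradicts the orthogonality of two entries in the same column of $\Phi$.

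This short column argument is exactly the paper's proof. Since it applies to any entry, every entry has weight one and the pair is classical.
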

\begin{proof}
    Let \(\Psi=\left(\psi_{ij}\right)_{1\leq i,j\leq4}\) and \(\Phi=\left(\phi_{ij}\right)_{1\leq i,j\leq4}\) be 2 MOQLS(4). By rule \#1, we may assume that the first row of both squares consists of \(\ket{1},\dots,\ket{4}\). Moreover, for all \(i\in\{2,3,4\}\) and \(j\in\{1,\dots,4\}\), the entry \(\psi_{ij}\) is orthogonal to \(\psi_{1j}=\ket{j}\) because they are in the same column. The same holds for the entries of \(\Phi\). Therefore, the patterns look like this (* can be either 0 or 1):
\[\Psi=
\begin{tabular}{|c|c|c|c|}
    \hline
    1000 & 0100 & 0010 & 0001\\
    \hline
    0*** & *0** & **0* & ***0\\
    \hline
    0*** & *0** & **0* & ***0\\
    \hline
    0*** & *0** & **0* & ***0\\
    \hline
\end{tabular}
\qquad
\text{and}
\quad
\Phi=
\begin{tabular}{|c|c|c|c|}
    \hline
    1000 & 0100 & 0010 & 0001\\
    \hline
    0*** & *0** & **0* & ***0\\
    \hline
    0*** & *0** & **0* & ***0\\
    \hline
    0*** & *0** & **0* & ***0\\
    \hline
\end{tabular}\]
    Suppose by contradiction that the squares are not classical. In other words, there is at least one entry that has weight two or more. Up to permutations, we can assume that it is the entry \(\psi_{21}\) in second row and the first column of \(\Psi\) and that its pattern is of the form 011*. Using rule \#3, we have \(\phi_{21}=\ket{4}\) and consequently \(\braket{4}{\psi_{21}}=0\), so the squares become:
\[\Psi=
\begin{tabular}{|c|c|c|c|}
    \hline
    1000 & 0100 & 0010 & 0001\\
    \hline
    0110 & *0** & **0* & ***0\\
    \hline
    0*** & *0** & **0* & ***0\\
    \hline
    0*** & *0** & **0* & ***0\\
    \hline
\end{tabular}
\qquad
\text{and}
\quad
\Phi=
\begin{tabular}{|c|c|c|c|}
    \hline
    1000 & 0100 & 0010 & 0001\\
    \hline
    0001 & *0** & **0* & ***0\\
    \hline
    0*** & *0** & **0* & ***0\\
    \hline
    0*** & *0** & **0* & ***0\\
    \hline
\end{tabular}\]
    The remaining states in the first column of \(\Psi\) cannot both be \(\ket{4}\) because they are orthogonal, so without loss of generality \(\psi_{31}\) has pattern 01**. Since it is orthogonal to \(\psi_{21}\), its pattern is 011*, but then \(\phi_{31}=\ket{4}\) by the same argument as before (rule \#3), a contradiction because \(\phi_{21}=\ket{4}=\phi_{31}\) are in the same column and therefore orthogonal.
\end{proof}

\begin{theorem}\label{thm:2MOQLS5}
    2 MOQLS(5) are classical.
\end{theorem}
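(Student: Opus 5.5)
We follow the proof of Theorem~\ref{thm:2MOQLS4}. Let \(\Psi=(\psi_{ij})\) and \(\Phi=(\phi_{ij})\) be 2 MOQLS(5) in the standard form of rule \#1. For \(i\geq 2\), column orthogonality gives \(\braket{j}{\psi_{ij}}=\braket{j}{\phi_{ij}}=0\). Rule \#3 then says that the supports of \(\psi_{ij}\) and \(\phi_{ij}\) are disjoint nonempty subsets of \(\{1,\dots,5\}\setminus\{j\}\), a set of size four. So in every position the weights satisfy \(w(\psi_{ij})+w(\phi_{ij})\leq 4\).

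Besides rules \#2 and \#3, we use one elementary fact. Two orthogonal vectors cannot have supports that meet in exactly one coordinate, since their inner product would then be a single nonzero product. In addition, the four entries of a column below the first row form an orthonormal basis of the corresponding four-dimensional coordinate subspace. The same holds for the four entries of a row to the right of the first column, once we use the row's own weight-one entries.

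Suppose, for a contradiction, that some entry has weight at least two. Up to permutations and swapping \(\Psi\) and \(\Phi\), we may take it to be \(\psi_{21}\). The first step is the case \(w(\psi_{21})=3\), say with support \(\{2,3,4\}\). Then \(\phi_{21}=\ket{5}\), so the remaining \(\phi_{i1}\) lie in \(\operatorname{span}\{\ket2,\ket3,\ket4\}\). The remaining \(\psi_{i1}\) span the orthogonal complement of \(\psi_{21}\) in \(\operatorname{span}\{\ket2,\dots,\ket5\}\). In particular some \(\psi_{i1}\) with \(i\geq3\) has a nonzero coordinate in \(\{2,3,4\}\). By the overlap-one fact, this \(\psi_{i1}\) must meet \(\{2,3,4\}\) in at least two coordinates. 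Disjointness then forces \(\phi_{i1}\) to be a single vector \(\ket{k}\) with \(k\in\{2,3,4\}\). Iterating this, as in the order-four proof, should leave too few vectors for \(\Phi\)'s first column to be orthonormal, or else force two equal weight-one entries \(\ket{k}\) in the same column of \(\Phi\). That contradicts rule \#2.

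The second and main case is when every non-classical entry has weight exactly two. A single column can then be consistent, for instance \(\psi_{21},\psi_{31}\) supported on \(\{2,3\}\), \(\psi_{41},\psi_{51}\) on \(\{4,5\}\), and \(\Phi\) with the complementary pattern. So a purely column-local argument will not work, and this is the obstacle. We first show, via the overlap-one fact and rule \#2, that weight-two entries occur in pairs with identical support \(\{a,b\}\) in the same column, and likewise in rows. We then propagate along the row of \(\psi_{21}\). Since \(\phi_{21}\) is supported on the complementary pair, the entries of row 2 in both squares must also pair up. This should produce a \(2\times2\) block structure, consisting of two rows and two columns, whose entries all live on one coordinate pair in \(\Psi\) and on another in \(\Phi\).

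Replacing each block by a classical \(2\times2\) subsquare on the same coordinates then seems to yield a pair of classical squares, and we plan to derive a contradiction from the constraints imposed by the first row together with rule \#3. If the direct contradiction is elusive, the fallback is an explicit enumeration of the remaining patterns up to isotopy. The order is small, and rules \#1–\#3 prune heavily, so this enumeration should be short.
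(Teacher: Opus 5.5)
There is a genuine gap in each of your two cases.

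\textbf{The weight-three case.} The column-local iteration you borrow from the order-four proof cannot yield a contradiction, because a single column of both squares can consistently carry a weight-three entry. Take
\[
\psi_{21}=\tfrac{1}{\sqrt3}(0,1,1,1,0),\quad \psi_{31}=\tfrac{1}{\sqrt3}(0,0,1,-1,1),\quad \psi_{41}=\tfrac{1}{\sqrt3}(0,1,0,-1,-1),\quad \psi_{51}=\tfrac{1}{\sqrt3}(0,1,-1,0,1),
\]
together with $\phi_{21},\phi_{31},\phi_{41},\phi_{51}=\ket5,\ket2,\ket3,\ket4$. Both columns are orthonormal bases, and $\psi_{i1}$, $\phi_{i1}$ have disjoint supports for every $i\geq2$. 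Your reasoning does correctly force each $\phi_{i1}$ ($i\geq3$) to be some $\ket{k}$ with $k\in\{2,3,4\}$. But the values of $k$ are simply different, so you get neither a shortage of vectors nor a repeated $\ket{k}$.

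The paper works along the \emph{row} instead, where the coordinate excluded by the first row changes from column to column:
\begin{enumerate}[(i)]
\item Since $\phi_{21}=\ket5$, every $\phi_{2j}$ vanishes at coordinate $5$.
\item Some $\psi_{2j}$ with $j\in\{2,3,4\}$, say $j=2$, meets $\{2,3,4\}$, hence in two coordinates. This forces $\phi_{22}=\ket1$.
\item Now $\phi_{23}$ and $\phi_{24}$ have nonempty supports in $\{2,4\}$ and $\{2,3\}$ respectively. By the overlap-one fact with $\psi_{21}$, this pushes $\psi_{23},\psi_{24}$ into $\operatorname{span}\{\ket1,\ket5\}$.
\item Hence $\psi_{21},\psi_{22},\psi_{25}$ span $\operatorname{span}\{\ket2,\ket3,\ket4\}$. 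The unitary-pattern argument then forces $\psi_{25}$ to have support $\{2,3,4\}$, which contradicts rule \#3.
\end{enumerate}

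\textbf{The weight-two case.} Here your plan is not yet an argument. The pairing of weight-two supports in rows and columns is correct, but it does not force a $2\times2$ block. With $\psi_{21},\psi_{24},\psi_{31}$ supported on $\{2,3\}$, nothing in your argument stops the row-3 partner of $\psi_{31}$ from sitting in column $5$ rather than column $4$.

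More importantly, replacing entries by classical ones discards exactly the information that produces a contradiction, since classical 2 MOLS(5) exist. Even an intercalate would only help via the external fact that order-five Latin squares containing one have no orthogonal mate.

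The tool you never use is the transfer of non-orthogonality. If $(i,j)$ and $(k,l)$ lie in different rows and columns and $\psi_{ij}\not\perp\psi_{kl}$, then $\phi_{ij}\perp\phi_{kl}$. The paper applies it as follows:
\begin{enumerate}[(i)]
\item Take $\psi_{21},\psi_{24},\psi_{31}$ with support $\{2,3\}$, so $\psi_{24}=\psi_{31}$ up to phase.
\item $\psi_{22}$ and $\psi_{32}$ must vanish at coordinate $3$. So some entry, say $\psi_{42}$, is nonzero there, and it meets $\{2,3\}$ in exactly one coordinate.
\item Then $\phi_{21},\phi_{31},\phi_{24},\phi_{42}$ are pairwise orthogonal, yet all lie in $\operatorname{span}\{\ket1,\ket4,\ket5\}$, which is impossible.
\end{enumerate}
Deferring to an unspecified enumeration does not close this gap.
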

\begin{proof}
    Consider 2 MOQLS(5). We apply rule \#1 and rule \#2 to get the squares in the following form:
\[\Psi=
\begin{tabular}{|c|c|c|c|c|}
    \hline
    10000 & 01000 & 00100 & 00010 & 00001\\
    \hline
    0**** & *0*** & **0** & ***0* & ****0\\
    \hline
    0**** & *0*** & **0** & ***0* & ****0\\
    \hline
    0**** & *0*** & **0** & ***0* & ****0\\
    \hline
    0**** & *0*** & **0** & ***0* & ****0\\
    \hline
\end{tabular}
\quad
\text{and }
\,
\Phi=
\begin{tabular}{|c|c|c|c|c|}
    \hline
    10000 & 01000 & 00100 & 00010 & 00001\\
    \hline
    0**** & *0*** & **0** & ***0* & ****0\\
    \hline
    0**** & *0*** & **0** & ***0* & ****0\\
    \hline
    0**** & *0*** & **0** & ***0* & ****0\\
    \hline
    0**** & *0*** & **0** & ***0* & ****0\\
    \hline
\end{tabular}\]
    Assume, by contradiction, that there is an entry of weight two or more.

    \textbf{Case 1: all entries have weight at most two.} Without loss of generality, let \(\psi_{21}\) have pattern 01100. The remaining entries in the second row cannot all be contained in a \(3\)-dimensional vector space (3-space for short), so there is another entry with the exact same pattern 011000. Let \(\psi_{24}\) be that other entry. Same for the first column; let \(\psi_{31}\) have this pattern as well:
\[\Psi=
\begin{tabular}{|c|c|c|c|c|}
    \hline
    10000 & 01000 & 00100 & 00010 & 00001\\
    \hline
    01100 & *0*** & **0** & 01100 & ****0\\
    \hline
    01100 & *0*** & **0** & ***0* & ****0\\
    \hline
    0**** & *0*** & **0** & ***0* & ****0\\
    \hline
    0**** & *0*** & **0** & ***0* & ****0\\
    \hline
\end{tabular}
\quad
\text{and }
\,
\Phi=
\begin{tabular}{|c|c|c|c|c|}
    \hline
    10000 & 01000 & 00100 & 00010 & 00001\\
    \hline
    0**** & *0*** & **0** & ***0* & ****0\\
    \hline
    0**** & *0*** & **0** & ***0* & ****0\\
    \hline
    0**** & *0*** & **0** & ***0* & ****0\\
    \hline
    0**** & *0*** & **0** & ***0* & ****0\\
    \hline
\end{tabular}\]
    After applying rule \#2 and rule \#3, we get:
\[\Psi=
\begin{tabular}{|c|c|c|c|c|}
    \hline
    10000 & 01000 & 00100 & 00010 & 00001\\
    \hline
    01100 & *00** & *00** & 01100 & *00*0\\
    \hline
    01100 & *00** & *00** & ***0* & ****0\\
    \hline
    000** & *0*** & **0** & ***0* & ****0\\
    \hline
    000** & *0*** & **0** & ***0* & ****0\\
    \hline
\end{tabular}
\quad
\text{and }
\,
\Phi=
\begin{tabular}{|c|c|c|c|c|}
    \hline
    10000 & 01000 & 00100 & 00010 & 00001\\
    \hline
    000** & *0*** & **0** & *000* & ****0\\
    \hline
    000** & *0*** & **0** & ***0* & ****0\\
    \hline
    0**** & *0*** & **0** & ***0* & ****0\\
    \hline
    0**** & *0*** & **0** & ***0* & ****0\\
    \hline
\end{tabular}\]
    We have \(\braket{3}{\psi_{i2}}=0\) for \(i\in\{1,2,3\}\), so \(\braket{3}{\psi_{42}}\neq0\) or \(\braket{3}{\psi_{52}}\neq0\). Up to permutation of the rows, we may assume that \(\braket{3}{\psi_{42}}\neq0\). Now look at the grey cells in the table:
\[\Psi=
\begin{tabular}{|c|c|c|c|c|}
    \hline
    10000 & 01000 & 00100 & 00010 & 00001\\
    \hline
    \cellcolor{black!30}01100 & *00** & *00** & \cellcolor{black!30}01100 & *00*0\\
    \hline
    \cellcolor{black!30}01100 & *00** & *00** & ***0* & ****0\\
    \hline
    000** & \cellcolor{black!30}*01** & **0** & ***0* & ****0\\
    \hline
    000** & *0*** & **0** & ***0* & ****0\\
    \hline
\end{tabular}
\quad
\text{and }
\,
\Phi=
\begin{tabular}{|c|c|c|c|c|}
    \hline
    10000 & 01000 & 00100 & 00010 & 00001\\
    \hline
    \cellcolor{black!30}000** & *0*** & **0** & \cellcolor{black!30}*000* & ****0\\
    \hline
    \cellcolor{black!30}000** & *0*** & **0** & ***0* & ****0\\
    \hline
    0**** & \cellcolor{black!30}*00** & **0** & ***0* & ****0\\
    \hline
    0**** & *0*** & **0** & ***0* & ****0\\
    \hline
\end{tabular}\]
    We prove that the four grey entries in \(\Phi\) are pairwise orthogonal. By definition, states in the same row or column are orthogonal. For pairs of entries in a different row and column, it suffices to prove that the corresponding entries in \(\Psi\) are \emph{not} orthogonal, since the tensor products \(\psi_{ij}\otimes\phi_{ij}\) are orthogonal (by definition of MOQLS). The state \(\psi_{42}\) is not orthogonal to \(\psi_{21}\), \(\psi_{31}\) and \(\psi_{24}\) because their support overlaps in exactly one position.
    The states \(\psi_{24}\) and \(\psi_{31}\) are not orthogonal because they are both orthogonal to \(\psi_{21}\) and have the same support of size two (so in fact, \(\psi_{24}=\psi_{31}\) up to a phase factor). Thus, the four grey entries in \(\Phi\) are pairwise orthogonal, yielding a contradiction because they are contained in the 3-space spanned by \(\ket{1}\), \(\ket{4}\) and \(\ket{5}\).
    
    \textbf{Case 2: there is an entry of weight at least three.} Without loss of generality, that entry is \(\psi_{21}\)  and its pattern is of the form 0111*. We will derive a contradiction by looking at the second row of the orthogonal squares. By rule \#3, the state \(\phi_{21}\) must have pattern 00001 and therefore \(\psi_{21}\) has pattern 01110. By rule \#2 in \(\Phi\), the other entries in the second row have a zero in the last coordinate. We get:
\[\Psi=
\begin{tabular}{|c|c|c|c|c|}
    \hline
    10000 & 01000 & 00100 & 00010 & 00001\\
    \hline
    01110 & *0*** & **0** & ***0* & ****0\\
    \hline
    \vdots & \vdots & \vdots & \vdots & \vdots\\
    \hline
\end{tabular}
\quad
\text{and }
\,
\Phi=
\begin{tabular}{|c|c|c|c|c|}
    \hline
    10000 & 01000 & 00100 & 00010 & 00001\\
    \hline
    00001 & *0**0 & **0*0 & ***00 & ****0\\
    \hline
    \vdots & \vdots & \vdots & \vdots & \vdots\\
    \hline
\end{tabular}\]
    Among the states \(\psi_{2i}\) for \(i\in\{2,3,4\}\), there is at least one of which the support intersects \(\{2,3,4\}\) because otherwise all three would be contained in the \(2\)-space spanned by \(\ket{1}\) and \(\ket{5}\), contradicting the fact that they are pairwise orthonormal. Thus, without loss of generality, the support of \(\psi_{22}\) intersects \(\{2,3,4\}\). It is orthogonal to \(\psi_{21}\), so it has pattern *011*. This implies \(\phi_{22}=\ket{1}\) (pattern 10000), and after applying rule \#2 in \(\Phi\), we get:
\[\Psi=
\begin{tabular}{|c|c|c|c|c|}
    \hline
    10000 & 01000 & 00100 & 00010 & 00001\\
    \hline
    01110 & *011* & **0** & ***0* & ****0\\
    \hline
    \vdots & \vdots & \vdots & \vdots & \vdots\\
    \hline
\end{tabular}
\quad
\text{and }
\,
\Phi=
\begin{tabular}{|c|c|c|c|c|}
    \hline
    10000 & 01000 & 00100 & 00010 & 00001\\
    \hline
    00001 & 10000 & 0*0*0 & 0**00 & 0***0\\
    \hline
    \vdots & \vdots & \vdots & \vdots & \vdots\\
    \hline
\end{tabular}\]
    The support of \(\psi_{23}\) and \(\psi_{24}\) can intersect \(\{2,3,4\}\) in at most one element by rule \#3. So, in order for those states to be orthogonal to \(\psi_{21}\), their support must be disjoint from \(\{2,3,4\}\). In other words, they span the \(2\)-space that is also spanned by \(\ket{1}\) and \(\ket{5}\). Hence the three states \(\psi_{21}\), \(\psi_{22}\) and \(\psi_{25}\) span the orthogonal space, that is, the \(3\)-space spanned by \(\ket{2}\), \(\ket{3}\) and \(\ket{4}\). In order for \[\begin{bmatrix}1&1&1\\0&1&1\\ *&*&*\end{bmatrix}\] to be a unitary pattern, all *'s must be 1 (this is easiest when looking at the columns). On the other hand, the weight of \(\psi_{25}\) cannot be three because of rule \#3, a contradiction.
\end{proof}

\section{MOQLS of order six}

In this section, we prove Theorem~\ref{thm:no2MOQLS6}.

\subsection{We may assume that one of 2 MOQLS(6) is classical}

\setlength{\tabcolsep}{4pt}

\begin{lemma}\label{lem:weight22}
    If \(\Psi\) is a quantum Latin square without entries of weight three or more, then there exists a classical Latin square \(\Psi'\) such that whenever \(\Psi\) and \(\Phi\) are orthogonal, \(\Psi'\) and \(\Phi\) are orthogonal as well.
\end{lemma}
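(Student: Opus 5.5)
The plan is to build \(\Psi'\) entry by entry, replacing each entry \(\psi_{ij}\) by a standard basis vector \(\ket{k}\) with \(k\) in the support of \(\psi_{ij}\). The choice should guarantee one property:
\[\braket{\psi_{ij}}{\psi_{kl}}=0\ \Longrightarrow\ \braket{\psi'_{ij}}{\psi'_{kl}}=0 .\qquad(\ast)\]
Two things then follow quickly.

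\emph{\(\Psi'\) is a classical quantum Latin square.} Each row and each column of \(\Psi\) consists of \(n\) pairwise orthogonal states. By \((\ast)\), the corresponding entries of \(\Psi'\) are \(n\) pairwise orthogonal standard basis vectors, hence an orthonormal basis.

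\emph{\(\Psi'\) stays orthogonal to \(\Phi\).} Suppose \(\Psi\) and \(\Phi\) are orthogonal, and take \((i,j)\neq(k,l)\). Since \(\psi_{ij}\otimes\phi_{ij}\perp\psi_{kl}\otimes\phi_{kl}\), we have \(\braket{\psi_{ij}}{\psi_{kl}}=0\) or \(\braket{\phi_{ij}}{\phi_{kl}}=0\). In the first case \((\ast)\) gives \(\psi'_{ij}\perp\psi'_{kl}\). In either case \(\psi'_{ij}\otimes\phi_{ij}\perp\psi'_{kl}\otimes\phi_{kl}\). These are \(n^2\) orthonormal vectors, hence a basis.

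The choice of \(\Psi'\) is as follows.
\begin{itemize}
\item If \(\psi_{ij}\) has weight one, say \(\psi_{ij}\) is a phase times \(\ket{a}\), set \(\psi'_{ij}=\ket{a}\).
\item If \(\psi_{ij}\) has weight two with support \(\{a,b\}\), note that \(\psi_{ij}\) is a unit vector in the \(2\)-space \(\langle\ket{a},\ket{b}\rangle\). For each unordered pair \(\{a,b\}\), fix once and for all a map \(f_{ab}\) from the one-dimensional subspaces of \(\langle\ket{a},\ket{b}\rangle\) other than \(\langle\ket a\rangle,\langle\ket b\rangle\) to \(\{\ket{a},\ket{b}\}\), such that \(f_{ab}(L^\perp)\neq f_{ab}(L)\). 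This is possible because \(L\mapsto L^\perp\) is a fixed-point-free involution on these subspaces: in each pair \(\{L,L^\perp\}\), send one to \(\ket a\) and the other to \(\ket b\). Then set \(\psi'_{ij}=f_{ab}(\langle\psi_{ij}\rangle)\).
\end{itemize}
This depends only on the state up to phase, and it always picks a vector inside the support.

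The main step is checking \((\ast)\), which is a case analysis on the supports \(S,T\) of two orthogonal entries \(\psi,\chi\), using that all weights are at most two.
\begin{itemize}
\item If \(S\cap T=\emptyset\), the images lie in disjoint supports and so are orthogonal.
\item If \(S\neq T\) but \(S\cap T\neq\emptyset\), then because the weights are at most two, \(|S\cap T|=1\). A single common coordinate forces \(\braket{\psi}{\chi}\neq0\), so there is nothing to check.
\item If \(S=T\) has size one, the two states are equal up to phase and hence not orthogonal.
\item If \(S=T=\{a,b\}\), then \(\langle\chi\rangle=\langle\psi\rangle^\perp\) inside \(\langle\ket a,\ket b\rangle\). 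Both are genuinely of weight two, since a state of weight two cannot be \(\perp\) to anything spanned by a weight-one vector in that space other than the other basis vector, which has weight one. By construction \(f_{ab}\) sends them to different basis vectors.
\end{itemize}
The only potential obstacle is this last case. It is exactly why \(f_{ab}\) must be chosen compatibly with the orthogonal-complement involution, and why the hypothesis excluding weight three or more is needed: with three or more coordinates, orthogonality no longer forces such a rigid pairing.
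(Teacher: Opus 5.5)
Your proof is correct and is essentially the paper's argument. Both rest on the fact that two orthogonal entries of weight at most two have either disjoint or identical supports, and both send each orthogonal pair of lines in \(\langle\ket a,\ket b\rangle\) to \(\ket a\) and \(\ket b\); the only difference is that the paper does this iteratively (for one weight-two entry at a time it applies to every entry with support \(\{a,b\}\) the unitary on \(\langle\ket a,\ket b\rangle\) mapping that entry to \(\ket a\)), whereas you make all these choices at once through \(f_{ab}\) and verify the single property \((\ast)\).
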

\begin{proof}
    Let \(n\) be the order of \(\Psi\). We may assume that \(n\geq4\), otherwise the statement is trivial.
    We show that, as long as \(\Psi\) has entries of weight two, we can replace \(\Psi\) by a quantum Latin square with strictly less entries of weight two and no entries of weight three or more, while still being orthogonal to \(\Phi\). We can repeat this operation until all entries have weight one.
    
    Choose an entry \(\psi_{ij}\) of weight two, and assume without loss of generality that it has pattern 110\dots0. Let \(U\) be the unique unitary transformation that converts \(\psi_{ij}\) into \(\ket{1}\) and fixes \(\ket{3},\dots,\ket{n}\).
    Let \(\Psi'\) be the quantum Latin square obtained from \(\Psi\) by applying \(U\) to every entry with pattern 110\dots0, while keeping the other entries the same.
    First of all, the operation never increases the weight of any entry, and decreases the weight of \(\psi_{ij}\). Second, the operation preserves the orthogonality with other quantum Latin squares because whenever two entries in \(\Psi\) are orthogonal, the corresponding entries in \(\Psi'\) are orthogonal as well. The only case where this could go wrong is when one of two orthogonal entries has pattern 110\dots0, but then either the other entry has support \(\{1,2\}\) as well, or a support that is disjoint from it. 
\end{proof}

Note that the proof of Lemma~\ref{lem:weight22} no longer works when \(\Psi\) has weight three entries.

\begin{lemma}\label{lem:pattern3}
    If a \(6\times6\) unitary pattern has a row of weight three and no row of weight four or more, then either:
    \[(i) \begin{bmatrix}1&1&1&0&0&0\\1&1&1&0&0&0\\1&1&*&0&0&0\\&&\dots\end{bmatrix}\quad\text{or }\, (ii)\begin{bmatrix}1&1&1&0&0&0\\1&1&0&1&0&0\\1&0&1&1&0&0\\0&1&1&1&0&0\\&&\dots\end{bmatrix}\]
    \begin{enumerate}[(i)]
        \item there is another row of weight three with the same support and at least one more row of weight two or three whose support is contained in that of the given row of weight three, or
        \item there are four rows of weight three whose support is contained in a fixed set of size four, pairwise overlapping in two positions.
    \end{enumerate}
\end{lemma}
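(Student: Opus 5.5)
Without loss of generality the given row of weight three has support \(S=\{1,2,3\}\). Write \(T=\{4,5,6\}\) for the complement. The approach is to count dimensions by restricting all rows to the coordinates in \(S\), and then to use that distinct rows of a unitary matrix are orthogonal.

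\textbf{Step 1: at least three rows meet \(S\).} The rows \(v_1,\dots,v_6\) form an orthonormal basis, so their restrictions to \(S\), viewed as the columns of a \(3\times 6\) matrix, have rank three. Hence at least three rows, the given row \(r\) included, have support meeting \(S\). In fact every column indexed by \(S\) is a unit vector and the columns are pairwise orthogonal, so each column of \(S\) has at least two nonzero entries.

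\textbf{Step 2: a row meeting \(S\) meets it in at least two positions.} Let \(w\) be another row whose support meets \(S\). Since \(w\perp r\) and \(r\) is supported exactly on \(S\), the sum \(\sum_{k\in S}\bar r_k w_k\) must vanish. This forces \(|\operatorname{supp}(w)\cap S|\geq 2\), because a single nonzero term cannot cancel. As \(w\) has weight at most three, it meets \(T\) in at most one coordinate.

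\textbf{Step 3: case split.} We split on how many rows have support contained in \(S\).
\begin{enumerate}[(a)]
\item Suppose some other row \(w\) has support equal to \(S\). Step 1 gives a third row \(u\) meeting \(S\). By Step 2, \(u\) meets \(S\) in at least two positions. If \(u\) also had an entry in \(T\), it would be orthogonal to \(r\) and \(w\) only through its restriction to \(S\), and \(u|_S\) has weight two. We then aim to show that the number of rows whose restriction lies in \(S\) and is nonzero is at least three with support in \(S\). The tool is the column orthogonality between \(S\) and \(T\): a row \(u\) with \(u|_T=e_t\) gives column \(t\) a single entry in these rows, and this has to be balanced. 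The cleanest route is that the restrictions to \(S\) of all rows meeting \(S\) span \(\CC^3\) with Gram matrix \(\id_3\) when summed as columns. This yields case (i).
\item Suppose no other row has support \(S\). Then every other row meeting \(S\) has pattern with two entries in \(S\) and one in \(T\). These rows pairwise overlap in \(S\) in at least one position. Pairwise orthogonality of \(r\) with each of them, and of them with each other, forces their \(T\)-coordinates to coincide; if two had different \(T\)-coordinates, the overlap of their supports would be a single position and the inner product would be nonzero. So they all live in \(S\cup\{t\}\), which is a set of size four. The four vectors \(r\) and three rows \(11\!*\!0\)-type, with supports in a 4-set and pairwise sharing exactly two positions, give case (ii). We need four such rows: the column \(t\) restricted to these rows must be a unit vector, and the \(S\)-columns need rank three after removing \(r\)'s contribution. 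A counting of the weight-two restrictions \(\{1,2\},\{1,3\},\{2,3\}\) shows that all three must occur.
\end{enumerate}

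\textbf{Main obstacle.} The delicate part is sub-case (a) with mixed rows, that is, excluding configurations where the third row meeting \(S\) leaks into \(T\) while still failing the conclusion of (i). There I would apply the row-versus-row overlap argument of Step 2 to \(T\): a leaked coordinate \(t\in T\) appears in exactly one row of the block, and must therefore be cancelled by some other row meeting \(S\). Using column orthogonality on \(t\) together with the weight-at-most-three bound, I expect this to force either another row contained in \(S\), which gives (i), or the structure of (b).
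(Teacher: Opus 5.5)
Your sub-case (a) ends with ``This yields case (i)''. That is exactly the step that fails, and it cannot be repaired, because the conclusion is false there. The frame identity you mention, $\sum_i x_ix_i^*=\id_3$ with $x_i$ the restriction of row $i$ to $S$, shows what can happen. After removing $r$ and $w$, every remaining $x_i$ equals $c_iy$, where $y$ is the unit vector orthogonal to $r|_S$ and $w|_S$, and $\sum_i|c_i|^2=1$. There are then three possibilities:
\begin{enumerate}[(1)]
\item If $y$ has weight three, a single row carries it and has support $S$, which gives (i).
\item If $y$ has weight two and a single row carries it, that row has weight two inside $S$, which also gives (i).
\item If $y$ has weight two, its mass can instead be split between two rows that leak into the same $t\in T$ and cancel each other in column $t$.
\end{enumerate}
For instance, the real orthogonal matrix
\[
\begin{bmatrix}
\tfrac{1}{\sqrt3}&\tfrac{1}{\sqrt3}&\tfrac{1}{\sqrt3}&0&0&0\\
\tfrac{1}{\sqrt6}&\tfrac{1}{\sqrt6}&-\tfrac{2}{\sqrt6}&0&0&0\\
\tfrac12&-\tfrac12&0&\tfrac{1}{\sqrt2}&0&0\\
\tfrac12&-\tfrac12&0&-\tfrac{1}{\sqrt2}&0&0\\
0&0&0&0&1&0\\
0&0&0&0&0&1
\end{bmatrix}
\]
has pattern rows $111000$, $111000$, $110100$, $110100$, $000010$, $000001$. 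It has a row of weight three and none of weight four or more. Yet no third row of weight two or three has support inside $\{1,2,3\}$ (or inside $\{1,2,4\}$), so (i) fails. The four weight-three rows have only two distinct supports, with rows $1$ and $2$ sharing three positions, so (ii) as pictured fails too. This is precisely the configuration your last paragraph expects to exclude: the leaked coordinate $t=4$ is not carried by one row of the block but by two twin rows that cancel each other.

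The paper's proof has the same hole: ``if there is another row with the same support, we are in case (i)'' is asserted without argument. So the lemma itself needs a third alternative: two rows with support $S$ and two rows with a common support $\{a,b,t\}$, where $\{a,b\}\subset S$ and $t\notin S$. Your frame identity proves this corrected trichotomy in sub-case (a). Leaking rows must share $t$ because their $S$-parts are parallel, and at most two of them fit in $\operatorname{span}(y,e_t)$.

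The new case is harmless for a column of 2 MOQLS in standard form, where rule \#3 excludes it. There the two $\Phi$-entries paired with the support-$S$ entries span $\operatorname{span}(e_t,e_s)$, with $s$ the remaining coordinate. This forces both $\Phi$-entries paired with the $\{a,b,t\}$-entries to be multiples of $e_c$, where $\{c\}=S\setminus\{a,b\}$, which is impossible since they are orthogonal.

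Sub-case (b) also needs three repairs:
\begin{enumerate}[(1)]
\item You must first rule out a weight-two row $z$ with support inside $S$. The unit vector orthogonal to $r|_S$ and $z|_S$ has full support, so some other row would then have support $S$.
\item Your reason for a common $t$ is wrong when two rows have the same $S$-part, since they then overlap in two positions. The correct reason is that both $S$-parts are orthogonal to $r$ inside $\operatorname{span}(e_a,e_b)$, hence parallel. Their inner product on $S$ is therefore nonzero and must be cancelled in $T$.
\item The count of four rows should go as follows. At most three leaking rows fit in $\operatorname{span}(e_1,e_2,e_3,e_t)$ together with $r$. Two are not enough, since the $3\times 3$ block on rows $r,u,u'$ and columns $S$ would then be unitary, forbidding leakage. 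Finally, orthogonality of any two columns $a,b\in S$ requires a row other than $r$ containing both, so every pair $\{a,b\}$ occurs.
\end{enumerate}
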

\begin{proof}
    Suppose without loss of generality that the first row is 111000. There must be at least two other rows whose support overlaps with that of the first row, because otherwise there would be four states in the same 3-space spanned by \(\ket{4}\), \(\ket{5}\) and \(\ket{6}\). Whenever the supports of two orthogonal states overlap, they must overlap in at least two elements. Therefore, if there is another row with the same support, we are in case (i). If not, there is a row with support \(\{1,2,4\}\) (up to permutations of \(\{1,\dots,6\}\) that fix \(\{1,2,3\}\)):
    \[\begin{bmatrix}1&1&1&0&0&0\\1&1&0&1&0&0\\&&\dots\end{bmatrix}\]
    Looking at the first and third column, there must be a row with ones in those columns in order for the supports of the columns to overlap in two or more positions. The support of this row must overlap in two or more positions with \(\{1,2,4\}\), so it is 101100. Similarly, there is a row with support \(\{2,3,4\}\) and we are in case (ii).
\end{proof}

\begin{lemma}\label{lem:pattern4}
    If a \(6\times6\) unitary pattern has a row of weight four and no row of weight five or more, then:
    \[(i) \begin{bmatrix}1&1&1&1&0&0\\1&1&1&1&0&0\\1&1&0&0&0&0\\&&\dots\end{bmatrix}\quad\text{or }\, (ii)\left[\begin{array}{cccccc}1&1&1&1&0&0\\ *&*&*&*&*&*\\ *&*&*&*&*&*\\ *&*&*&*&*&*\\&&\dots\end{array}\right]\begin{array}{c}\\\leftarrow\textnormal{weight}\geq3\\\leftarrow\textnormal{weight}\geq3\\\leftarrow\textnormal{weight}\geq3\\\\\end{array}\]
    \begin{enumerate}[(i)]
        \item there is another row of weight four with the same support and a row of weight two whose support is contained in the support of the given row of weight four, or
        \item there are four rows of weight at least three.
    \end{enumerate}
    In both cases, the unitary pattern has at least four rows of weight two or more.
\end{lemma}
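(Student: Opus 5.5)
We argue in the same style as Lemma~\ref{lem:pattern3}. We repeatedly use the basic fact that two orthogonal rows (or columns) of a unitary matrix cannot have supports meeting in exactly one position. By permuting columns we may assume the given row of weight four is \(111100\). Since the remaining five rows are orthonormal, at most two of them can lie in the \(2\)-space spanned by \(\ket{5},\ket{6}\). Hence at least three other rows have support meeting \(\{1,2,3,4\}\), and each of these meets it in at least two positions.

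\textbf{Case A: some other row has support exactly \(\{1,2,3,4\}\).} We aim for case (i). Restricted to columns \(1,\dots,4\), the rows meeting \(\{1,2,3,4\}\) must span the space spanned by \(\ket{1},\dots,\ket{4}\) after projection, and we look at the columns. Columns \(1\) and \(2\) both have ones in the two weight-four rows. If every row meeting \(\{1,\dots,4\}\) had weight at least three, we would already have four rows of weight at least three (the two weight-four rows together with at least two further rows), which is case (ii). So we may assume some row \(r\) has weight two with support inside \(\{1,\dots,4\}\); a weight-two row meeting the set meets it in two positions, so its support is contained in it. Together with the two weight-four rows this gives case (i).

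\textbf{Case B: no other row has support \(\{1,2,3,4\}\).} We want case (ii), so we must rule out having at most three rows of weight at least three. Suppose instead that at most two rows other than the given one have weight at least three. Then there is a row of weight two with support in \(\{1,2,3,4\}\), say \(\{1,2\}\). Consider columns \(3\) and \(1\). Both have a one in row \(1\), so they must share another row with ones in both columns. The same holds for columns \(3\) and \(2\), for \(4\) and \(1\), and for \(4\) and \(2\). Weight-two rows inside \(\{1,2,3,4\}\) that are orthogonal to \(\{1,2\}\) have disjoint supports, so they can only be \(\{3,4\}\). Hence these four column pairs must be covered by rows of weight three or more, which contain at least three of the four indices (or, for weight \(\le 4\), lie inside the set or leak into \(\{5,6\}\)). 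We then do a small case check. Two rows, each meeting \(\{1,2,3,4\}\) in two or three elements, must cover the pairs \(13,23,14,24\) while also being orthogonal to \(\{1,2\}\), to \(111100\), and to each other. The orthogonality to \(\{1,2\}\) forces each such row to contain both or neither of \(1,2\). We expect this to force a third row of weight at least three, giving a contradiction with our assumption. This case check is the main obstacle, and I would carry it out by enumerating the intersections of the rows with \(\{1,2,3,4\}\).

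Finally, the closing claim follows immediately in both cases. In case (i) the two weight-four rows and the weight-two row give three rows of weight at least two, and a fourth comes from the fact that at least three rows besides the first meet \(\{1,2,3,4\}\). In case (ii) we have the given row of weight four and three further rows of weight at least three.
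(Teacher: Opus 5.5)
\textbf{Gap in Case B.} Your Case A is correct, and so is your closing count: the given row plus the at least three rows meeting \(\{1,2,3,4\}\) already gives four rows of weight at least two, which is a bit cleaner than the paper's remark. The gap is Case B, which is the substantive part of the lemma. You leave it as ``we expect'', and the case check as you set it up cannot succeed. The constraints you list are: cover the column pairs \(13,23,14,24\), contain both or neither of \(1,2\), and be orthogonal to 111100, to 110000 and to each other. Just two rows already meet all of them, for instance 111010 and 110101. So nothing in your setup forces a third row of weight at least three. (A minor point: a weight-two row inside \(\{1,2,3,4\}\) orthogonal to 110000 may also have support \(\{1,2\}\), not only \(\{3,4\}\). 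This is harmless.)

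\textbf{The missing step: columns 3 and 4.} You never use the column pair \((3,4)\), and it is what closes the argument.
A row covering the pair \(13\) contains \(1\), hence also \(2\), since its support meets \(\{1,2\}\) in zero or two positions. It cannot contain \(4\): with at most four ones it would then have support \(\{1,2,3,4\}\), which Case B excludes. So it is 1110**. Likewise the pair \(14\) yields a row 1101**.

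Columns \(3\) and \(4\) both have a one in the first row, so some further row contains both \(3\) and \(4\). By the same reasoning it contains neither \(1\) nor \(2\), so it is 0011**. It cannot be 001100, because that would meet 1110** in exactly one position. This gives three rows of weight at least three besides the given one, contradicting your assumption and giving case (ii).

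This is exactly the paper's chain of column pairs \((1,3)\), \((1,4)\), \((3,4)\). The paper avoids your A/B split by stopping in case (i) whenever the row forced by a column pair turns out to be 111100.
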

\begin{proof}
    We may assume that the first row is 111100.
    There are five more rows and at most two of them can be of the form 0000**, so at least three rows have a support that overlaps with that of 111100 in at least one element. If all three have weight three or more, we are in case (ii). So suppose that one of them has weight one or two. By orthogonality with the first row, its weight must be two. Without loss of generality, we have
    \[\begin{bmatrix}1&1&1&1&0&0\\1&1&0&0&0&0\\&&\dots\end{bmatrix}\]
    The first and third column must correspond to orthogonal columns in the unitary matrix, so there is another row of the form 1*1***. It is orthogonal to the second row, so it is of the form 111***. If the fourth coordinate is a one, we are in case (i). If not, then it looks like
    \[\begin{bmatrix}1&1&1&1&0&0\\1&1&0&0&0&0\\1&1&1&0&*&*\\&&\dots\end{bmatrix}\]
    and we can apply a similar argument with the first and fourth column to show the existence of a row of the form 11*1**. Again, if its third coordinate is one, we are in case (i). If not, we have
    \[\begin{bmatrix}1&1&1&1&0&0\\1&1&0&0&0&0\\1&1&1&0&*&*\\1&1&0&1&*&*\\&&\dots\end{bmatrix}\]
    Applying the same argument on columns three and four, we see that there must be a row of the form **11**. It cannot be 001100, so its weight is at least three and we are in case (ii). 
    
    If there would be three rows of weight one, then all other rows would have weight at most three.
\end{proof}

%We can say more when the unitary pattern comes from a column of one of two orthogonal quantum Latin squares:

\begin{lemma}\label{lem:column4}
    If one of 2 MOQLS(6) in standard form has an entry of weight four, then the other square only has entries of weight one in that corresponding column.
\end{lemma}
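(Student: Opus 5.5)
We will prove the statement by a pattern analysis of the column containing the weight-four entry. Let \(\Psi\) and \(\Phi\) be the two squares in standard form (rule \#1). After permuting rows, columns and coordinates, we may assume the weight-four entry is \(\psi_{21}\) in column \(1\). By rule \#2 it has a zero in coordinate \(1\), so we may take its pattern to be \(011110\).

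The first step pins down \(\phi_{21}\). Its coordinate \(1\) vanishes because \(\phi_{11}=\ket{1}\) lies in the same column. By rule \#3 its coordinates \(2,\dots,5\) vanish as well. Hence \(\phi_{21}=\ket{6}\) up to phase. By rule \#2, every \(\phi_{k1}\) with \(k\in\{3,\dots,6\}\) then has zeros in coordinates \(1\) and \(6\). So these four states form an orthonormal basis of the \(4\)-space spanned by \(\ket{2},\dots,\ket{5}\), and their patterns form a \(4\times4\) unitary pattern.

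The same reasoning restricts \(\Psi\). No entry of column \(1\) of \(\Psi\) can have weight five: if \(\psi_{k1}\) had support \(\{2,\dots,6\}\), rule \#3 together with the zero in coordinate \(1\) would force \(\phi_{k1}=0\). Hence the column of \(\Psi\) is a \(6\times6\) unitary pattern with a row of weight four and no row of weight five, and Lemma~\ref{lem:pattern4} applies.

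Next we rule out case (i) of Lemma~\ref{lem:pattern4}. There, some other \(\psi_{k1}\) would have support \(\{2,3,4,5\}\). Rule \#3 would then give \(\phi_{k1}=\ket{6}=\phi_{21}\), contradicting the orthogonality of entries in the same column. So we are in case (ii): at least three of \(\psi_{31},\dots,\psi_{61}\) have weight at least three. Each \(\psi_{k1}\) with \(k\geq3\) is orthogonal to \(\psi_{21}\), so its support meets \(\{2,\dots,5\}\) in \(0\) or at least \(2\) elements. A weight-three entry lives inside \(\{2,\dots,6\}\), so its support meets \(\{2,\dots,5\}\) in at least two elements. Whenever that intersection has size at least three, rule \#3 confines \(\phi_{k1}\) to at most one coordinate of \(\{2,\dots,5\}\), so \(\phi_{k1}\) has weight one.

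The main obstacle is the remaining configuration: some \(\psi_{k1}\) has pattern "\(6\) plus exactly two of \(\{2,\dots,5\}\)". Then \(\phi_{k1}\) could a priori have weight two, supported on the other two coordinates of \(\{2,\dots,5\}\). We would first use the column structure of \(\Psi\). Coordinate \(6\) vanishes on \(\psi_{11}\) and \(\psi_{21}\), so the column vector of coordinate \(6\) lives on rows \(3,\dots,6\). Orthogonality of coordinate columns in a unitary pattern, which is how the earlier lemmas were proved, then forces several rows to contain coordinate \(6\) together with overlapping pairs from \(\{2,\dots,5\}\). Next, we would apply the device from the proof of Theorem~\ref{thm:2MOQLS5}: whenever \(\psi_{k1}\) and \(\psi_{l1}\) are not orthogonal, \(\phi_{k1}\perp\phi_{l1}\). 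Feeding these constraints into the \(4\times4\) unitary pattern of \(\phi_{31},\dots,\phi_{61}\) and applying rule \#3 row by row, we expect every weight-two option for \(\phi_{k1}\) to produce either two equal weight-one states or too many orthogonal vectors in a small subspace, as in Theorem~\ref{thm:2MOQLS4}. The case split would be by how many of \(\psi_{31},\dots,\psi_{61}\) contain coordinate \(6\), which is at least two by column orthogonality. The weight-at-most-two rows of \(\Psi\) would be handled directly via rule \#3.
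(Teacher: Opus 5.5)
Your preliminary reductions are correct. These are: $\phi_{21}=\ket{6}$; $\phi_{31},\dots,\phi_{61}$ form an orthonormal basis of $\langle\ket{2},\dots,\ket{5}\rangle$; case (i) of Lemma~\ref{lem:pattern4} is excluded; and any $\psi_{k1}$ whose support meets $\{2,\dots,5\}$ in three coordinates forces $\phi_{k1}$ to have weight one.

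The gap is the configuration you call the main obstacle: $\psi_{k1}$ supported on $\{6,a,b\}$, with $\phi_{k1}$ possibly supported on $\{c,d\}$, where $\{a,b,c,d\}=\{2,3,4,5\}$. That configuration is the whole content of the lemma, and you only describe what you \emph{expect} to happen there. The tool you propose cannot work. The implication $\psi_{k1}\not\perp\psi_{l1}\Rightarrow\phi_{k1}\perp\phi_{l1}$ never fires inside one column, because entries in the same column are orthogonal in both squares anyway. In Theorem~\ref{thm:2MOQLS5} it was used for entries in different rows \emph{and} different columns. Two side claims are also wrong. Column orthogonality does not force two rows of $\Psi$ to contain coordinate~6, since a single row equal to $\ket{6}$ is consistent with it. And rule~\#3 says nothing about $\phi_{k1}$ when $\psi_{k1}=\ket{6}$, and leaves $\phi_{k1}$ free on two coordinates when $\psi_{k1}$ has weight two, so those rows are not handled directly either. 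As written, the argument stops exactly at the decisive step, and the detour through Lemma~\ref{lem:pattern4} does not bring you closer to it.

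The missing idea is to play the unitarity of the two column patterns against each other, starting from $\Phi$ rather than from $\Psi$.
\begin{itemize}
\item Suppose some $\phi_{k1}$ has ones in coordinates $c$ and $d$. Coordinate columns $c$ and $d$ of the $\Phi$-column vanish in rows~1 and~2 and overlap in row~$k$. Being orthogonal, they must overlap in a second row~$l$.
\item By rule~\#3, $\psi_{k1}$ and $\psi_{l1}$ lie in $\langle\ket{a},\ket{b},\ket{6}\rangle$. Since they are orthogonal to each other and to $\psi_{21}$, one of them, say $\psi_{k1}$, has ones at both $a$ and $b$. Hence $\phi_{k1}$ has support exactly $\{c,d\}$.
\item Coordinates $c$ and $a$ of the $\Psi$-column overlap in row~2, while rows $1,k,l$ vanish at $c$. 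So a further row~$m$ has ones at $c$ and $a$.
\item Then $\phi_{m1}$ is supported in $\{b,d\}$ and orthogonal to $\phi_{k1}$, hence $\phi_{m1}=\ket{b}$. By rule~\#3, $\psi_{m1}$ vanishes at~$b$.
\item The projections of $\psi_{21}$ and $\psi_{m1}$ onto $\langle\ket{a},\ket{b},\ket{6}\rangle$ are linearly independent. So the orthonormal pair $\psi_{k1},\psi_{l1}$ would have to lie in a one-dimensional subspace, a contradiction.
\end{itemize}
This is the paper's argument. It does not use Lemma~\ref{lem:pattern4} at all, and it treats every possible weight of $\psi_{k1}$ at once.
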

\begin{proof}
    We assume that the squares are in standard form, so up to symmetry, the first row of the pattern is 100000. We can also assume the second row to be 011110. By rule \#3, the second row in the orthogonal pattern is 000001. Suppose by contradiction that the latter has a row of the form 011**0. We may assume this to be the third row. Therefore, the second and third column are both of the form 001***. Since the matrix is unitary, the second and third columns are orthogonal, so they have a nonzero entry somewhere in another position. In other words, there is another row of the form *11***. Hence the orthogonal unitary patterns look like
    \[\begin{bmatrix}1&0&0&0&0&0\\0&1&1&1&1&0\\0&0&0&*&*&*\\0&0&0&*&*&*\\&&\dots\end{bmatrix}\text{ and }\begin{bmatrix}1&0&0&0&0&0\\0&0&0&0&0&1\\0&1&1&*&*&0\\0&1&1&*&*&0\\&&\vdots\end{bmatrix}.\]
    The third and fourth row in the left pattern cannot both be 000001, so one of them is of the form 00011*. The second and fourth column in the left pattern must be orthogonal, so there is another row of the form 01*1**. We get
    \[\begin{bmatrix}1&0&0&0&0&0\\0&1&1&1&1&0\\0&0&0&1&1&*\\0&0&0&*&*&*\\0&1&*&1&*&*\\&&\dots\end{bmatrix}\text{ and }\begin{bmatrix}1&0&0&0&0&0\\0&0&0&0&0&1\\0&1&1&0&0&0\\0&1&1&*&*&0\\0&0&*&0&*&0\\&&\vdots\end{bmatrix}.\]
    For the pattern on the right to be unitary, the fifth row must be 000010. Thus, on the left, we get a zero in position \((5,5)\). The second and fifth row in the left unitary matrix are different and nonzero in the last three positions. This implies that the third and fourth row in the unitary matrix are equal, up to a phase factor, because they are determined as the vector with zeroes on the first three positions that is perpendicular to both rows, a contradiction.
\end{proof}

\begin{lemma}\label{lem:column3}
    If one of 2 MOQLS(6) in standard form has an entry of weight three, then the other square only has entries of weight one or two in that corresponding column.
\end{lemma}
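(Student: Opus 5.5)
We argue by contradiction, following the same approach as Lemma~\ref{lem:column4}. Assume the squares \(\Psi,\Phi\) are in standard form and the column in question is the first one. Then \(\psi_{11}=\phi_{11}=\ket{1}\), and the remaining entries of the column in both squares lie in the \(5\)-space spanned by \(\ket{2},\dots,\ket{6}\). By rule~\#3, for every \(i\geq2\) the supports of \(\psi_{i1}\) and \(\phi_{i1}\) are disjoint subsets of \(\{2,\dots,6\}\).

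First we reduce to the case where both columns have maximal weight exactly three.
\begin{itemize}
\item If the column of \(\Psi\) had an entry of weight four, Lemma~\ref{lem:column4} would make every entry of \(\Phi\) in that column weight one, and we would be done. So we may assume \(\Psi\)'s column has no entry of weight four or more. Weight five or six is impossible anyway: such an entry would have support \(\{2,\dots,6\}\), forcing the corresponding \(\phi_{i1}\) to be zero.
\item Suppose, for contradiction, that \(\Phi\) has an entry of weight at least three in this column. By the same argument with the roles swapped, it has weight exactly three. The reason is that weight four in \(\Phi\) would force all of \(\Psi\)'s column to have weight one, contradicting the weight-three entry of \(\Psi\).
\end{itemize}
Hence both columns are \(6\times6\) unitary patterns with a row of weight three and none of weight four or more. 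Lemma~\ref{lem:pattern3} therefore applies to each of them.

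\textbf{Case~(i) of Lemma~\ref{lem:pattern3}.} We apply it to \(\Psi\)'s column, say with weight-three support \(\{2,3,4\}\). Then three of the entries \(\psi_{i1}\) (\(i\geq2\)) lie in \(\mathrm{span}\{\ket2,\ket3,\ket4\}\). Being orthonormal, they span it. By rule~\#3, the corresponding three \(\phi_{i1}\) have supports disjoint from \(\{1,2,3,4\}\), so they are three orthonormal vectors in \(\mathrm{span}\{\ket5,\ket6\}\), which is impossible. By symmetry, case~(i) is also impossible for \(\Phi\)'s column.

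\textbf{Case~(ii) for both columns.} In \(\Psi\)'s column there are four rows \(i\geq2\) whose supports are \(3\)-subsets of a \(4\)-set \(T\subseteq\{2,\dots,6\}\). Let \(t\) be the remaining element of \(\{2,\dots,6\}\). The four states span \(\mathrm{span}\{\ket{k}:k\in T\}\), so the fifth row satisfies \(\psi_{i_01}=\ket{t}\) up to phase.
\begin{itemize}
\item For each of the four \(T\)-rows, rule~\#3 confines \(\phi_{i1}\) to a support inside the \(2\)-set \(\{2,\dots,6\}\setminus\operatorname{supp}\psi_{i1}\). So these four entries of \(\Phi\) have weight at most two.
\item Consequently, \(\Phi\)'s column has at most one row of weight three or more, namely row \(i_0\).
\item Case~(ii) of Lemma~\ref{lem:pattern3} for \(\Phi\) requires four rows of weight three, a contradiction.
\end{itemize}

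The step I expect to need the most care is the bookkeeping that Lemma~\ref{lem:pattern3} applies verbatim to a column. This is fine, since the column, including the entry \(\ket1\) in row one, is a \(6\times6\) unitary pattern. The rows supplied by the lemma must also be among rows \(2,\dots,6\); this holds automatically, because the first row \(100000\) has weight one and is disjoint from all other supports. Once that is checked, the two cases reduce to the dimension counts above.
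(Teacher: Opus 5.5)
Your reduction to the situation where both columns have maximal weight exactly three is fine, and so is your case (ii) count. The gap is in case (i). Lemma~\ref{lem:pattern3}(i) gives two rows with support exactly $\{2,3,4\}$. The third row, however, is only required to have support \emph{contained in} $\{2,3,4\}$, and it may have weight two (the $*$ in $011{*}00$). When the $*$ is $1$, your dimension count does give a contradiction, so weight two is in fact the only option that survives. For such a row, say $011000$, rule~\#3 only confines its partner $\phi_{i1}$ to $\{4,5,6\}$, not to $\{5,6\}$. So only two orthonormal vectors are forced into $\operatorname{span}\{\ket{5},\ket{6}\}$, and there is no contradiction. Your claim that all three partners avoid $\{1,2,3,4\}$ is false in general, and ``by symmetry, case (i) is impossible for $\Phi$'' falls with it.

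The missing step is the one the paper uses. The two partners of the weight-three rows are orthonormal in $\operatorname{span}\{\ket{5},\ket{6}\}$, hence span it, so every other entry of $\Phi$'s column vanishes in coordinates $5$ and $6$. The partner of the weight-two row is then a single basis vector ($\ket{4}$ in the example). The remaining two entries are orthogonal to $\ket{1}$, $\ket{4}$, $\ket{5}$ and $\ket{6}$, so they have weight at most two. Thus $\Phi$'s column has no entry of weight three, which contradicts your hypothesis. With the roles swapped, the same argument rules out case (i) for $\Phi$.

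In case (ii), your observation that $\Phi$'s column has at most one row of weight three already excludes both cases of Lemma~\ref{lem:pattern3} for $\Phi$, since case (i) also needs two such rows. With this repair your argument matches the paper's, which reads off the pattern of the other column directly from the two possible forms of $\Psi$'s column rather than arguing by contradiction on both columns.
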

\begin{proof}
    We assume the squares to be in standard form, so up to symmetry, the first row of the pattern is 100000.
    If the column has an entry of weight four, then we are done by Lemma~\ref{lem:column4}.
    So suppose all weights in that column are at most three. According to Lemma~\ref{lem:pattern3}, the pattern of that column is either
    \[\begin{bmatrix}1&0&0&0&0&0\\0&1&1&1&0&0\\0&1&1&1&0&0\\0&1&1&*&0&0\\0&0&0&0&*&*\\0&0&0&0&*&*\end{bmatrix}\text{ or }\begin{bmatrix}1&0&0&0&0&0\\0&1&1&1&0&0\\0&1&1&0&1&0\\0&1&0&1&1&0\\0&0&1&1&1&0\\0&0&0&0&0&1\end{bmatrix}.\]
    Following rule \#3, the unitary pattern of the corresponding column in the orthogonal quantum Latin square is either
    \[\begin{bmatrix}1&0&0&0&0&0\\0&0&0&0&*&*\\0&0&0&0&*&*\\0&0&0&*&*&*\\0&*&*&*&*&*\\0&*&*&*&*&*\end{bmatrix}\text{ or }\begin{bmatrix}1&0&0&0&0&0\\0&0&0&0&*&*\\0&0&0&*&0&*\\0&0&*&0&0&*\\0&*&0&0&0&*\\0&*&*&*&*&0\end{bmatrix}.\]
    In the first case, the second and third row in the unitary matrix are a basis of the space that is spanned by \(\ket{5}\) and \(\ket{6}\), so the other rows have zeroes on the fifth and sixth position. Hence, the fourth row is 000100 and the last two rows have a pattern of the form 0**000. In the second case, the only way to complete this to a unitary pattern is as a permutation matrix. In both cases, all rows have weight at most two.
\end{proof}

\begin{lemma}\label{lem:notboth4}
    If one of 2 MOQLS(6) in standard form has an entry of weight four, then the other square has no entries of weight four. Moreover, the other square has at most three columns containing an entry of weight three.
\end{lemma}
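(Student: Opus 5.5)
The plan is to fix a weight-four entry of \(\Psi\) and push its consequences along its column using the lemmas already proved. First observe that no entry below the first row has weight five or more. Such an entry \(\psi_{kj}\) is orthogonal to \(\psi_{1j}=\ket{j}\), so weight five means support \(\{1,\dots,6\}\setminus\{j\}\). Rule \#3 then forces \(\phi_{kj}=\ket{j}\), contradicting orthogonality with \(\phi_{1j}=\ket{j}\). Hence Lemma~\ref{lem:pattern4} applies to every row and column containing a weight-four entry.

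Let \(\psi_{ij}\) have weight four. By Lemma~\ref{lem:column4}, column \(j\) of \(\Phi\) is a permutation pattern. By Lemma~\ref{lem:pattern4}, column \(j\) of \(\Psi\) has at least four entries of weight at least two, all in rows \(2,\dots,6\).

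For the first claim, suppose \(\phi_{kl}\) also has weight four. Then \(l\neq j\). Symmetrically, column \(l\) of \(\Psi\) is a permutation pattern and column \(l\) of \(\Phi\) has at least four entries of weight at least two among rows \(2,\dots,6\). These two sets of four rows inside a set of five rows share at least three rows \(r\). For each such \(r\) we have the following in row \(r\):
\begin{itemize}
\item \(\psi_{rj}\) has weight at least two and \(\phi_{rj}\) has weight one;
\item \(\phi_{rl}\) has weight at least two and \(\psi_{rl}\) has weight one.
\end{itemize}
Next use the explicit shapes in Lemma~\ref{lem:pattern4}. In case (i) there are two equal weight-four supports and a weight-two support inside them. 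In case (ii) there are four rows of weight at least three. By rule \#3, each heavy \(\psi_{rj}\) confines \(\phi_{rj}\) to the complement of its support. Since column \(j\) of \(\Phi\) is a permutation pattern, this pins down the weight-one entries \(\phi_{rj}\) almost completely.

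Then look at rows \(r\) as unitary patterns of both squares. The single coordinate \(\phi_{rj}\) is forbidden for all other entries of row \(r\) of \(\Phi\) by rule \#2. The single coordinate \(\psi_{rl}\) is forbidden in row \(r\) of \(\Psi\). I would compare two such rows \(r,r'\) and run the argument of Lemma~\ref{lem:column4} (or of Case~2 in Theorem~\ref{thm:2MOQLS5}). The aim is to force two heavy entries in the same column to be determined, up to phase, as the same vector orthogonal to a common subspace, which is a contradiction. This case analysis, using the two shapes of Lemma~\ref{lem:pattern4} on each side, is where I expect the main difficulty. I would first try the case where both columns are in case (i) of Lemma~\ref{lem:pattern4}, since there the supports are most rigid.

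For the second claim, every column \(l\) of \(\Phi\) containing a weight-three entry has, by Lemma~\ref{lem:pattern3}, at least three entries of weight at least two. By Lemma~\ref{lem:column3}, all entries in column \(l\) of \(\Psi\) have weight at most two. I would double count heavy entries (weight at least two) row by row over rows \(2,\dots,6\). Column \(j\) contributes at least four heavy \(\Psi\)-entries whose \(\Phi\)-partners have weight one. Each weight-three column of \(\Phi\) contributes at least three heavy \(\Phi\)-entries. Rule \#3 together with rule \#2 bounds how many heavy entries of \(\Psi\) and \(\Phi\) a single row can hold. Concretely, in a row \(r\) where \(\psi_{rj}\) has support \(S\) of size three or four, the entries of row \(r\) of \(\Phi\) are restricted by \(S\) and by \(\phi_{rj}\). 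Assuming four such columns of \(\Phi\), pigeonhole gives a row carrying a heavy \(\psi_{rj}\) and at least three heavy \(\Phi\)-entries. I expect this to contradict the unitary pattern of that row via Lemma~\ref{lem:pattern3}.
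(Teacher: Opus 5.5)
Your preliminary facts are correct: no entry has weight five, column $j$ of $\Phi$ is a permutation pattern, and the two sets of heavy rows share at least three rows. The proposal has a genuine gap, though, because it stops exactly where the proof has to happen.

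\textbf{First claim.} You only announce a case analysis over the shapes of Lemma~\ref{lem:pattern4} in columns $j$ and $l$. Nothing you have set up is contradictory yet: a shared row $r$ with $\psi_{rj}$ heavy, $\phi_{rj}$ of weight one, $\phi_{rl}$ heavy and $\psi_{rl}$ of weight one is consistent on its own. The missing idea is a global count, followed by applying Lemma~\ref{lem:pattern4} to a \emph{row}.

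By Lemmas~\ref{lem:column3} and~\ref{lem:column4}, no column contains an entry of weight at least three in both squares. So if both squares had weight-four entries, one of them, say $\Psi$, would have at most three columns containing entries of weight at least three. Case (ii) of Lemma~\ref{lem:pattern4} needs four such entries in one row, hence in four columns. Therefore the row of $\Psi$ through a weight-four entry is in case (i). After relabelling, $\psi_{21},\psi_{22}$ have pattern 001111, $\psi_{23}$ has pattern 000011, and some other entry, say $\psi_{33}$, has ones in coordinates 5 and 6.

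If $\psi_{33}$ has weight two:
\begin{enumerate}
\item Rules \#2 and \#3 give $\phi_{21}=\ket{2}$, $\phi_{22}=\ket{1}$, $\phi_{23}=\ket{4}$ and $\phi_{33}\in\langle\ket{1},\ket{2}\rangle$.
\item $\psi_{33}$ is orthogonal to neither $\psi_{21}$ nor $\psi_{22}$, since otherwise it would equal $\psi_{23}$ up to a phase.
\item The products $\psi_{ab}\otimes\phi_{ab}$ are pairwise orthogonal, so this forces $\phi_{33}\perp\ket{1},\ket{2}$, a contradiction.
\end{enumerate}
If $\psi_{33}$ has weight at least three, any row of $\Phi$ with a weight-four entry has weight-one entries in columns 1 and 2 by Lemma~\ref{lem:column4}. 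By Lemmas~\ref{lem:pattern4}, \ref{lem:column3} and \ref{lem:column4} it then has a weight-two entry in column 3. So that row is in case (i), and $\Phi$ reproduces the weight-two configuration just refuted.

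The key tool is the transfer of non-orthogonality in one square into orthogonality in the other, for cells in different rows and columns. Your sketch never actually uses it.

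\textbf{Second claim.} The pigeonhole step fails. Four columns of $\Phi$ with at least three heavy entries each give at least twelve heavy entries in rows $2,\dots,6$. However, only four rows are guaranteed to carry a heavy $\psi_{rj}$. The fifth row can absorb four of the twelve, leaving possibly exactly two per remaining row.

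Even granting such a row, there is no contradiction. Rules \#2 and \#3 relate $S$ only to $\phi_{rj}$, so the other entries of row $r$ of $\Phi$ lose at most the single coordinate of $\phi_{rj}$. Three entries of weight two or three in one row are perfectly compatible with a unitary pattern; Lemma~\ref{lem:pattern3} describes such rows rather than excluding them.

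The paper argues instead as follows.
\begin{enumerate}
\item Four such columns of $\Phi$ force, by Lemma~\ref{lem:column3}, all entries of $\Psi$ in those columns to have weight at most two.
\item The weight-four entries of $\Psi$ therefore lie in the other two columns.
\item So the weight-four row of $\Psi$ is in case (i) of Lemma~\ref{lem:pattern4}, with its weight-two entry in one of the four columns.
\item The entry of that column overlapping it must have the same weight-two pattern, and the weight-two case above gives the contradiction.
\end{enumerate}
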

\begin{proof}
    Suppose by contradiction that both squares have entries of weight four. By Lemma~\ref{lem:column3} and Lemma~\ref{lem:column4}, one of them, say \(\Psi\), has at most three columns with entries of weight at least three. Suppose without loss of generality that \(\psi_{21}\) has pattern 001111. The unitary pattern determined by the second row of \(\Psi\) has at most three entries of weight at least three, so it belongs to case (i) of Lemma~\ref{lem:pattern4}. Hence, \(\psi_{22}\) has pattern 001111 as well, and we may assume that \(\psi_{23}\) has pattern 000011. Somewhere in the third column, there must be another entry with pattern of the form ****11, say \(\psi_{33}\). We distinguish two cases.
    
    \textbf{Case 1: \(\psi_{33}\) has weight two.} In other words, it has pattern 000011. After applying rule \#2 and rule \#3, we get:
\[
\Psi=
\scalebox{.8}{\begin{tabular}{|c|c|c|c|c|c|}
    \hline
    100000 & 010000 & 001000 & 000100 & 000010 & 000001\\
    \hline
    \cellcolor{black!30}001111 & 001111 & 000011 & ***0** & ****0* & *****0\\
    \hline
    0***** & *0**** & \cellcolor{black!30}000011 & ***0** & ****0* & *****0\\
    \hline
    \vdots & \vdots & \vdots & \vdots & \vdots & \vdots\\
    \hline
\end{tabular}}
\quad
\text{and }
\,
\Phi=
\scalebox{.8}{\begin{tabular}{|c|c|c|c|c|c|}
    \hline
    100000 & 010000 & 001000 & 000100 & 000010 & 000001\\
    \hline
    \cellcolor{black!30}010000 & 100000 & 000100 & ***0** & ****0* & *****0\\
    \hline
    0***** & *0**** & \cellcolor{black!30}**0000 & ***0** & ****0* & *****0\\
    \hline
    \vdots & \vdots & \vdots & \vdots & \vdots & \vdots\\
    \hline
\end{tabular}}\]
    Look at the grey cells in \(\Psi\). The corresponding entries, \(\psi_{21}\) and \(\psi_{33}\), cannot be orthogonal, because otherwise \(\psi_{23}\) and \(\psi_{33}\) would be equal up to a phase factor, a contradiction. Hence \(\phi_{33}\) is orthogonal to \(\phi_{21}\) by rule \#3. Similarly, it is orthogonal to \(\phi_{22}\), a contradiction.
    
    \textbf{Case 2: \(\psi_{33}\) has weight three or four.} Let \(i\in\{1,\dots,6\}\) be such that the \(i\)th row of \(\Phi\) contains an entry of weight four. %\(\Psi\) has three columns with an entry of weight at least three, so Lemma~\ref{lem:column3} and Lemma~\ref{lem:column4} imply that the pattern determined by the \(i\)th row of \(\Phi\) belongs to case (i) of Lemma~\ref{lem:pattern4}. 
    The \(i\)th row of \(\Phi\) has weight one entries in the first two columns because of Lemma~\ref{lem:column4}. By Lemma~\ref{lem:pattern4}, it has an entry of weight at least two in the third column. The weight cannot be four by Lemma~\ref{lem:column4}, and it cannot be three by Lemma~\ref{lem:column3}, so the \(i\)th row of \(\Phi\) has an entry of weight two in the third column. In particular, the pattern determined by the \(i\)th row of \(\Phi\) belongs to case (i) of Lemma~\ref{lem:pattern4}. Moreover, somewhere in the third column of \(\Phi\), there is another entry whose support overlaps with that of the third entry of the \(i\)th column of \(\Phi\). It has weight at most two (again, using Lemma~\ref{lem:column3}) and must be orthogonal, so it has the same pattern. We conclude that \(\Phi\) satisfies the conditions of Case~1, and we get a contradiction.

    We conclude that if \(\Psi\) has an entry of weight four, then \(\Phi\) has no entries of weight four. Moreover, \(\Phi\) cannot have four columns with an entry of weight three, because otherwise \(\Psi\) is in the scenario of Case 1 by Lemma~\ref{lem:column3}, and get a contradiction.
\end{proof}

\begin{lemma}\label{lem:weight3}
    There do not exist 2 MOQLS(6) in standard form such that one of them has an entry of weight three, no entries of weight at least four, and at most three columns with an entry of weight three.
\end{lemma}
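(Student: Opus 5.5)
Let \(\Psi\) be the square with an entry of weight three, no entries of weight four or more, and at most three columns containing an entry of weight three; let \(\Phi\) be the orthogonal square. I will argue by contradiction, working with a row of \(\Psi\) that contains a weight three entry. Up to permuting rows, columns and coordinates (keeping standard form), \(\psi_{21}\) has weight three. Consider the unitary pattern formed by the second row of \(\Psi\). By Lemma~\ref{lem:pattern3}, it is in case (i) or case (ii).

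Case (ii) would give four entries of weight three in this row, hence four columns containing an entry of weight three. This contradicts the hypothesis, so the row is in case (i). In particular, the row contains a second entry, say \(\psi_{22}\), with the same support of size three as \(\psi_{21}\), say \(\{a,b,c\}\). It also contains a third entry \(\psi_{23}\) of weight two or three whose support lies in \(\{a,b,c\}\). Applying the same reasoning to the first column, I get another entry \(\psi_{k1}\) in column one with support \(\{a,b,c\}\). By rule \#3, \(\phi_{21},\phi_{22},\phi_{k1}\) all vanish on \(\{a,b,c\}\). Since the column with \(\psi_{21}\) contains an entry of weight three, Lemma~\ref{lem:column3} says the corresponding column of \(\Phi\) has only weights one and two.

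The main step mimics Case~1 of Lemma~\ref{lem:notboth4} and Case~1 of Theorem~\ref{thm:2MOQLS5}. I want to find several entries of \(\Phi\) that are pairwise orthogonal but confined to a small coordinate space. To get this, I will exhibit entries of \(\Psi\) in distinct rows and columns whose supports meet in exactly one coordinate, or which are otherwise forced to be non-orthogonal. For such pairs, the tensor-product condition forces the corresponding \(\Phi\) entries to be orthogonal.

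Concretely, within \(\{a,b,c\}\) the three row entries \(\psi_{21},\psi_{22},\psi_{23}\) form an orthonormal basis of that 3-space. So every other entry of \(\Psi\) whose support meets \(\{a,b,c\}\) in a single point is non-orthogonal to at least one of them. The coordinate \(c\notin\) some support, together with the unitary column constraints, should produce such an entry in another row, as in Theorem~\ref{thm:2MOQLS5}. I expect to collect four entries of \(\Phi\) that are pairwise orthogonal. These lie in the span of the three coordinates outside \(\{a,b,c\}\), which is too few dimensions, giving the contradiction.

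The hardest part will be controlling the case distinction on \(\psi_{23}\) having weight two or three. It will also be hard to guarantee that the witnessing entry in another row and column really has single-point overlap rather than being orthogonal. I would first try to use the hypothesis on at most three weight-three columns, together with Lemma~\ref{lem:column3} applied to those columns, to force the entries of \(\Phi\) in the relevant columns to have weight at most two. This pins down their supports as in Case~1 of Lemma~\ref{lem:notboth4}. If that fails, I would fall back on a short enumeration of how the coordinates \(a,b,c\) must be covered in the remaining columns of the second row's neighbourhood.
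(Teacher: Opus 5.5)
Your treatment of the rows is correct and matches the paper's opening. A row of type (ii) in Lemma~\ref{lem:pattern3} would put weight-three entries into four different columns, so every row containing a weight-three entry is of type (i).

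\textbf{First gap: the column step.} The argument breaks at ``applying the same reasoning to the first column''. The hypothesis bounds the number of \emph{columns} containing a weight-three entry. A column of type (ii) has all four of its weight-three entries in that single column, so the hypothesis does not exclude it. The paper therefore splits into two cases:
\begin{itemize}
\item Case~1: some weight-three column is of type (i), taken to be column~1.
\item Case~2: every column with a weight-three entry is of type (ii).
\end{itemize}
Case~2 needs its own argument. Column~1 has patterns \(100000, 011100, 011010, 010110, 001110, 000001\). The twin of \(\psi_{21}\) lies in another column, which is again of type (ii). Two further weight-three entries of that column (say in rows 3 and 4) must have the same patterns \(011010\), \(010110\) as \(\psi_{31},\psi_{41}\), because their rows are of type (i). This forces the column to be column~6. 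The patterns that rule~\#3 then imposes on columns 1 and 6 of \(\Phi\) are incompatible. Your proposal has no counterpart to this case.

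\textbf{Second gap: the type (i) case.} Here the decisive step is only announced, and the contradiction you aim for does not appear. Let \(d\) be the coordinate of \(\{3,\dots,6\}\) outside \(\{a,b,c\}\).
\begin{itemize}
\item \(\phi_{21}\) and \(\phi_{k1}\) form a basis of \(\langle\ket{2},\ket{d}\rangle\).
\item Hence the third entry of column~1 inside \(\{a,b,c\}\) has weight two, since its \(\Phi\)-partner must live on \(\{a,b,c\}\) minus its support.
\item So \(\psi_{22}\) is not orthogonal to \(\psi_{k1}\) (otherwise it would be proportional to that weight-two entry), which gives \(\phi_{22}=\ket{1}\).
\item By the same argument, the twin of \(\psi_{k1}\) must lie in column \(d\), also with \(\Phi\)-entry \(\ket{1}\).
\end{itemize}
So \(\phi_{21},\phi_{k1},\phi_{22},\phi_{kd}\) give only three pairwise orthogonal vectors filling \(\langle\ket{1},\ket{2},\ket{d}\rangle\), and the fourth coincides with one of them. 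No contradiction arises.

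The missing idea is a second use of the hypothesis. Columns \(1,2,d\) now exhaust the three permitted weight-three columns, so columns \(a,b,c\) contain only entries of weight at most two. The paper exploits this in its labelling, where \(\{a,b,c\}=\{2,3,4\}\) and it works with columns 2 and 3. It locates where coordinates 3 and 2 are covered in those columns. The non-orthogonality trick then forces \(\psi_{52},\psi_{62}\) to have pattern \(001100\) and \(\psi_{53}\) to have pattern \(010100\). These are two entries of row~5 overlapping in exactly one coordinate, hence not orthogonal, which is the contradiction. Your ``fallback enumeration'' is where the actual proof lives, and it is not carried out.
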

\begin{proof}
    Let \(\Psi\) be a quantum Latin square with this property, and let \(\Phi\) be the orthogonal quantum Latin square. Because of the assumption, every unitary pattern coming from a row of \(\Psi\) with an entry of weight three, belongs to case (i) of Lemma~\ref{lem:pattern3}. In particular, for every entry of weight three, there is an entry in the same row with the same pattern.
    
    \textbf{Case 1: there is a column with a unitary pattern of the form (i) in Lemma~\ref{lem:pattern3}.} We may assume that:
\[
\Psi=
\scalebox{.8}{\begin{tabular}{|c|c|c|c|c|c|}
    \hline
    100000 & 010000 & 001000 & 000100 & 000010 & 000001\\
    \hline
    011100 & *0**** & **0*** & ***0** & ****0* & *****0\\
    \hline
    011100 & *0**** & **0*** & ***0** & ****0* & *****0\\
    \hline
    011*00 & *0**** & **0*** & ***0** & ****0* & *****0\\
    \hline
    0***** & *0**** & **0*** & ***0** & ****0* & *****0\\
    \hline
    0***** & *0**** & **0*** & ***0** & ****0* & *****0\\
    \hline
\end{tabular}}
\quad
\text{and }
\,
\Phi=
\scalebox{.8}{\begin{tabular}{|c|c|c|c|c|c|}
    \hline
    100000 & 010000 & 001000 & 000100 & 000010 & 000001\\
    \hline
    0000** & *0**** & **0*** & ***0** & ****0* & *****0\\
    \hline
    0000** & *0**** & **0*** & ***0** & ****0* & *****0\\
    \hline
    000*** & *0**** & **0*** & ***0** & ****0* & *****0\\
    \hline
    0***** & *0**** & **0*** & ***0** & ****0* & *****0\\
    \hline
    0***** & *0**** & **0*** & ***0** & ****0* & *****0\\
    \hline
\end{tabular}}\]
    Somewhere in the second row of \(\Psi\), there is another entry with pattern 011100, say it is \(\psi_{25}\). We use rule \#2 and rule \#3 to find:
\[
\Psi=
\scalebox{.8}{\begin{tabular}{|c|c|c|c|c|c|}
    \hline
    100000 & 010000 & 001000 & 000100 & 000010 & 000001\\
    \hline
    \cellcolor{black!30}011100 & *0**** & **0*** & ***0** & \cellcolor{black!30}011100 & *****0\\
    \hline
    \cellcolor{black!30}011100 & *0**** & **0*** & ***0** & ****0* & *****0\\
    \hline
    011000 & *00*** & *00*** & ***0** & ****0* & *****0\\
    \hline
    0000** & *0**** & **0*** & ***0** & ****0* & *****0\\
    \hline
    0000** & *0**** & **0*** & ***0** & ****0* & *****0\\
    \hline
\end{tabular}}
\quad
\text{and }
\,
\Phi=
\scalebox{.8}{\begin{tabular}{|c|c|c|c|c|c|}
    \hline
    100000 & 010000 & 001000 & 000100 & 000010 & 000001\\
    \hline
    \cellcolor{black!30}0000** & *0**** & **0*** & ***0** & \cellcolor{black!30}*0000* & *****0\\
    \hline
    \cellcolor{black!30}0000** & *0**** & **0*** & ***0** & ****0* & *****0\\
    \hline
    000100 & *0*0** & **00** & ***0** & ***00* & ***0*0\\
    \hline
    0**000 & *0**** & **0*** & ***0** & ****0* & *****0\\
    \hline
    0**000 & *0**** & **0*** & ***0** & ****0* & *****0\\
    \hline
\end{tabular}}\]
    If \(\psi_{25}\) were orthogonal to \(\psi_{31}\), it would be equal to \(\psi_{41}\), up to a phase factor, a contradiction. So they are not orthogonal, and \(\phi_{25}\) is orthogonal to both \(\phi_{21}\) and \(\phi_{31}\), so \(\phi_{25}\) has pattern 100000. We conclude that \(\psi_{35}\) cannot have pattern 011100, because then \(\phi_{35}\) would have pattern 100000 as well, for the same reason, giving a contradiction. Therefore, \(\psi_{36}\) has pattern 011100:
\[
\Psi=
\scalebox{.8}{\begin{tabular}{|c|c|c|c|c|c|}
    \hline
    100000 & 010000 & 001000 & 000100 & 000010 & 000001\\
    \hline
    011100 & *0**** & **0*** & ***0** & 011100 & *****0\\
    \hline
    011100 & *0**** & **0*** & ***0** & ****0* & 011100\\
    \hline
    011000 & *00*** & *00*** & ***0** & ****0* & *****0\\
    \hline
    0000** & *0**** & **0*** & ***0** & ****0* & *****0\\
    \hline
    0000** & *0**** & **0*** & ***0** & ****0* & *****0\\
    \hline
\end{tabular}}
\quad
\text{and }
\,
\Phi=
\scalebox{.8}{\begin{tabular}{|c|c|c|c|c|c|}
    \hline
    100000 & 010000 & 001000 & 000100 & 000010 & 000001\\
    \hline
    0000** & 00**** & 0*0*** & 0**0** & 100000 & 0****0\\
    \hline
    0000** & *0**** & **0*** & ***0** & ****0* & *****0\\
    \hline
    000100 & *0*0** & **00** & ***0** & ***00* & ***0*0\\
    \hline
    0**000 & *0**** & **0*** & ***0** & ****0* & *****0\\
    \hline
    0**000 & *0**** & **0*** & ***0** & ****0* & *****0\\
    \hline
\end{tabular}}\]
    We already have three columns with entries of weight three. In particular, the second column of \(\Psi\) can only have entries of weight one or two. We use this to derive a contradiction. First note that \(\psi_{22}\) cannot have pattern 001100. Whay not? Because it cannot be orthogonal to \(\psi_{31}\), since otherwise it would be equal to \(\psi_{41}\) up to a phase factor, which it is not: otherwise \(\phi_{21}\), \(\phi_{31}\) and \(\phi_{22}\) would be pairwise orthogonal and all have a pattern of the form 0000**. Similarly, \(\psi_{23}\), \(\psi_{32}\) and \(\psi_{33}\) all have patterns of the form *000**. There must be some entry in the second column with a one on the third coordinate, suppose it is \(\psi_{52}\). We claim that it has pattern 001100. Suppose by contradiction that \(\psi_{52}\) has pattern *010**. This is not orthogonal to any of the elements in grey in \(\Psi\), which implies that \(\phi_{52}\) is orthogonal to all the ones in grey in \(\Phi\):
\[
\Psi=
\scalebox{.8}{\begin{tabular}{|c|c|c|c|c|c|}
    \hline
    100000 & 010000 & 001000 & 000100 & 000010 & 000001\\
    \hline
    \cellcolor{black!30}011100 & *000** & *000** & ***0** & \cellcolor{black!30}011100 & *****0\\
    \hline
    \cellcolor{black!30}011100 & *000** & *000** & ***0** & ****0* & 011100\\
    \hline
    \cellcolor{black!30}011000 & *00*** & *00*** & ***0** & ****0* & *****0\\
    \hline
    0000** & *01*** & **0*** & ***0** & ****0* & *****0\\
    \hline
    0000** & *0**** & **0*** & ***0** & ****0* & *****0\\
    \hline
\end{tabular}}
\quad
\text{and }
\,
\Phi=
\scalebox{.8}{\begin{tabular}{|c|c|c|c|c|c|}
    \hline
    100000 & 010000 & 001000 & 000100 & 000010 & 000001\\
    \hline
    \cellcolor{black!30}0000** & 00**** & 0*0*** & 0**0** & \cellcolor{black!30}100000 & 0****0\\
    \hline
    \cellcolor{black!30}0000** & *0**** & **0*** & ***0** & ****0* & *****0\\
    \hline
    \cellcolor{black!30}000100 & *0*0** & **00** & ***0** & ***00* & ***0*0\\
    \hline
    0**000 & *00*** & **0*** & ***0** & ****0* & *****0\\
    \hline
    0**000 & *0**** & **0*** & ***0** & ****0* & *****0\\
    \hline
\end{tabular}}\]
    That is impossible, so the pattern of \(\psi_{52}\) is 001100. Recall that it cannot have weight three or more. Moreover, there must be another entry in the second column with a one on the third position, and following the same argument as before, the pattern of \(\psi_{62}\) is 001100 as well. Repeating this reasoning in the third column implies that \(\psi_{53}\) has pattern 010100, so it cannot be orthogonal to \(\psi_{52}\), a contradiction.

    \textbf{Case 2: all columns with an entry of weight three have a unitary pattern of the form (ii) in Lemma~\ref{lem:pattern3}.}

    We may assume that \(\Psi\) has the form:
\[
\Psi=
\scalebox{.8}{\begin{tabular}{|c|c|c|c|c|c|}
    \hline
    100000 & 010000 & 001000 & 000100 & 000010 & 000001\\
    \hline
    011100 & *0**** & **0*** & ***0** & ****0* & *****0\\
    \hline
    011010 & *0**** & **0*** & ***0** & ****0* & *****0\\
    \hline
    010110 & *0**** & **0*** & ***0** & ****0* & *****0\\
    \hline
    001110 & *0**** & **0*** & ***0** & ****0* & *****0\\
    \hline
    000001 & *0***0 & **0**0 & ***0*0 & ****00 & *****0\\
    \hline
\end{tabular}}
\]
    Every row pattern belongs to case (i) in Lemma~\ref{lem:pattern3}, so in the second row, there is another entry with pattern 011100.
    There are at least two more entries of weight three in the same column of that entry that are not in the last row. Without loss of generality, they are the entries in the third and fourth row. Because every row pattern belongs to case (i) in Lemma~\ref{lem:pattern3}, they have the same pattern as that row has in the first column. In particular, we are talking about the last column:
\[
\Psi=
\scalebox{.8}{\begin{tabular}{|c|c|c|c|c|c|}
    \hline
    100000 & 010000 & 001000 & 000100 & 000010 & 000001\\
    \hline
    011100 & *0**** & **0*** & ***0** & ****0* & 011100\\
    \hline
    011010 & *0**** & **0*** & ***0** & ****0* & 011010\\
    \hline
    010110 & *0**** & **0*** & ***0** & ****0* & 010110\\
    \hline
    001110 & *0**** & **0*** & ***0** & ****0* & *****0\\
    \hline
    000001 & *0***0 & **0**0 & ***0*0 & ****00 & *****0\\
    \hline
\end{tabular}}
\]
    Using rule \#3, we see that \(\Phi\) must have a pattern of the form
\[
\Phi=
\scalebox{.8}{\begin{tabular}{|c|c|c|}
    \hline
    100000 & \dots & 000001\\
    \hline
    0000** & \dots & *000*0\\
    \hline
    000*0* & \dots & *00*00\\
    \hline
    00*00* & \dots & *0*000\\
    \hline
    \dots & \dots & \dots\\
    \hline
\end{tabular}}
\]
    which gives a contradiction.
\end{proof}

\begin{theorem}\label{thm:oneof2MOQLS6isclassical}
    If there exist 2 MOQLS(6), then there exist 2 MOQLS(6) where one of them is classical.
\end{theorem}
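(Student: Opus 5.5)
Start from an arbitrary pair of 2 MOQLS(6), \(\Psi\) and \(\Phi\), and put them in standard form using Lemma~\ref{lem:standard} (rule \#1). The aim is to show that one of the two squares has no entries of weight three or more. Lemma~\ref{lem:weight22} then replaces that square by a classical Latin square that is still orthogonal to the other one. Standard form is needed only to invoke the previous lemmas; since Lemma~\ref{lem:weight22} preserves orthogonality with any square, the replacement step does not depend on it.

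The core of the argument is a case analysis on the maximum weight of entries in the two squares. In a quantum Latin square of order six, an entry of weight five or six cannot occur in standard form. Such an entry in row \(i\geq2\) of \(\Psi\) would force, by rule \#3, the corresponding entry of \(\Phi\) to have support disjoint from a set of size five that already excludes the column coordinate, so it would be zero. More precisely, the entry in column \(j\) avoids coordinate \(j\) by rule \#2, so its weight is at most five. If the weight is five, the \(\Phi\)-entry must be supported on \(\{j\}\), contradicting rule \#2 in \(\Phi\). Hence all weights are at most four.

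The cases are then as follows.
\begin{enumerate}[(a)]
    \item \emph{Some square, say \(\Psi\), has an entry of weight four.} By Lemma~\ref{lem:notboth4}, \(\Phi\) has no entries of weight four and at most three columns containing an entry of weight three. If \(\Phi\) has an entry of weight three, Lemma~\ref{lem:weight3} applied to \(\Phi\) gives a contradiction. So \(\Phi\) has all weights at most two, and we apply Lemma~\ref{lem:weight22} to \(\Phi\).
    \item \emph{Neither square has weight four entries, but both have weight three entries.} Lemma~\ref{lem:column3} shows that if \(\Psi\) has a weight-three entry in some column, then that column of \(\Phi\) has only weights one or two. So the sets of columns of \(\Psi\) and of \(\Phi\) containing weight-three entries are disjoint subsets of the six columns. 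Hence one of the two squares has at most three such columns. Lemma~\ref{lem:weight3} applied to that square gives a contradiction.
    \item \emph{One square has no entries of weight three or more.} We apply Lemma~\ref{lem:weight22} to it directly.
\end{enumerate}

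The main obstacle is making sure the hypotheses of the earlier lemmas are exactly met. Lemma~\ref{lem:weight3} requires ``no entries of weight at least four'' together with ``at most three columns with an entry of weight three'', which is checked through the disjointness argument in case (b). The other point needing care is the short argument excluding weights five and six; I would write it out explicitly via rules \#2 and \#3. After that, the conclusion follows by replacing the low-weight square with the classical square \(\Psi'\) from Lemma~\ref{lem:weight22}.
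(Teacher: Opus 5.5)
Your proposal is correct and follows essentially the same route as the paper. You exclude weights five and six via rules \#2 and \#3. You then use Lemma~\ref{lem:notboth4}, or Lemma~\ref{lem:column3} together with the disjointness of the column sets carrying weight-three entries, to find a square with no weight-four entries and at most three columns containing weight-three entries. Lemma~\ref{lem:weight3} rules out weight three in that square, and Lemma~\ref{lem:weight22} finishes. Your separate cases (b) and (c) and your written-out weight-five argument only spell out what the paper compresses into a few lines.
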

\begin{proof}
    Recall that we may assume the squares to be in standard form according to rule \#1.
    They cannot have an entry of weight five or six by rule \#3.
    If one of the quantum Latin squares has an entry of weight four, then Lemma~\ref{lem:notboth4} says that the other one does not have entries of weight four, and it has at most three columns containing an entry of weight three.
    But also if both quantum Latin squares have no entries of weight at least four, then one of them has at most three columns containing an entry of weight three, because of Lemma~\ref{lem:column3}.
    Call this Latin square \(\Psi\).
    Lemma~\ref{lem:weight3} implies that \(\Psi\) does not have an entry of weight three or more. The statement follows from Lemma~\ref{lem:weight22}.
\end{proof}

\subsection{Checking twelve cases}

Now that we may assume one of the squares to be classical, we can translate the problem into graph theory terms.

An \emph{orthonormal representation} of a graph in \(\mathbb{R}^n\) (or \(\mathbb{C}^n\)) is a map from the vertex set of the graph to the unit vectors in \(\mathbb{R}^n\) (or \(\mathbb{C}^n\)) such that nonadjacent vertices are mapped to orthonormal vectors.

\begin{lemma}\label{lem:orthonormalrep}
    There exist 2 MOQLS(6) if and only if there is a Latin square graph on \(6\times6\) vertices whose complement has an orthonormal representation in \(\mathbb{C}^6\).
\end{lemma}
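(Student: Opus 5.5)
We prove the two directions separately. In both, the classical square is identified with its Latin square graph, and the vectors of the other square give the representation. Orthogonality of two product states factors as \(\braket{\psi\otimes\phi}{\psi'\otimes\phi'}=\braket{\psi}{\psi'}\braket{\phi}{\phi'}\). When \(\Psi\) is classical, \(\braket{\psi_{ij}}{\psi_{kl}}\) is \(1\) (up to phase) if the symbols agree and \(0\) otherwise.

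\emph{(\(\Rightarrow\))} Suppose 2 MOQLS(6) exist. By Theorem~\ref{thm:oneof2MOQLS6isclassical} we may assume one of them, \(\Psi\), is classical. Up to isotopy its entries lie in \(\{\ket{1},\dots,\ket{6}\}\), so \(\Psi\) determines a classical Latin square \(L\) with \(\psi_{ij}=\ket{L_{ij}}\) up to phase. Let \(G\) be the Latin square graph of \(L\), and define the map \((i,j)\mapsto\phi_{ij}\in\CC^6\). We check that this is an orthonormal representation of the complement \(\overline{G}\), i.e.\ that vertices adjacent in \(G\) get orthogonal vectors.
\begin{itemize}
\item If \((i,j)\) and \((k,l)\) are in the same row or column, then \(\phi_{ij}\perp\phi_{kl}\) because \(\Phi\) is a quantum Latin square.
\item If \(L_{ij}=L_{kl}\) with \((i,j)\neq(k,l)\), then \(\braket{\psi_{ij}}{\psi_{kl}}\) has modulus \(1\). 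Orthogonality of \(\psi_{ij}\otimes\phi_{ij}\) and \(\psi_{kl}\otimes\phi_{kl}\) then forces \(\braket{\phi_{ij}}{\phi_{kl}}=0\).
\end{itemize}

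\emph{(\(\Leftarrow\))} Suppose \(G\) is the Latin square graph of a Latin square \(L\), and \(v\) is an orthonormal representation of \(\overline{G}\) in \(\CC^6\). Set \(\psi_{ij}=\ket{L_{ij}}\) and \(\phi_{ij}=v(i,j)\).
\begin{itemize}
\item \(\Psi\) is a quantum Latin square because \(L\) is a Latin square.
\item In \(\Phi\), each row (or column) consists of six pairwise orthogonal unit vectors in \(\CC^6\), since cells in a common row or column are adjacent in \(G\). Hence each row and column is an orthonormal basis, and \(\Phi\) is a quantum Latin square.
\item For orthogonality of \(\Psi\) and \(\Phi\), take two distinct cells. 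If their symbols in \(L\) differ, the \(\psi\)-factors are orthogonal. If the symbols agree, the cells are adjacent in \(G\), so the \(\phi\)-factors are orthogonal. Thus the \(36\) unit vectors \(\psi_{ij}\otimes\phi_{ij}\) are pairwise orthogonal in the \(36\)-dimensional space \(\CC^6\otimes\CC^6\), so they form an orthonormal basis.
\end{itemize}

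The only substantial step is the reduction to a classical \(\Psi\) in the forward direction, which is exactly Theorem~\ref{thm:oneof2MOQLS6isclassical}. Beyond that, the proof only requires reading the convention correctly: the representation is of the complement, so edges of \(G\) correspond to orthogonal vectors.
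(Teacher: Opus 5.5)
Your proof is correct and follows the same route as the paper. You reduce to a classical \(\Psi\) via Theorem~\ref{thm:oneof2MOQLS6isclassical}, observe that the quantum Latin square and orthogonality conditions on \(\Phi\) are exactly orthogonality along the edges of the Latin square graph of \(\Psi\), and for the converse read \(\Phi\) off the representation, spelling out the inner-product factorisation and the dimension count that the paper leaves implicit.
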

\begin{proof}
    By Theorem~\ref{thm:oneof2MOQLS6isclassical}, if there exist 2 MOQLS(6) \(\Psi\) and \(\Phi\), then one of them is classical, say \(\Psi\). By definition, \(\Phi\) is a quantum Latin square if and only if every two entries in the same row or column are orthonormal. Both squares are orthogonal if and only if any two entries of \(\Phi\) that correspond to the same entry in \(\Psi\), are orthonormal. Hence, the entries of \(\Phi\) are an orthonormal representation of the complement of the Latin square graph associated to \(\Psi\). Vice versa, if the complement of the Latin square graph of a given classical Latin square \(\Psi\) has an orthonormal representation in \(\mathbb{C}^6\), that representation defines a quantum Latin square that is orthogonal to it.
\end{proof}

Paratopic Latin squares have the same Latin square graph. As a corollary, in order to disprove the existence of 2 MOQLS(6), we only have to check pairs where one of the squares is one of the twelve classical Latin squares of order six up to paratopy (also called \emph{main classes} or \emph{species}). This classification was first found by Sch\"onhardt \cite{12LS6}, see also \cite{numberofLS} and \url{https://users.cecs.anu.edu.au/~bdm/data/latin_mc6.txt} or \url{https://users.monash.edu.au/~iwanless/data/MOLS/maxMOLS6.1.txt} for an online catalogue.

\begin{figure}[H]
    \centering
\begin{tabular}{cccc}
\begin{tabular}{|c|c|c|c|c|c|}
\hline
1&2&3&4&5&6\\
\hline
2&1&4&3&6&5\\
\hline
3&4&5&6&1&2\\
\hline
4&3&6&5&2&1\\
\hline
5&6&1&2&3&4\\
\hline
6&5&2&1&4&3\\
\hline
\end{tabular}
&
\begin{tabular}{|c|c|c|c|c|c|}
\hline
1&2&3&4&5&6\\
\hline
2&1&4&3&6&5\\
\hline
3&4&5&6&1&2\\
\hline
4&3&6&5&2&1\\
\hline
5&6&1&2&4&3\\
\hline
6&5&2&1&3&4\\
\hline
\end{tabular}
&
\begin{tabular}{|c|c|c|c|c|c|}
\hline
1&2&3&4&5&6\\
\hline
2&1&4&3&6&5\\
\hline
3&4&5&6&1&2\\
\hline
4&5&6&1&2&3\\
\hline
5&6&1&2&3&4\\
\hline
6&3&2&5&4&1\\
\hline
\end{tabular}
&
\begin{tabular}{|c|c|c|c|c|c|}
\hline
1&2&3&4&5&6\\
\hline
2&1&4&3&6&5\\
\hline
3&4&5&6&1&2\\
\hline
4&5&6&2&3&1\\
\hline
5&6&2&1&4&3\\
\hline
6&3&1&5&2&4\\
\hline
\end{tabular}
\\
\\
\begin{tabular}{|c|c|c|c|c|c|}
\hline
1&2&3&4&5&6\\
\hline
2&1&4&3&6&5\\
\hline
3&5&1&6&2&4\\
\hline
4&6&2&5&1&3\\
\hline
5&3&6&1&4&2\\
\hline
6&4&5&2&3&1\\
\hline
\end{tabular}
&
\begin{tabular}{|c|c|c|c|c|c|}
\hline
1&2&3&4&5&6\\
\hline
2&1&4&3&6&5\\
\hline
3&5&1&6&2&4\\
\hline
4&6&2&5&1&3\\
\hline
5&3&6&2&4&1\\
\hline
6&4&5&1&3&2\\
\hline
\end{tabular}
&
\begin{tabular}{|c|c|c|c|c|c|}
\hline
1&2&3&4&5&6\\
\hline
2&1&4&3&6&5\\
\hline
3&5&1&6&2&4\\
\hline
4&6&2&5&3&1\\
\hline
5&4&6&2&1&3\\
\hline
6&3&5&1&4&2\\
\hline
\end{tabular}
&
\begin{tabular}{|c|c|c|c|c|c|}
\hline
1&2&3&4&5&6\\
\hline
2&1&4&3&6&5\\
\hline
3&5&1&6&2&4\\
\hline
4&6&5&1&3&2\\
\hline
5&4&6&2&1&3\\
\hline
6&3&2&5&4&1\\
\hline
\end{tabular}
\\
\\
\begin{tabular}{|c|c|c|c|c|c|}
\hline
1&2&3&4&5&6\\
\hline
2&1&4&3&6&5\\
\hline
3&5&1&6&4&2\\
\hline
4&6&5&1&2&3\\
\hline
5&3&6&2&1&4\\
\hline
6&4&2&5&3&1\\
\hline
\end{tabular}
&
\begin{tabular}{|c|c|c|c|c|c|}
\hline
1&2&3&4&5&6\\
\hline
2&1&4&3&6&5\\
\hline
3&5&1&6&4&2\\
\hline
4&6&5&1&2&3\\
\hline
5&4&6&2&3&1\\
\hline
6&3&2&5&1&4\\
\hline
\end{tabular}
&
\begin{tabular}{|c|c|c|c|c|c|}
\hline
1&2&3&4&5&6\\
\hline
2&1&4&5&6&3\\
\hline
3&4&2&6&1&5\\
\hline
4&6&5&2&3&1\\
\hline
5&3&6&1&2&4\\
\hline
6&5&1&3&4&2\\
\hline
\end{tabular}
&
\begin{tabular}{|c|c|c|c|c|c|}
\hline
1&2&3&4&5&6\\
\hline
2&3&1&5&6&4\\
\hline
3&1&2&6&4&5\\
\hline
4&6&5&2&1&3\\
\hline
5&4&6&3&2&1\\
\hline
6&5&4&1&3&2\\
\hline
\end{tabular}
\end{tabular}
    \caption{There are twelve Latin squares of order six up to paratopy.}
    \label{fig:placeholder}
\end{figure}

\subsection{An algorithm to disprove the existence of an orthonormal representation}\label{sec:algorithm}

In ten out of twelve cases, we can solve the problem by reducing it to a discrete problem and using the computer.
Algorithm~\ref{alg:orthonormalrepresentation} outputs \texttt{False} if the graph does not have an orthonormal representation in \(\mathbb{R}^6\) (or \(\mathbb{C}^6\)). If it outputs \texttt{True}, we cannot conclude whether there exists an orthonormal representation. An explanation can be found below the pseudocode.

\begin{algorithm}[H]
\caption{{\sc complementCouldHaveOrthonormalRepresentation}(\(G\), \(L\))}\label{alg:orthonormalrepresentation}
\begin{algorithmic}[1]
\Require Graph \(G\), list \(L\) of triples of vertices of \(G\)
\Ensure \texttt{False} if the complement of \(G\) has no orthonormal representation in six dimensions such that all triples in \(L\) are mapped to three distinct points on a projective line;
\texttt{True} (inconclusive) otherwise\vspace{2mm}
\State addingEdges \(\gets \texttt{True}\)
\While{addingEdges}
\State addingEdges \(\gets \texttt{False}\)
\For{\(\ell\in L\)}
\For{\(x\in V(G)\)}
\If{\(x\) is adjacent to two out of the three vertices of \(\ell\)}
\State make \(x\) adjacent to all vertices of \(\ell\)
\State addingEdges \(\gets \texttt{True}\)
\EndIf
\EndFor
\EndFor
\EndWhile
\If{{\sc cliqueNumber}\((G)\geq7\)}
\State\Return\texttt{False}
\EndIf
\For{complete tripartite \(X\sqcup Y\sqcup Z\subseteq V(G)\) with \(|X|\leq3\), \(|Y|\leq3\), \(|Z|\leq3\) and \(|X|+|Y|+|Z|=7\)}
\State\Return {\sc couldBeDependent}\((G,L,X)\) or {\sc couldBeDependent}\((G,L,Y)\) or {\sc couldBeDependent}\((G,L,Z)\)
\EndFor
\end{algorithmic}
\end{algorithm}

In the first step, we add edges. Whenever a vertex is adjacent to two out of the three vertices of a triple in \(L\), it means that the corresponding projective point is orthogonal to two out of three (distinct) points on a projective line and hence orthogonal to all points of that line. In particular, it can be made adjacent to the third vertex.

In the second step, we calculate the clique number. If there is a clique of size seven, then there are seven independent vectors in \(\mathbb{C}^6\), a contradiction.

In the third step, we apply the following lemma:

\begin{lemma}\label{lem:tripartite}
    If \(X\), \(Y\) and \(Z\) are multisets of vectors in \(\mathbb{R}^n\) (or \(\mathbb{C}^n\)) such that every two vectors in different sets are orthogonal, then \(\dim\langle X\rangle+\dim\langle Y\rangle+\dim\langle Z\rangle=\dim\langle X,Y,Z\rangle\leq n\).
\end{lemma}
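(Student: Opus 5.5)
The plan is to show that the three subspaces \(\langle X\rangle\), \(\langle Y\rangle\), \(\langle Z\rangle\) are mutually orthogonal. Their sum is then direct, and the dimension identity follows immediately. The inequality \(\dim\langle X,Y,Z\rangle\leq n\) is trivial, since \(\langle X,Y,Z\rangle\) is a subspace of \(\mathbb{R}^n\) (or \(\mathbb{C}^n\)). Multisets cause no difficulty here: repeated vectors do not change the span, so only the underlying sets of vectors matter.

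First I will observe that orthogonality extends from generators to spans. By hypothesis \(\langle x,y\rangle=0\) for all \(x\in X\) and \(y\in Y\). By sesquilinearity of the standard inner product (linear in one argument, conjugate-linear in the other), any linear combination of elements of \(X\) is orthogonal to any linear combination of elements of \(Y\). Hence \(\langle X\rangle\perp\langle Y\rangle\), and likewise \(\langle X\rangle\perp\langle Z\rangle\) and \(\langle Y\rangle\perp\langle Z\rangle\).

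Next I will show the sum \(\langle X\rangle+\langle Y\rangle+\langle Z\rangle\) is direct. Suppose \(u+v+w=0\) with \(u\in\langle X\rangle\), \(v\in\langle Y\rangle\), \(w\in\langle Z\rangle\). Taking the inner product with \(u\) and using the pairwise orthogonality gives \(\|u\|^2=0\), so \(u=0\). Symmetrically \(v=0\) and \(w=0\). This positive-definiteness step is the only point where we use that the inner product is a genuine (definite) one; over other fields the argument would fail.

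Finally, since \(\langle X,Y,Z\rangle=\langle X\rangle\oplus\langle Y\rangle\oplus\langle Z\rangle\), the dimensions add: \(\dim\langle X\rangle+\dim\langle Y\rangle+\dim\langle Z\rangle=\dim\langle X,Y,Z\rangle\leq n\). There is no real obstacle in this proof. The only care needed is to note that the vectors within a single set need not be orthogonal, or even distinct, which is why the statement is phrased in terms of dimensions of spans rather than cardinalities.
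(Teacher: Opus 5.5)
Your proof is correct and is essentially the paper's argument: the paper invokes Grassmann's formula together with \(\pi\cap\pi^\perp=\{\mathbf{0}\}\), which is the same positive-definiteness fact you use to show that the sum \(\langle X\rangle+\langle Y\rangle+\langle Z\rangle\) is direct. You just prove the direct-sum property by hand for all three subspaces at once, rather than applying Grassmann's formula step by step.
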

\begin{proof}
    This follows from Grassmann's formula since \(\pi\cap\pi^\perp=\{\mathbf{0}\}\) for any subspace \(\pi\).
\end{proof}

In graph theory terms, the condition translates to a complete tripartite subgraph (not necessarily induced) with tripartition classes \(X\), \(Y\) and \(Z\). We only consider the cases when those classes have at most three elements, to keep the code simple. If we would add an extra case where four points can be dependent (so a subgraph \(K_{3,4}\)) then the algorithm solves one more case, but it is not needed for the proof. If \(|X|+|Y|+|Z|=7\), then at least one of the sets \(X\), \(Y\) and \(Z\) is mapped to three dependent points, and the function of Algorithm~\ref{alg:dependentset} is called.

\begin{algorithm}[H]
\caption{{\sc couldBeDependent}(\(G\), \(L\), \(X\))}\label{alg:dependentset}
\begin{algorithmic}[1]
\Require Graph \(G\), list \(L\) of triples of vertices of \(G\), set \(X\) of vertices of \(G\) with \(|X|\leq3\)
\Ensure \texttt{False} if the complement of \(G\) has no orthonormal representation in six dimensions such that all triples in \(L\) are mapped to three distinct points on a projective line and \(X\) is mapped to a set of dependent vectors; \texttt{True} (inconclusive) otherwise\vspace{2mm}
\For{\(\{x,y\}\in \binom{X}{2}\)}
\Comment{Case 1: two vertices of \(X\) are mapped to the same point}
\If{\(x\not\sim y\) and \(\forall\ell\in L\colon\,\{x,y\}\not\subseteq\ell\)}
\State \(G' \gets G\) where \(x\) and \(y\) are merged
\If{{\sc complementCouldHaveOrthonormalRepresentation}\((G', L)\)}
\State\Return\texttt{True}
\EndIf
\EndIf
\EndFor
\If{\(|X|=3\) and \(|E(X)|\leq1\)}
\Comment{Case 2: \(X\) is mapped to three distinct points on a line}
\If{{\sc complementCouldHaveOrthonormalRepresentation}\((G, L\cup\{X\})\)}
\State\Return\texttt{True}
\EndIf
\EndIf
\State\Return\texttt{False}
\end{algorithmic}
\end{algorithm}

The algorithm runs for about one minute for each of the twelve Latin square graphs on \(6\times6\) vertices. Only two of them yield an inconclusive \texttt{True}. They both have a subsquare of order three.

\subsection{No subsquare of order three}

We are left with proving that if a Latin square of order six has a subsquare of order three, then it does not have a ``quantum orthogonal mate''.

\begin{lemma}\label{lem:subsquaredifferent}
If one of 2 MOQLS(6) is classical and has a subsquare of order three, then all nine entries in the corresponding block of the other square are different.
\end{lemma}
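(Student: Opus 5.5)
The plan is to argue by contradiction inside the $3\times3$ block. Let $\Psi$ be the classical square, with a subsquare on rows $R=\{r_1,r_2,r_3\}$, columns $C=\{c_1,c_2,c_3\}$ and symbol set $S=\{a,b,c\}$, and let $\Phi$ be its orthogonal mate. By Lemma~\ref{lem:orthonormalrep}, the entries of $\Phi$ form an orthonormal representation of the complement of the Latin square graph of $\Psi$.

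\textbf{Reduction to one pair of cells.} Two cells of the block that share a row, a column or a $\Psi$-symbol give orthogonal $\Phi$-entries, so these entries are certainly different. The only pairs left are cells in different rows, different columns and with different symbols. Up to relabelling, such a pair is $(r_1,c_1)$ with symbol $a$ and $(r_2,c_2)$ with symbol $b$. Suppose $\phi_{r_1c_1}=\phi_{r_2c_2}=v$ up to a phase.

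\textbf{Bookkeeping of the block.} Because the block is a Latin square of order three, it is forced: $(r_1,c_2)$ and $(r_2,c_1)$ both carry $c$, and $(r_3,c_3)$ carries $c$ as well. Also, the complementary block $R^c\times C^c$ is again a subsquare on the symbols $S$. The vector $v$ is then orthogonal to the $\Phi$-entries of
\begin{itemize}
\item all cells in rows $r_1,r_2$ and in columns $c_1,c_2$;
\item all cells with symbol $a$ or $b$, namely six cells in the block and six cells in $R^c\times C^c$.
\end{itemize}
The aim is to exhibit a set $W$ of $\Phi$-entries, each orthogonal to $v$, that spans $\CC^6$. This gives $v=0$, a contradiction.

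\textbf{Finding a spanning set.} The natural candidates come from Lemma~\ref{lem:tripartite} and from the orthonormal bases given by full rows and columns. For example, column $c_3$ meets rows $r_1$ and $r_2$, with symbols $b$ and $a$ respectively. The $a$- and $b$-cells of $R^c\times C^c$ lie in the three rows of $R^c$ and cover two of the three columns of $C^c$ in each row. Combining these with the rows and columns through $(r_1,c_1)$ and $(r_2,c_2)$, I would look for one full row or column basis in which every vector other than those already known to be orthogonal to $v$ is itself forced orthogonal to $v$. The forcing would use the two-out-of-three line rule of Section~\ref{sec:algorithm}: a vector orthogonal to two of three distinct points on a projective line is orthogonal to the third. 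Concretely, I would try row $r_3$ together with the complementary block first.

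\textbf{Main obstacle.} The hard step is that last one. The row, column and symbol conditions alone only make $v$ orthogonal to the other members of each basis containing it, which is consistent. So the contradiction has to come from propagation through the complementary subsquare, using that $v$ occurs twice. If a direct span argument fails, the fallback is to merge the two vertices $(r_1,c_1)$ and $(r_2,c_2)$ of the graph and run Algorithm~\ref{alg:orthonormalrepresentation} on this single merged graph. By symmetry of the subsquare, one such check covers every relevant pair.
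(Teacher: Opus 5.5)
\textbf{There is a genuine gap: the proposal stops before any contradiction is derived.} Your reduction to a single pair $\phi_{r_1c_1}=\phi_{r_2c_2}=v$ is correct. After that, however, nothing is proved, and the spanning-set strategy you outline does not get off the ground. The relations you list do not give a spanning set of entries orthogonal to $v$; they only confine $v$ to a plane.

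To see this, take any row $\rho\notin R$. Then $v$ is orthogonal to the entries of row $\rho$ in columns $c_1$ and $c_2$, and in the two columns where $\Psi$ has symbols $a$ and $b$. Hence $v\in\langle\phi_{\rho c_3},\phi_{\rho z}\rangle$, where $z\in C^c$ is the column in which row $\rho$ of $\Psi$ carries $c$. With row $\rho$ in standard form, this is exactly the pattern 00*0*0 in the paper. Nothing you have written moves $v$ out of this plane. The two-out-of-three line rule does not help either, because you never exhibit three entries known to be distinct collinear points.

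The missing idea is a case split on where $v$ lies in this plane. Each case is then killed by counting inside a single orthonormal basis.
\begin{enumerate}[(i)]
\item If $v=\phi_{\rho c_3}$, then five entries of column $z$ are orthogonal to $v$, so $\phi_{r_3z}=v$. The symbols of $\Psi$ at $(\rho,c_3)$ and $(r_3,z)$ are then two different symbols outside $S$. Every row in $R$ and every column in $C$ now contains a copy of $v$. Symbols outside $S$ occur only in $R\times C^c$ and $R^c\times C$, so the whole symbol class of the third symbol outside $S$ is orthogonal to $v$, which is impossible for an orthonormal basis.
\item If $v=\phi_{\rho z}$, the same count in each of the other two columns of $C^c$ forces two entries of row $r_3$ to equal $v$.
\item If $v$ is neither, column $z$ forces $\phi_{r_3z}=\phi_{\rho c_3}$. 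Columns $c_1$ and $c_2$ then force an entry in each of the other two rows of $R^c$ into the plane. Finally, $\phi_{\rho z}$ turns out to be orthogonal to every entry of column $c_3$.
\end{enumerate}
This is the structure of the paper's proof (its Cases 1--3), and none of it appears in your proposal.

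Your computational fallback does not close the gap either. It has not been carried out. Algorithm~\ref{alg:orthonormalrepresentation} can only certify non-existence, and it may well return the inconclusive \texttt{True}. The claim that one check covers every pair is also unjustified. $\Psi$ is not determined by its subsquare, since the other three $3\times3$ blocks are independent Latin squares of order three. So symmetries of the subsquare need not extend to $\Psi$. You would have to run the check for every relevant completion of $\Psi$, and for every pair of cells not equivalent under the autotopisms of that completion.
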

\begin{proof}
After suitable permutations, we can assume that the squares are/have pattern:
\[
\Psi=
\begin{tabular}{|c|c|c|c|c|c|}
    \hline
    \cellcolor{black!30}1&\cellcolor{black!30}2&\cellcolor{black!30}3&4&5&6\\
    \hline
    \cellcolor{black!30}2&\cellcolor{black!30}3&\cellcolor{black!30}1&&&\\
    \hline
    \cellcolor{black!30}3&\cellcolor{black!30}1&\cellcolor{black!30}2&&&\\
    \hline
    4&5&6&\cellcolor{black!30}&\cellcolor{black!30}&\cellcolor{black!30}\\
    \hline
    5&6&4&\cellcolor{black!30}&\cellcolor{black!30}&\cellcolor{black!30}\\
    \hline
    6&4&5&\cellcolor{black!30}&\cellcolor{black!30}&\cellcolor{black!30}\\
    \hline
\end{tabular}
\quad
\text{and }
\,
\Phi=
\begin{tabular}{|c|c|c|c|c|c|}
    \hline
    \cellcolor{black!30}100000 & \cellcolor{black!30}010000 & \cellcolor{black!30}001000 & 000100 & 000010 & 000001\\
    \hline
    \cellcolor{black!30}00**** & \cellcolor{black!30}*00*** & \cellcolor{black!30}0*0*** &&&\\
    \hline
    \cellcolor{black!30}0*0*** & \cellcolor{black!30}00**** & \cellcolor{black!30}*00*** &&&\\
    \hline
    0**0** & *0**0* & **0**0 &\cellcolor{black!30}&\cellcolor{black!30}&\cellcolor{black!30}\\
    \hline
    0***0* & *0***0 & **00** &\cellcolor{black!30}&\cellcolor{black!30}&\cellcolor{black!30}\\
    \hline
    0****0 & *0*0** & **0*0* &\cellcolor{black!30}&\cellcolor{black!30}&\cellcolor{black!30}\\
    \hline
\end{tabular}\]
Both squares are divided into four blocks such that the left square has entries 1, 2 and 3 in the grey blocks and entries 4, 5 and 6 in the white blocks. We focus on the lower left block in white, but by symmetry, our arguments are also true for the other blocks.

Suppose that two entries of \(\Phi\) in the same block are equal. The entries cannot be in the same row or column or correspond to the same number in \(\Psi\). Therefore, we may assume without loss of generality that \(\phi_{41}=\phi_{52}\). They have pattern 00*0*0.

\textbf{Case 1: \(\phi_{41}=\phi_{52}=\ket{3}\).} All entries in the fifth column of \(\Phi\) are orthogonal to \(\ket{3}\), except \(\phi_{65}\). As the only element of an orthonormal basis that is not orthogonal to \(\ket{3}\), the entry \(\phi_{65}\) must be equal to \(\ket{3}\). In the classical Latin square \(\Psi\), the number on position \((6,5)\) is either 1 or 2: it cannot be 3 since \(\psi_{13}\otimes\phi_{13}=\ket{3}\otimes\ket{3}=\ket{3}\otimes\phi_{65}\). If it is equal to 1, then all entries of \(\Phi\) corresponding to a 2 in \(\Psi\) are orthogonal to \(\ket3\) by being in a row or column with an entry equal to \(\ket3\). This is a contradiction because all entries corresponding to the same number in \(\Psi\) form an orthonormal basis. If \(\psi_{65}\) is equal to 2, then similarly all elements of \(\Phi\) corresponding to a 1 in \(\Psi\) are orthogonal to \(\ket3\), a contradiction.

\textbf{Case 2: \(\phi_{41}=\phi_{52}=\ket{5}\).} All entries in the fourth column of \(\Phi\) are orthogonal to \(\ket{5}\), except \(\phi_{64}\). Similarly, all entries in the sixth column of \(\Phi\) are orthogonal to \(\ket{5}\), except \(\phi_{66}\). Thus \(\phi_{64}=\phi_{66}=\ket{3}\), a contradiction.

\textbf{Case 3: \(\phi_{41}=\phi_{52}\) has weight two.} All entries in the fifth column of \(\Phi\) are orthogonal to \(\ket{5}\) and \(\phi_{41}=\phi_{52}\), except \(\ket{5}\) and \(\phi_{65}\). Thus, \(\phi_{65}=\ket{3}\).
All entries in the first column are orthogonal to \(\phi_{41}\) and either \(\ket{3}\) or \(\ket{5}\), except \(\phi_{41}\) itself and \(\phi_{21}\). So \(\phi_{21}\) has pattern 001010. Similarly, \(\phi_{32}\) has pattern 001010. But now every entry in the third column, different from \(\ket{3}\), has a pattern of the form **0*0*, a contradiction.
\end{proof}

\begin{lemma}\label{lem:subsquarescaled}
    If one of 2 MOQLS(6) in standard form is classical and has a subsquare of order three, then the other square is of the form:
\[
\Phi=
\begin{tabular}{|c|c|c|c|c|c|}
    \hline
    \cellcolor{black!30}\((1,0,0,0,0,0)\) & \cellcolor{black!30}\((0,1,0,0,0,0)\) & \cellcolor{black!30}\((0,0,1,0,0,0)\) & \hspace{5mm} & \hspace{5mm} & \hspace{5mm} \\
    \hline
    \cellcolor{black!30}& \cellcolor{black!30} & \cellcolor{black!30}&&&\\
    \hline
    \cellcolor{black!30} & \cellcolor{black!30} & \cellcolor{black!30}&&&\\
    \hline
    \(\lambda_{41}(0,a_2,a_3,0,a_5,a_6)\) & \(\lambda_{42}(b_1,0,b_3,b_4,0,b_6)\) & \(\lambda_{43}(c_1,c_2,0,c_4,c_5,0)\) &\cellcolor{black!30}&\cellcolor{black!30}&\cellcolor{black!30}\\
    \hline
    \(\lambda_{51}(0,-c_2,c_3,-c_4,0,c_6)\) & \(\lambda_{52}(a_1,0,-a_3,a_4,-a_5,0)\) & \(\lambda_{53}(-b_1,b_2,0,0,b_5,-b_6)\) &\cellcolor{black!30}&\cellcolor{black!30}&\cellcolor{black!30}\\
    \hline
    \(\lambda_{61}(0,b_2,b_3,b_4,b_5,0)\) & \(\lambda_{62}(c_1,0,c_3,0,c_5,c_6)\) & \(\lambda_{63}(a_1,a_2,0,a_4,0,a_6)\) &\cellcolor{black!30}&\cellcolor{black!30}&\cellcolor{black!30}\\
    \hline
\end{tabular}\]
\end{lemma}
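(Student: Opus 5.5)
We work with the same normalisation as in Lemma~\ref{lem:subsquaredifferent}: $\Psi$ has the subsquare on symbols $1,2,3$ in the upper-left block. The lower-left block of $\Psi$ is the Latin subsquare on symbols $4,5,6$ displayed there, so $\psi_{41}=\psi_{53}=\psi_{62}=4$, $\psi_{42}=\psi_{51}=\psi_{63}=5$ and $\psi_{43}=\psi_{52}=\psi_{61}=6$. The goal is to show that the nine entries $\phi_{ij}$ with $i\in\{4,5,6\}$, $j\in\{1,2,3\}$ are determined, up to phases $\lambda_{ij}$, by only three vectors $a,b,c$, with the stated zero patterns and sign changes.

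\textbf{Step 1: zero patterns.} By rules \#2 and \#3, $\phi_{ij}$ is orthogonal to $\ket{j}$ (same column as $\phi_{1j}$). Since $\psi_{ij}=k$ also equals $\psi_{1k}$, it is orthogonal to $\ket{k}$ as well. This yields exactly the zeros displayed, e.g.\ $\phi_{41}\perp\ket1,\ket4$ and $\phi_{52}\perp\ket2,\ket5$.

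\textbf{Step 2: identify equal moduli across the block.} Consider the $3\times3$ block, and for each column $j$ the three vectors $\phi_{4j},\phi_{5j},\phi_{6j}$. Restricted to the coordinates $\{1,2,3\}\setminus\{j\}$ together with $\{4,5,6\}$, they are orthonormal. The key structure is that every row, every column and every symbol class of the block consists of three orthonormal vectors. Any two block entries in different rows, columns and symbol classes must be non-orthogonal, since otherwise Lemma~\ref{lem:subsquaredifferent} (all nine distinct) together with the counting arguments of its proof would be contradicted.

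I would then exploit that $\phi_{41}$, $\phi_{52}$ and $\phi_{63}$ form a ``transversal''. Using the orthonormal bases on columns (the entries of the upper rows of the block column lie in the complementary space), together with the remaining entries in rows $4$--$6$ and columns $4$--$6$, I would determine the complement. Concretely, the first column of $\Phi$ must be an orthonormal basis of $\ket1^\perp$. The upper entries $\phi_{21},\phi_{31}$ have patterns $00****$ and $0*0***$, so the span of $\phi_{41},\phi_{51},\phi_{61}$ is pinned down. Comparing with the analogous statement in the symbol-$4$ class ($\phi_{41},\phi_{53},\phi_{62}$, orthogonal to $\ket4$) should force, coordinate by coordinate, $|\phi_{52}(k)|=|\phi_{41}(k)|$ on the common support. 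The mechanism is a $3$-dimensional orthonormal-completion argument: two orthonormal triples spanning the same $3$-space, and sharing the same vectors orthogonal to them, give matching moduli. Here the determining vectors are the rows of the $\Psi$-subsquare structure.

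\textbf{Step 3: fix phases and signs.} Once the moduli match, absorb the global phase of each entry into $\lambda_{ij}$, and apply the diagonal unitary freedom in $\Phi$ to normalise coordinates. Orthogonality equations such as $\langle\phi_{41},\phi_{51}\rangle=0$ and $\langle\phi_{41},\phi_{61}\rangle=0$, which have only two or three overlapping coordinates, then force the relative signs shown, e.g.\ $(0,-c_2,c_3,-c_4,0,c_6)$. The diagonal unitary freedom must preserve the standard form, which is where using coordinate phases is legitimate.

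The main obstacle is Step 2: deriving \emph{equality} of coordinates, not just moduli, between different positions ($a$ in $\phi_{41}$, $\phi_{52}$ and $\phi_{63}$). I would attempt it by showing that each such pair spans the orthogonal complement of the same four fixed vectors. For instance, $\phi_{41}$ and $\phi_{63}$ both lie in the orthogonal complement of the symbol-$4$ partners and column partners. This would make the entries of one vector the coordinates of a cross-product-like determinant, identical up to the sign flips appearing in the $\phi_{5j}$ row.
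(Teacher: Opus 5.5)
\textbf{There is a genuine gap, and it is in Step~2, the step you flag yourself.} Your sketch there aims at the wrong target. The entries $\phi_{ij}$ are unit vectors, so the $\lambda_{ij}$ in the statement are arbitrary nonzero scalars, not phases. The statement therefore does not assert equal moduli such as $|\phi_{52}(k)|=|\phi_{41}(k)|$. That claim is neither needed nor implied by the hypotheses: it would force the two unscaled vectors to have the same norm.

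What the statement encodes is a linear dependence inside each of the three transversals
$X=\{\phi_{41},\phi_{52},\phi_{63}\}$, $Y=\{\phi_{42},\phi_{53},\phi_{61}\}$ and $Z=\{\phi_{43},\phi_{51},\phi_{62}\}$ of the block. The parameter vectors $a$, $b$, $c$ belong to $X$, $Y$, $Z$ respectively. The missing idea is as follows.
\begin{itemize}
\item Any two cells from different transversals share a row, a column or a $\Psi$-symbol. So every vector of one transversal is orthogonal to every vector of the other two.
\item Lemma~\ref{lem:tripartite} then gives $\dim\langle X\rangle+\dim\langle Y\rangle+\dim\langle Z\rangle\le 6$.
\item Lemma~\ref{lem:subsquaredifferent} (no two block entries parallel) gives each dimension $\ge 2$, hence each is exactly $2$.
\end{itemize}
Your final paragraph, with $\phi_{41}$ and $\phi_{63}$ lying in the orthogonal complement of ``four fixed vectors'', contains the germ of this. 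To close it you must take, e.g., $\phi_{42},\phi_{53},\phi_{43},\phi_{51}$ and show they are independent. That is exactly the cross-orthogonality $\langle Y\rangle\perp\langle Z\rangle$ combined with distinctness, and you never establish it.

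Once $\dim\langle X\rangle=2$, the form follows at once, with no moduli matching, no unitary normalisation and no further orthogonality equations:
\begin{itemize}
\item Write $\phi_{63}=\alpha\phi_{41}+\beta\phi_{52}$ with $\alpha\beta\neq 0$.
\item Put $u=\alpha\phi_{41}=(0,a_2,a_3,0,a_5,a_6)$ and $v=\beta\phi_{52}$.
\item Since $v$ vanishes in coordinates $2,6$ and $u+v$ vanishes in coordinates $3,5$, you get $v=(a_1,0,-a_3,a_4,-a_5,0)$ and $u+v=(a_1,a_2,0,a_4,0,a_6)$.
\item $Y$ and $Z$ are handled the same way.
\end{itemize}
So the sign flips come from the zero patterns within a single transversal. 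They do not come, as your Step~3 suggests, from relations such as $\langle\phi_{41},\phi_{51}\rangle=0$: those couple the independent parameter vectors $a$ and $c$ and play no role in this lemma.

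Two smaller points:
\begin{itemize}
\item Your assertion that entries in different rows, columns and symbol classes must be non-orthogonal is not justified, since Lemma~\ref{lem:subsquaredifferent} only gives non-parallelism. Indeed, in Case~1 of Lemma~\ref{lem:subsquareweight2} one has $s_{11}\perp s_{22}$.
\item Since $\psi_{52}=6$, the forced zeros of $\phi_{52}$ are in coordinates $2$ and $6$, not $2$ and $5$.
\end{itemize}
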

\begin{proof}
With \(\Psi\) and \(\Phi\) as in Lemma~\ref{lem:subsquaredifferent}, let \(X=\{\phi_{41}, \phi_{52}, \phi_{63}\}\), \(Y=\{ \phi_{42},\phi_{53}, \phi_{61}\}\) and \(Z=\{\phi_{43}, \phi_{51}, \phi_{62}\}\).
The vector $\phi_{41}$ is orthogonal to $\phi_{42}$, $\phi_{43}$, $\phi_{51}$, $\phi_{52}$ since $\Phi$ is a quantum Latin square, and it is also orthogonal to $\phi_{53}$ and $\phi_{62}$ since $\Psi$ and $\Phi$ are orthogonal.
Hence, \(\phi_{41}\) is orthogonal to all vectors in \(Y\) and \(Z\). Similarly, we get that every two vectors in different sets are orthogonal. We can apply Lemma~\ref{lem:tripartite} to conclude that \(\dim\langle X\rangle+\dim\langle Y\rangle+\dim\langle Z\rangle\leq 6\). By Lemma~\ref{lem:subsquaredifferent}, the three vectors of each of the sets are distinct, so \(\dim\langle X\rangle=\dim\langle Y\rangle=\dim\langle Z\rangle=2\). We can now scale the entries and assume the form in the statement.
\end{proof}

%The vectors in the form of Lemma~\ref{lem:subsquarescaled} are no longer normalised, but vectors in the same row are still orthogonal.

\begin{lemma}\label{lem:subsquareweight2}
    If a quantum Latin square is of the form as in Lemma~\ref{lem:subsquarescaled}, then all entries in the white block below left have weight at most two.
\end{lemma}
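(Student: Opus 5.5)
The plan is to turn the form of Lemma~\ref{lem:subsquarescaled} into a system of bilinear equations in the coordinates \(a_k,b_k,c_k\), and then show that an entry of weight three or four in the lower left block forces one of the nine vectors to vanish or two of them to coincide. By symmetry it is enough to treat \(\phi_{41}=\lambda_{41}(0,a_2,a_3,0,a_5,a_6)\). This symmetry consists of permuting the three rows and the three columns of the block, together with the simultaneous relabellings of coordinates and of the letters \(a,b,c\) that carry the displayed form into itself. Also, \(\phi_{52}\) and \(\phi_{63}\) are built from the same \(a\)-coordinates. So if I show that at most two of \(a_2,a_3,a_5,a_6\) are nonzero, and the analogous statements for \(b\) and \(c\), the bound follows for all nine entries.

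\textbf{Step 1: list the orthogonality relations.} Every pair of entries of the block lying in the same row, in the same column, or carrying the same symbol of \(\Psi\) must be orthogonal. Row~4, for instance, gives
\[\bar a_3b_3+\bar a_6b_6=0,\qquad \bar a_2c_2+\bar a_5c_5=0,\qquad \bar b_1c_1+\bar b_4c_4=0.\]
Rows 5 and 6, the three columns, and the three symbol classes \(\{4,5,6\}\) of \(\Psi\) give further relations of the same type. Each relation involves the coordinates on the two-element intersection of two supports, and the signs come from the form. Some pairs are already orthogonal by construction (the ones inside \(X\), \(Y\), \(Z\) are handled by Lemma~\ref{lem:tripartite}). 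Concretely, I expect each pair of letters, e.g. \((a,b)\), to yield three relations. Each of these says that a pair such as \((a_3,a_6)\), \((a_2,a_4)\), \((a_1,a_5)\) is orthogonal, in \(\mathbb{C}^2\) and up to sign, to the matching pair of \(b\)'s. The sign twists in row~5 appear exactly here.

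\textbf{Step 2: exploit \(\mathbb{C}^2\) rigidity.} If a pair such as \((a_3,a_6)\) has both entries nonzero, then \((b_3,b_6)\) is determined up to a scalar by orthogonality. A second relation, coming from row~5 or from a symbol class, forces the same pair of \(b\)'s to be orthogonal to a sign-twisted version such as \((a_3,-a_6)\). These two vectors are linearly independent whenever \(a_3a_6\neq0\) (taking the paired \(a\)-coordinates nonzero after scaling), so \((b_3,b_6)=(0,0)\). Assume \(\phi_{41}\) has weight at least three. Then at least one of the pairs \(\{a_2,a_5\}\), \(\{a_3,a_6\}\) is fully nonzero, and combining with the other coordinate I would deduce that many of the \(b\)- and \(c\)-coordinates vanish.

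\textbf{Step 3: reach a contradiction.} Once enough coordinates vanish, I would derive a contradiction in one of three ways. One is that \(\phi_{42}\) or \(\phi_{43}\) becomes the zero vector. Another is that two vectors of \(Y\) (or \(Z\)) become proportional, contradicting Lemma~\ref{lem:subsquaredifferent}. The third is that the remaining entries of some column of \(\Phi\) lie in too small a subspace to complete an orthonormal basis together with the standard vector in the first row, in the spirit of rules~\#2 and~\#3.

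The main obstacle is Step~2. The bookkeeping of signs and supports has to produce, for each fully supported pair, two genuinely independent constraints, and the resulting case split (weight three versus weight four, and which pair is fully nonzero) must close in every case. I would first write out all the relations explicitly for the \((a,b)\) interaction and check independence there. I would then transfer the result to \((a,c)\) and \((b,c)\) via the cyclic symmetry of the form.
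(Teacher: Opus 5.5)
\textbf{Gap in Step~2: the ``second relation'' on \((b_3,b_6)\) does not exist.} Put \(p_k=\overline{a_k}b_k\). The nine orthogonality conditions between \(X\) and \(Y\) (three rows, three columns, three symbol classes of \(\Psi\)) are
\[p_3+p_6=p_1+p_5=p_2+p_4=0,\quad p_2+p_3+p_5=p_1-p_3+p_4=p_1-p_2+p_6=0,\quad p_2+p_5-p_6=p_1+p_4+p_6=p_3-p_4+p_5=0.\]
They are equivalent to \(p_4=-p_2\), \(p_5=-p_1\), \(p_6=-p_3\), \(p_1=p_2+p_3\). The only relation involving just the pair \((3,6)\) is the row-4 relation \(p_3+p_6=0\). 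Nothing forces \(p_3-p_6=2p_3\) to vanish, so there is no ``sign-twisted'' constraint forcing \((b_3,b_6)=(0,0)\).

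Concretely, take \(a=(1,0,1,0,1,1)\) and \(b=(1,0,1,0,-1,-1)\). These satisfy all nine relations and give \(\phi_{41}\) the pattern 001011, of weight three. The six vectors of \(X\cup Y\) are nonzero and pairwise non-proportional within each class, yet \(b_3b_6\neq0\). The contradiction in this example comes only from the third letter. Here \(\langle X\rangle\oplus\langle Y\rangle=\langle\ket1,\ket3,\ket5,\ket6\rangle\), so \(\langle Z\rangle=\langle\ket2,\ket4\rangle\), and then \(\phi_{62}=\lambda_{62}(c_1,0,c_3,0,c_5,c_6)\) must vanish. So a two-letter, pairwise \(\mathbb{C}^2\) rigidity argument cannot close the weight-three case.

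\textbf{What is missing is a global use of all three letter pairs at once.} The paper does this by a case split on the minimum weight \(w(S)\) of the nine entries.
\begin{enumerate}[(i)]
\item For \(w(S)=4\), all eighteen parameters are nonzero. Multiplying the nine two-term row relations gives \(|a_1a_2a_3b_1b_2b_3c_1c_2c_3|^2=-|a_4a_5a_6b_4b_5b_6c_4c_5c_6|^2\), a contradiction.
\item For \(w(S)\in\{2,3\}\), a short zero-chase uses the hypothesis that \emph{every} entry has weight at least \(w(S)\). For instance, with \(w(S)=3\) and \(a_1=0\): row~5 gives \(b_5=0\); the weight of \(\phi_{61}\) then gives \(b_3\neq0\); row~6 gives \(c_3=c_6=0\); so \(\phi_{51}\) drops to weight at most two.
\item Only for \(w(S)=1\), normalised to \(\phi_{41}=\ket2\), is a direct check made that no entry reaches weight three.
\end{enumerate}
Your Step~3 endgame (an entry becoming \(\mathbf{0}\)) is the right kind of contradiction, but you need a global case analysis of this sort to force enough coordinates to vanish.

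\textbf{Two smaller slips.} First, pairs inside \(X\), \(Y\), \(Z\) are not orthogonal: each class is three vectors in a plane, and Lemma~\ref{lem:tripartite} concerns cross-class pairs. Second, bounding \(a_2,a_3,a_5,a_6\) and its row-4 analogues does not control \(\phi_{52}\) and \(\phi_{63}\), which involve \(a_1\) and \(a_4\). Your transitivity-of-symmetry reduction is what must handle those entries.
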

\begin{proof}
    Let \(S=\left(s_{ij}\right)_{1\leq i,j\leq3}\) be the white block below left in \(\Phi\), up to scaling:
\[
S=
\begin{tabular}{|c|c|c|}
    \hline
    \((0,a_2,a_3,0,a_5,a_6)\) & \((b_1,0,b_3,b_4,0,b_6)\) & \((c_1,c_2,0,c_4,c_5,0)\)\\
    \hline
    \((0,-c_2,c_3,-c_4,0,c_6)\) & \((a_1,0,-a_3,a_4,-a_5,0)\) & \((-b_1,b_2,0,0,b_5,-b_6)\)\\
    \hline
    \((0,b_2,b_3,b_4,b_5,0)\) & \((c_1,0,c_3,0,c_5,c_6)\) & \((a_1,a_2,0,a_4,0,a_6)\)\\
    \hline
\end{tabular}\]
    The rows and columns of \(S\) still have the property that they are orthogonal, but not necessarily orthonormal.
    Let \(w(S)\) be the minimum weight among the nine entries of \(S\).

    Observe that there is a high amount of symmetry. Not only can we permute the first three columns and last three columns and permute the rows accordingly to get the square in the same form. We can also change the roles of columns and symbols, swapping the coordinates \(\{1,2,3\}\) and \(\{4,5,6\}\). %More specifically, the automorphism group is isomorphic to \(S_3\wr S_2\). \todo{check}
    
    \textbf{Case 1: \(w(S)=1\).} Up to symmetry between \(\{1,2,3\}\) and \(\{4,5,6\}\), and changing the role of columns and symbols if needed, we may assume that \(s_{11}=\ket{2}\), so \(a_2=1\) and \(a_3=a_5=a_6=0\).
\[
S=
\begin{tabular}{|c|c|c|}
    \hline
    \((0,1,0,0,0,0)\) & \((b_1,0,b_3,b_4,0,b_6)\) & \((c_1,c_2,0,c_4,c_5,0)\)\\
    \hline
    \((0,-c_2,c_3,-c_4,0,c_6)\) & \((a_1,0,0,a_4,0,0)\) & \((-b_1,b_2,0,0,b_5,-b_6)\)\\
    \hline
    \((0,b_2,b_3,b_4,b_5,0)\) & \((c_1,0,c_3,0,c_5,c_6)\) & \((a_1,1,0,a_4,0,0)\)\\
    \hline
\end{tabular}\]
    By orthogonality, we have \(b_2=c_2=0\).
\[
S=
\begin{tabular}{|c|c|c|}
    \hline
    \((0,1,0,0,0,0)\) & \((b_1,0,b_3,b_4,0,b_6)\) & \((c_1,0,0,c_4,c_5,0)\)\\
    \hline
    \((0,0,c_3,-c_4,0,c_6)\) & \((a_1,0,0,a_4,0,0)\) & \((-b_1,0,0,0,b_5,-b_6)\)\\
    \hline
    \((0,0,b_3,b_4,b_5,0)\) & \((c_1,0,c_3,0,c_5,c_6)\) & \((a_1,1,0,a_4,0,0)\)\\
    \hline
\end{tabular}\]

    \textbf{Case 1.1: \(a_1=0\).} Up to scaling, we have \(a_4=1\) and by orthogonality, \(b_4=c_4=0\).
\[
S=
\begin{tabular}{|c|c|c|}
    \hline
    \((0,1,0,0,0,0)\) & \((b_1,0,b_3,0,0,b_6)\) & \((c_1,0,0,0,c_5,0)\)\\
    \hline
    \((0,0,c_3,0,0,c_6)\) & \((0,0,0,1,0,0)\) & \((-b_1,0,0,0,b_5,-b_6)\)\\
    \hline
    \((0,0,b_3,0,b_5,0)\) & \((c_1,0,c_3,0,c_5,c_6)\) & \((0,1,0,1,0,0)\)\\
    \hline
\end{tabular}\]
    Again by orthogonality, \(b_1c_1=b_3c_3=b_5c_5=b_6c_6=0\).
    Suppose by contradiction that there is an entry of weight three. There are three possibilities:
    \begin{itemize}
        \item If \(b_1b_3b_6\neq0\), then \(s_{21}=\mathbf{0}\), a contradiction.
        \item If \(b_1b_5b_6\neq0\), then \(s_{13}=\mathbf{0}\), a contradiction.
        \item If at least three elements of \(\{c_1,c_3,c_5,c_6\}\) are nonzero, then \(s_{12}=\mathbf{0}\) or \(s_{23}=\mathbf{0}\) or \(s_{31}=\mathbf{0}\), a contradiction.
    \end{itemize}
    We conclude that if \(a_1=0\), then all entries have weight at most two.
    
    \textbf{Case 1.2: \(a_1\neq0\).} By orthogonality, \(b_1=c_1=0\).
\[
S=
\begin{tabular}{|c|c|c|}
    \hline
    \((0,1,0,0,0,0)\) & \((0,0,b_3,b_4,0,b_6)\) & \((0,0,0,c_4,c_5,0)\)\\
    \hline
    \((0,0,c_3,-c_4,0,c_6)\) & \((a_1,0,0,a_4,0,0)\) & \((0,0,0,0,b_5,-b_6)\)\\
    \hline
    \((0,0,b_3,b_4,b_5,0)\) & \((0,0,c_3,0,c_5,c_6)\) & \((a_1,1,0,a_4,0,0)\)\\
    \hline
\end{tabular}\]
    Suppose by contradiction that there is an entry of weight three. There are five possibilities:
    \begin{itemize}
        \item If \(b_3b_4b_6\neq0\), then \(s_{21}=\mathbf{0}\), a contradiction.
        \item If \(c_3c_4c_6\neq0\), then \(s_{12}=\mathbf{0}\), a contradiction.
        \item If \(b_3b_4b_5\neq0\), then \(s_{13}=\mathbf{0}\), a contradiction.
        \item If \(c_3c_5c_6\neq0\), then \(s_{23}=\mathbf{0}\), a contradiction.
        \item If \(a_4\neq0\), then \(b_4=c_4=0\). Looking at \(s_{13}\), we have \(c_5\neq0\). By orthogonality with \(s_{23}\), we get \(b_5=0\):
\[
S=
\begin{tabular}{|c|c|c|}
    \hline
    \((0,1,0,0,0,0)\) & \((0,0,b_3,0,0,b_6)\) & \((0,0,0,0,c_5,0)\)\\
    \hline
    \((0,0,c_3,0,0,c_6)\) & \((a_1,0,0,a_4,0,0)\) & \((0,0,0,0,0,-b_6)\)\\
    \hline
    \((0,0,b_3,0,0,0)\) & \((0,0,c_3,0,c_5,c_6)\) & \((a_1,1,0,a_4,0,0)\)\\
    \hline
\end{tabular}\]
        Now \(b_3\neq0\) and \(b_6\neq0\), implying \(s_{21}=\mathbf{0}\), a contradiction.
    \end{itemize}
    We conclude that also if \(a_1\neq0\), all entries have weight at most two.

\textbf{Case 2: \(w(S)=2\).} That is, there is an element of weight two and no element of weight one. By symmetry on $\{1,2,3\}$ and $\{4,5,6\}$, and changing the role of columns and symbols if necessary, we may assume either \(a_1=a_2=0\) or \(a_1=a_4=0\).

\textbf{Case 2.1: \(a_1=a_2=0\).} In this case, \(a_4\neq0\) and \(a_6\neq0\).
\[
S=
\begin{tabular}{|c|c|c|}
    \hline
    \((0,0,a_3,0,a_5,a_6)\) & \((b_1,0,b_3,b_4,0,b_6)\) & \((c_1,c_2,0,c_4,c_5,0)\)\\
    \hline
    \((0,-c_2,c_3,-c_4,0,c_6)\) & \((0,0,-a_3,a_4,-a_5,0)\) & \((-b_1,b_2,0,0,b_5,-b_6)\)\\
    \hline
    \((0,b_2,b_3,b_4,b_5,0)\) & \((c_1,0,c_3,0,c_5,c_6)\) & \((0,0,0,a_4,0,a_6)\)\\
    \hline
\end{tabular}\]
By orthogonality of the entries in the third row and column respectively, we have \(b_4=c_6=c_4=b_6=0\).
\[
S=
\begin{tabular}{|c|c|c|}
    \hline
    \((0,0,a_3,0,a_5,a_6)\) & \cellcolor{black!30}\((b_1,0,b_3,0,0,0)\) & \((c_1,c_2,0,0,c_5,0)\)\\
    \hline
    \cellcolor{black!30}\((0,-c_2,c_3,0,0,0)\) & \((0,0,-a_3,a_4,-a_5,0)\) & \((-b_1,b_2,0,0,b_5,0)\)\\
    \hline
    \((0,b_2,b_3,0,b_5,0)\) & \((c_1,0,c_3,0,c_5,0)\) & \((0,0,0,a_4,0,a_6)\)\\
    \hline
\end{tabular}\]
The entries in grey are orthogonal because they both correspond to a \(\ket{5}\) in \(\Psi\), so \(b_3c_3=0\),  which implies an element of weight one, a contradiction.

\textbf{Case 2.1: \(a_1=a_4=0\).} In this case, \(a_2\neq0\) and \(a_6\neq0\).
\[
S=
\begin{tabular}{|c|c|c|}
    \hline
    \((0,0,a_3,0,a_5,a_6)\) & \((b_1,0,b_3,b_4,0,b_6)\) & \((c_1,c_2,0,c_4,c_5,0)\)\\
    \hline
    \((0,-c_2,c_3,-c_4,0,c_6)\) & \((0,0,-a_3,a_4,-a_5,0)\) & \((-b_1,b_2,0,0,b_5,-b_6)\)\\
    \hline
    \((0,b_2,b_3,b_4,b_5,0)\) & \((c_1,0,c_3,0,c_5,c_6)\) & \((0,a_2,0,0,0,a_6)\)\\
    \hline
\end{tabular}\]
By orthogonality of the entries in the third row and column respectively, we have \(c_6=c_2=0\).
\[
S=
\begin{tabular}{|c|c|c|}
    \hline
    \((0,0,a_3,0,a_5,a_6)\) & \((b_1,0,b_3,b_4,0,b_6)\) & \((c_1,0,0,c_4,c_5,0)\)\\
    \hline
    \cellcolor{black!30}\((0,0,c_3,-c_4,0,0)\) & \((0,0,-a_3,a_4,-a_5,0)\) & \((-b_1,b_2,0,0,b_5,-b_6)\)\\
    \hline
    \((0,b_2,b_3,b_4,b_5,0)\) & \((c_1,0,c_3,0,c_5,0)\) & \((0,a_2,0,0,0,a_6)\)\\
    \hline
\end{tabular}\]
The entry in grey has weight at least two, so \(c_3\neq0\) and \(c_4\neq0\). It is orthogonal to \((0,0,a_3,0,a_5,a_6)\) and \((0,0,-a_3,a_4,-a_5,0)\), so, \(a_3=a_4=0\), creating an entry of weight one, a contradiction.

\textbf{Case 3: \(w(S)=3\).} There is an element of weight three and there are no elements of weight one or two. Up to symmetry between \(\{1,2,3\}\) and \(\{4,5,6\}\), and changing the role of columns and symbols if needed, we may assume that \(a_1=0\).
\[
S=
\begin{tabular}{|c|c|c|}
    \hline
    \((0,a_2,a_3,0,a_5,a_6)\) & \((b_1,0,b_3,b_4,0,b_6)\) & \((c_1,c_2,0,c_4,c_5,0)\)\\
    \hline
    \((0,-c_2,c_3,-c_4,0,c_6)\) & \((0,0,-a_3,a_4,-a_5,0)\) & \((-b_1,b_2,0,0,b_5,-b_6)\)\\
    \hline
    \((0,b_2,b_3,b_4,b_5,0)\) & \((c_1,0,c_3,0,c_5,c_6)\) & \((0,a_2,0,a_4,0,a_6)\)\\
    \hline
\end{tabular}\]
In this case, $a_2 \neq 0$, $a_3 \neq 0$, $a_4 \neq 0$, $a_5 \neq 0$ and $a_6 \neq 0$.

By orthogonality of the second row of \(B\), we have \(b_5=0\). Looking at the entry in the left below, we have \(b_3\neq0\). By orthogonality of the third row, \(c_3=0\) but also \(c_6=0\), implying an element of weight at most two, a contradiction.

\textbf{Case 4: \(w(S)=4\).} All entries have weight four, so all of the variables are nonzero.
    The orthogonality between vectors in the same row gives the nine equations
    \[\begin{matrix}
        a_3\overline{b_3}=-a_6\overline{b_6}&\qquad b_1\overline{c_1}=-b_4\overline{c_4}&\qquad c_2\overline{a_2}=-c_5\overline{a_5}\\
        c_3\overline{a_3}=-c_4\overline{a_4}&\qquad a_1\overline{b_1}=-a_5\overline{b_5}&\qquad b_2\overline{c_2}=-b_6\overline{c_6}\\
        b_3\overline{c_3}=-b_5\overline{c_5}&\qquad c_1\overline{a_1}=-c_6\overline{a_6}&\qquad a_2\overline{b_2}=-a_4\overline{b_4}
    \end{matrix}\]
    which gives, after multiplying them, \[|a_1a_2a_3b_1b_2b_3c_1c_2c_3|^2=-|a_4a_4a_5b_3b_4b_5c_3c_4c_5|^2\]
    contradicting the variables being nonzero.

Combining all four cases, we conclude that \(w(S)=1\) and that all elements in the subsquare of $\Phi$ have weight at most two.
\end{proof}

\begin{theorem}\label{thm:nosubsquare}
There do not exist 2 MOQLS(6) where one of them is classical and has a subsquare of order three.
\end{theorem}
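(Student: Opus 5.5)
We argue by contradiction. Suppose \(\Psi\) is classical with a subsquare of order three and \(\Phi\) is a quantum orthogonal mate. We put both squares in the standard form of Lemma~\ref{lem:subsquaredifferent}. By Lemma~\ref{lem:subsquarescaled}, the lower-left white block of \(\Phi\) then has the parametrised form \(S\). By Lemma~\ref{lem:subsquareweight2}, every entry of this block has weight at most two, and the proof of that lemma shows moreover that \(w(S)=1\).

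\textbf{Transfer to all blocks.} By the symmetry noted in the proof of Lemma~\ref{lem:subsquaredifferent}, the same holds for every block of \(\Phi\). The four blocks are interchangeable under permuting rows, columns and symbols of \(\Psi\), after re-standardising. So every white block contains a weight one entry, and all entries in the white blocks have weight at most two. The plan is then to run the analysis of Case~1 of Lemma~\ref{lem:subsquareweight2} further.

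\textbf{Pinning down the block.} Start from \(s_{11}=\ket{2}\), so that \(b_2=c_2=0\), and split on \(a_1=0\) versus \(a_1\neq0\) as before. In each subcase:
\begin{enumerate}[(a)]
\item Use the weight-at-most-two constraint together with the orthogonality relations \(b_ic_i=0\) to force the block to consist entirely of vectors of weight one or two, with supports from a very small list.
\item Show that the three vectors in each of the sets \(X\), \(Y\), \(Z\) of Lemma~\ref{lem:subsquarescaled} span 2-spaces that are coordinate 2-spaces. These must be pairwise orthogonal, so they partition \(\{1,\dots,6\}\) into three pairs.
\item Note that each 2-space contains three pairwise distinct vectors (Lemma~\ref{lem:subsquaredifferent}), and two of them are orthogonal whenever they share a row or column. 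In a 2-space, a vector has only one orthogonal direction, so each set contains a configuration of type \(\ket{i}\), \(\ket{j}\), and a third vector of weight two that is non-orthogonal to both.
\end{enumerate}

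\textbf{Deriving the contradiction.} The expected source of contradiction is a quantum sudoku argument in the style of Case~3 of Lemma~\ref{lem:subsquaredifferent}. Take a weight two entry of the block, say with support \(\{i,j\}\). Its column in \(\Phi\) must contain an orthonormal basis. Combining rule~\#2 with the fact that the other entries in the same column, together with those corresponding to the same symbol of \(\Psi\), avoid \(\{i,j\}\), should force an entry in another block equal to \(\ket{i}\) or \(\ket{j}\). This should then clash with the first row \(\ket{1},\dots,\ket{6}\) via rule~\#3, or force two equal entries in one block, contradicting Lemma~\ref{lem:subsquaredifferent}.

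If a weight two entry cannot be avoided, an alternative ending is to show that every block is in fact classical (all weight one). Then \(\Phi\) restricted to the white blocks, together with the grey blocks handled symmetrically, is a classical Latin square orthogonal to \(\Psi\), which is excluded by Theorem~\ref{thm:2MOLS6}.

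\textbf{Main obstacle.} The hardest step is making the weight two case go away. That requires interaction between different blocks, and each block's parametrisation from Lemma~\ref{lem:subsquarescaled} is set up independently. The first thing I would try is to parametrise the lower-left and upper-right white blocks simultaneously, using that they share the symbols \(4,5,6\) of \(\Psi\). Vectors in the two blocks corresponding to the same symbol must be orthogonal. Combined with the coordinate 2-space partition from step (b), this should leave only finitely many support patterns, to be eliminated case by case.
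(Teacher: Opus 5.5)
There is a genuine gap. Your ``transfer to all blocks'' step is not valid, and it hides the main difficulty. Weights depend on the basis. Lemma~\ref{lem:subsquareweight2} bounds weights in the lower-left block with respect to the basis fixed by the first row of \(\Phi\). The symmetry used in Lemma~\ref{lem:subsquaredifferent} is harmless there only because equality of vectors does not depend on the basis. For weights you may only use symmetries that keep row~1 as the standardised row. Swapping the column blocks, with the matching relabelling of symbols and coordinates, is such a symmetry, and it transfers the bound to the lower-right grey block only. Moving an upper block into the lower-left position requires making one of rows 4--6 the first row and re-standardising by a unitary, which changes all patterns.

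So in the original basis, entries in rows 2 and 3 of \(\Phi\) can a priori have weight three or four, e.g.\ pattern \(000111\) in the upper-left block or \(111000\) in the upper-right block. Your proposal never excludes them. Your planned simultaneous parametrisation of the lower-left and upper-right blocks in one basis therefore rests on an unproved claim.

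This is exactly where the paper does its work. Rows 1 and 4--6 have weight at most two, so each column has at most two entries of weight at least three. This rules out case (ii) of Lemmas~\ref{lem:pattern3} and~\ref{lem:pattern4}. Hence a heavy entry has an equal-support partner in its column, and rules \#2 and \#3 force that common support to be \(000111\) (or \(111000\)). Case (i) for row~2 then gives, up to relabelling, \(\phi_{21},\phi_{22},\phi_{31}\) all with pattern \(000111\). Some \(\phi_{i1}\) with \(i\in\{4,5,6\}\) must meet \(\{4,5,6\}\). Having weight at most two and being orthogonal to \(\phi_{21}\), it lies in \(\langle\ket{4},\ket{5},\ket{6}\rangle\). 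Both \(\phi_{i1}\) and \(\phi_{22}\) are orthogonal to \(\phi_{21},\phi_{31}\) inside that 3-space, so they coincide up to phase despite having different weights.

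Second, your contradiction step is only a hope (``should force'', ``should clash''), and you concede the weight-two case is open. You do not need to eliminate weight-two entries at all. Once every entry of \(\Phi\) has weight at most two, Lemma~\ref{lem:weight22} replaces \(\Phi\) by a classical Latin square still orthogonal to \(\Psi\), contradicting Theorem~\ref{thm:2MOLS6}. Your alternative ending, proving every block is classical, aims at a stronger statement than necessary and is not proved. The missing ingredients are the pattern-lemma argument for weight-three or weight-four entries in rows 2--3, and Lemma~\ref{lem:weight22}. With these, the block-by-block parametrisation of your steps (a)--(c) is unnecessary.
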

\begin{proof}
    Suppose by contradiction that \(\Psi\) and \(\Phi\) are mutually orthogonal quantum Latin squares, where \(\Psi\) is classical and has a subsquare of order three. Put them in standard form:
\[
\Psi=
\begin{tabular}{|c|c|c|c|c|c|}
    \hline
    \cellcolor{black!30}1&\cellcolor{black!30}2&\cellcolor{black!30}3&4&5&6\\
    \hline
    \cellcolor{black!30}2&\cellcolor{black!30}3&\cellcolor{black!30}1&&&\\
    \hline
    \cellcolor{black!30}3&\cellcolor{black!30}1&\cellcolor{black!30}2&&&\\
    \hline
    4&5&6&\cellcolor{black!30}&\cellcolor{black!30}&\cellcolor{black!30}\\
    \hline
    5&6&4&\cellcolor{black!30}&\cellcolor{black!30}&\cellcolor{black!30}\\
    \hline
    6&4&5&\cellcolor{black!30}&\cellcolor{black!30}&\cellcolor{black!30}\\
    \hline
\end{tabular}
\quad
\text{and }
\,
\Phi=
\begin{tabular}{|c|c|c|c|c|c|}
    \hline
    \cellcolor{black!30}100000 & \cellcolor{black!30}010000 & \cellcolor{black!30}001000 & 000100 & 000010 & 000001\\
    \hline
    \cellcolor{black!30}00**** & \cellcolor{black!30}*00*** & \cellcolor{black!30}0*0*** &&&\\
    \hline
    \cellcolor{black!30}0*0*** & \cellcolor{black!30}00**** & \cellcolor{black!30}*00*** &&&\\
    \hline
    0**0** & *0**0* & **0**0 &\cellcolor{black!30}&\cellcolor{black!30}&\cellcolor{black!30}\\
    \hline
    0***0* & *0***0 & **00** &\cellcolor{black!30}&\cellcolor{black!30}&\cellcolor{black!30}\\
    \hline
    0****0 & *0*0** & **0*0* &\cellcolor{black!30}&\cellcolor{black!30}&\cellcolor{black!30}\\
    \hline
\end{tabular}\]
    Lemma~\ref{lem:subsquareweight2} implies that the entries in the white block below left of \(\Phi\) have weight at most two. By symmetry, the same is true for the grey block below right. Thus, in each column of \(\Phi\), there are at most two entries of weight three or four.
    
    \textbf{Case 1: \(\Phi\) has an entry of weight three or more.} By Lemma~\ref{lem:pattern3} and Lemma~\ref{lem:pattern4} there is another entry in the same column with the same support. In particular, such an entry has weight three and has pattern 000111 (in the grey block above on the left) or pattern 111000 (in the white block above on the right). Therefore, the unitary pattern formed by the second row of \(\Phi\) belongs to case (i) of Lemma~\ref{lem:pattern3}. Without loss of generality, we may assume that \(\Phi\) looks like this: %\(\phi_{21}\), \(\phi_{22}\), \(\phi_{31}\) and \(\phi_{32}\) all have pattern 000111.
\[
\Phi=
\begin{tabular}{|c|c|c|c|c|c|}
    \hline
    \cellcolor{black!30}100000 & \cellcolor{black!30}010000 & \cellcolor{black!30}001000 & 000100 & 000010 & 000001\\
    \hline
    \cellcolor{black!30}000111 & \cellcolor{black!30}000111 & \cellcolor{black!30}0*0*** &&&\\
    \hline
    \cellcolor{black!30}000111 & \cellcolor{black!30}000111 & \cellcolor{black!30}*00*** &&&\\
    \hline
    0**0** & *0**0* & **0**0 &\cellcolor{black!30}&\cellcolor{black!30}&\cellcolor{black!30}\\
    \hline
    0***0* & *0***0 & **00** &\cellcolor{black!30}&\cellcolor{black!30}&\cellcolor{black!30}\\
    \hline
    0****0 & *0*0** & **0*0* &\cellcolor{black!30}&\cellcolor{black!30}&\cellcolor{black!30}\\
    \hline
\end{tabular}\]
    The three white entries in the first column cannot all have pattern 0**000, so one of them, say \(\phi_{i1}\), where \(i\in\{4,5,6\}\), has a nonzero entry in position 4, 5 or 6. By Lemma~\ref{lem:subsquareweight2}, \(\phi_{i1}\) has weight at most two, and since it is orthogonal to an entry with pattern 000111, it has pattern 000011 or 000101 or 000110.
    Since \(\phi_{22}\), \(\phi_{21}\) and \(\phi_{31}\) are all contained in the 3-space spanned by \(\ket{4}\), \(\ket{5}\) and \(\ket{6}\) and \(\phi_{22}\) is orthogonal to \(\phi_{21}\) and \(\phi_{31}\) (\(\phi_{22}\) and \(\phi_{31}\) both correspond to \(3\) in \(\Psi\)), we have that \(\phi_{22}\) is equal to \(\phi_{i1}\) up to a phase factor. This is a contradiction, because \(\phi_{22}\) has weight three and \(\phi_{i1}\) has weight two.

    \textbf{Case 2: all entries of \(\Phi\) have weight at most two.} By Lemma~\ref{lem:weight22}, there exist 2 MOLS(6), contradicting Theorem~\ref{thm:2MOLS6}.
\end{proof}

We conclude that 2 MOQLS(6) do not exist.

\begin{proof}[Proof of Theorem~\ref{thm:no2MOQLS6}]
    Suppose by contradiction that there are 2 MOQLS(6). By Theorem~\ref{thm:oneof2MOQLS6isclassical}, we may assume that one of the two squares is classical. By Lemma~\ref{lem:orthonormalrep} and the algorithm in Section~\ref{sec:algorithm}, the other square has a subsquare of order three. But by Theorem~\ref{thm:nosubsquare} that is not possible either. We get a contradiction, so 2 MOQLS(6) do not exist.
\end{proof}

\section{Conclusion}

Using the notion of unitary patterns and orthonormal representations of graphs, we proved that there do not exist 2 MOQLS(6). We also proved that 2 MOQLS(4) and 2 MOQLS(5) are classical (Theorem~\ref{thm:2MOQLS4} and Theorem~\ref{thm:2MOQLS5}).

As a corollary, we improve upon a theorem by Han, Zang, Zhang and Tian \cite[Theorem~3.7]{nonclassical} saying that, if \(n\geq4\), there exist non-classical 2 MOQLS\((n)\), except possibly for \(n\in\{4,5,6,7\}\). We proved that the values 4, 5 and 6 are impossible, leaving only the case \(n=7\) as an open problem:

\begin{problem}
    Are 2 MOQLS(7) classical?
\end{problem}

Note that if one of 2 MOQLS is classical, the other one is not necessarily classical. An example of 2 MOQLS(9) where only one of the squares is not classical, is given by
\[
\begin{tabular}{|c|c|c|c|c|c|c|c|c|}
    \hline
    1&2&3&4&5&6&7&8&9\\
    \hline
    2&3&1&5&6&4&8&9&7\\
    \hline
    3&1&2&6&4&5&9&7&8\\
    \hline
    4&5&6&7&8&9&1&2&3\\
    \hline
    5&6&4&8&9&7&2&3&1\\
    \hline
    6&4&5&9&7&8&3&1&2\\
    \hline
    7&8&9&1&2&3&4&5&6\\
    \hline
    8&9&7&2&3&1&5&6&4\\
    \hline
    9&7&8&3&1&2&6&4&5\\
    \hline
\end{tabular}
\quad
\text{ and }
\quad
\begin{tabular}{|c|c|c|c|c|c|c|c|c|}
    \hline
    1&2&3&4&5&6&7&8&9\\
    \hline
    3&1&2&6&4&5&9&7&8\\
    \hline
    2&3&1&5&6&4&8&9&7\\
    \hline
    7&8&9&1&2&3&4&5&6\\
    \hline
    9&7&8&3&1&2&6&4&5\\
    \hline
    8&9&7&2&3&1&5&6&4\\
    \hline
    4&5&6&7&8&9&1&\(a\)&\(b\)\\
    \hline
    6&4&5&9&7&8&\(b\)&1&\(a\)\\
    \hline
    5&6&4&8&9&7&\(a\)&\(b\)&1\\
    \hline
\end{tabular}\]
where \(a=\frac1{\sqrt{2}}\left(\ket{2}+\ket{3}\right)\) and \(b=\frac1{\sqrt{2}}\left(\ket{2}-\ket{3}\right)\).
\newline

Lemma~\ref{lem:weight22} can be used to translate some results on MOLS to MOQLS. In a similar way as in the proof of Theorem~\ref{thm:n-1MOQLSn}, one can show that if there exist \(n-2\) MOQLS\((n)\), then there exist \(n-2\) MOLS\((n)\). Since \(n-2\) MOLS\((n)\) can always be extended to \(n-1\) MOLS\((n)\) \cite{stability}, the existence of \(n-2\) MOQLS\((n)\) implies the existence of \(n-1\) MOLS\((n)\), and hence a projective plane of order \(n\).

\paragraph{Acknowledgements.}
The authors acknowledge the support of the Spanish Ministry of Science, Innovation and Universities grant
PID2023-147202NB-I00.
Robin Simoens is supported by the Research Foundation Flanders (FWO) through the grant 11PG724N.
We thank Quentin Palazon, Tabriz Popatia and Albert Rico for helpful discussions.

\bibliographystyle{plain}
\bibliography{ref}

\vfill
\noindent\textsc{Simeon Ball}\\
\textsc{\small Department of Mathematics}\\[-1mm]
\textsc{\small Universitat Politècnica de Catalunya}\\[-1mm]
\textsc{\small C. Pau Gargallo 14, 08028 Barcelona, Spain}\\
{\it E-mail address:} {\href{mailto:simeon.michael.ball@upc.edu}{\url{simeon.michael.ball@upc.edu}}}\\

\noindent\textsc{Robin Simoens}\\
\textsc{\small Department of Mathematics: Analysis, Logic and Discrete Mathematics}\\[-1mm]
\textsc{\small Ghent University}\\[-1mm]
\textsc{\small Krijgslaan 297, 9000 Gent, Belgium}\\
\textsc{\small Department of Mathematics}\\[-1mm]
\textsc{\small Universitat Politècnica de Catalunya}\\[-1mm]
\textsc{\small C. Pau Gargallo 14, 08028 Barcelona, Spain}\\
{\it E-mail address:} {\href{mailto:Robin.Simoens@UGent.be}{\url{Robin.Simoens@UGent.be}}}\\

\end{document}